\documentclass[sigconf]{acmart}

\usepackage{caption}
\usepackage{amsmath,amsfonts}
\usepackage{amsthm}
\usepackage{algorithmic}
\usepackage{graphicx}
\usepackage{textcomp}
\usepackage{subfigure}
\usepackage{booktabs}
\usepackage{xcolor}
\usepackage{multirow}
\usepackage{array}
\usepackage{threeparttable}
\usepackage[ruled,vlined]{algorithm2e}
\usepackage{balance}
\usepackage{lipsum}
\usepackage{csquotes}
\usepackage{makecell}

\usepackage{amssymb}
\usepackage[misc]{ifsym} 

\theoremstyle{definition}
\newtheorem{definition}{Definition}

\theoremstyle{lemma}

\theoremstyle{theorem}
\newtheorem{theorem}{Theorem}

\newcommand\blfootnote[1]{%
\begingroup
\renewcommand\thefootnote{}\footnote{#1}%
\addtocounter{footnote}{-1}%
\endgroup
}

\AtBeginDocument{%
  }

\setcopyright{acmlicensed}
\copyrightyear{2018}
\acmYear{2025}
\acmDOI{XXXXXXX.XXXXXXX}

\acmConference[SIGMOD '25]{ACM Conference on Management of Data}{June 22--27,
  2025}{Berlin, Germany}

\acmISBN{978-1-4503-XXXX-X/18/06}






\begin{document}

\title{Subspace Collision: An Efficient and Accurate Framework for High-dimensional Approximate Nearest Neighbor Search}


\author{Jiuqi Wei}
\affiliation{%
  \institution{Institute of Computing Technology, Chinese Academy of Sciences}
  \institution{University of Chinese Academy of Sciences}
  \country{China}
}
\email{weijiuqi19z@ict.ac.cn}

\author{Xiaodong Lee}
\affiliation{%
  \institution{Institute of Computing Technology, Chinese Academy of Sciences}
  \institution{Fuxi Institution}
  \country{China}
}
\email{xl@ict.ac.cn}

\author{Zhenyu Liao}
\affiliation{%
  \institution{Huazhong University of Science and Technology}
  \country{China}
}
\email{zhenyu_liao@hust.edu.cn}

\author{Themis Palpanas}
\affiliation{%
  \institution{LIPADE, Universit{\'e} Paris Cit{\'e}}
  \country{France}
}
\email{themis@mi.parisdescartes.fr} 

\author{Botao Peng}
\orcid{0000-0002-1825-0097}
\affiliation{%
  \institution{Institute of Computing Technology, Chinese Academy of Sciences}
  \country{China}
}
\email{pengbotao@ict.ac.cn}


\begin{abstract}
Approximate Nearest Neighbor (ANN) search in high-dimensional Euclidean spaces is a fundamental problem with a wide range of applications.
However, there is currently no ANN method that performs well in both indexing and query answering performance, while providing rigorous theoretical guarantees for the quality of the answers.
In this paper, we first design SC-score, a metric that we show follows the \emph{Pareto principle} and can act as a proxy for the Euclidean distance between data points. 
Inspired by this, we propose a novel ANN search framework called \emph{Subspace Collision (SC)}, which can provide theoretical guarantees on the quality of its results. 
We further propose SuCo, which achieves efficient and accurate ANN search by designing a clustering-based lightweight index and query strategies for our proposed subspace collision framework.
Extensive experiments on real-world datasets demonstrate that both the indexing and query answering performance of SuCo outperform state-of-the-art ANN methods that can provide theoretical guarantees, performing 1-2 orders of magnitude faster query answering with only up to one-tenth of the index memory footprint.
Moreover, SuCo achieves top performance (best for hard datasets) even when compared to methods that do not provide theoretical guarantees.
\end{abstract}

\begin{CCSXML}
<ccs2012>
<concept>
       <concept_id>10002951.10003227.10003351.10003445</concept_id>
       <concept_desc>Information systems~Nearest-neighbor search</concept_desc>
       <concept_significance>500</concept_significance>
       </concept>
   <concept>
       <concept_id>10002951.10002952.10003190.10003192.10003210</concept_id>
       <concept_desc>Information systems~Query optimization</concept_desc>
       <concept_significance>500</concept_significance>
       </concept>
   
 </ccs2012>
\end{CCSXML}

\ccsdesc[500]{Information systems~Nearest-neighbor search}
\ccsdesc[500]{Information systems~Query optimization}

\keywords{Subspace collision, ANN search, High-Dimensional spaces}


\maketitle

\blfootnote{$^\ast$ Xiaodong Lee and Botao Peng are the corresponding authors.}

\section{Introduction}

\textbf{Background and Problem.} 
Nearest neighbor (NN) search in high-dimensional Euclidean spaces is a fundamental problem with various applications~\cite{DBLP:journals/sigmod/Palpanas15,Palpanas2019,DBLP:conf/wims/EchihabiZP20,murray2023Randomized} ranging from information retrieval~\cite{karpukhin2020dense}, data mining~\cite{tagami2017annexml}, to recommender system~\cite{schafer2007collaborative}. 
Given a dataset $\mathcal D$ of $n$ data points in $d$-dimensional space, a query $q \in \mathbb{R}^d$, and $k \in \mathbb{N}$, NN search will return the $k$ nearest neighbors of $q$ from $\mathcal D$.
However, NN search in high-dimensional datasets is challenging due to the \emph{curse of dimensionality} phenomenon~\cite{hinneburg2000nearest, weber1998quantitative, borodin1999lower,couillet2022RMT4ML,louart2018random,liao2020rff}.
To achieve a better trade-off between query efficiency and accuracy, many researchers turn to Approximate Nearest Neighbor (ANN) search, sacrificing some query accuracy to achieve huge gains in efficiency~\cite{zeyubulletin-sep23,li2019approximate,annbulletin}. 
ANN search will retrieve a set $\mathcal C$ of $k$ high-quality candidates to maximize $recall=\frac{\left| \mathcal C \cap \mathcal G \right|}{k}$, where $\mathcal G$ consists of the exact $k$ nearest neighbors of $q$ in $\mathcal D$.
Depending on where the index and dataset are stored, we can divide ANN methods into three categories: in-memory methods~\cite{malkov2018efficient,detlsh,ge2013optimized,annoy}, disk methods~\cite{liu2017pqbf,li2020efficient}, and memory-disk hybrid methods~\cite{jayaram2019diskann,chen2021spann}.
In this paper, we focus on in-memory methods due to their great success for fast high-recall search.

\textbf{Limitations and Motivation.}
Nowadays, new data is generated at an ever-increasing rate, and the size of datasets is also growing~\cite{DBLP:journals/sigmod/Palpanas15,Palpanas2019,li2024disauth,wei2023data}.
We need to manage large-scale data more efficiently to support further data analysis~\cite{hydra2,zeyubulletin-sep23,gao2023high,wang2024boldsymbol}.
Many ANN search methods have been proposed, such as Locality-Sensitive Hashing (LSH)-based~\cite{dblsh, lccslsh, pmlsh, lazylsh, andoni2015optimal,detlsh}, Vector Quantization (VQ)-based~\cite{ge2013optimized,kalantidis2014locally,norouzi2013cartesian,babenko2014inverted,xia2013joint,babenko2016efficient}, tree-based~\cite{annoy,muja2014scalable,adsconf,ulisse,peng2018paris,coconut,messi,sing,dpisax,dpisaxjournal,chatzakis2023odyssey,dumpy,seanetconf,wang2024dumpyos,leafi}, and graph-based~\cite{munoz2019hierarchical,fu2019fast,fu2021high,malkov2018efficient,lshapg,azizi2023elpis}. 
However, each method has its own advantages and disadvantages.
LSH-based methods are well-known for their robust theoretical guarantees on the quality of results, but they have to pay a high cost in terms of query answering time and have a large memory footprint~\cite{detlsh}. 
VQ-based methods use the clustering structure as the index, so the index memory footprint is small, but a lot of indexing time is required to compute fine-grained clusters~\cite{babenko2014inverted}.
Tree-based methods can partition data points into different regions by splitting nodes, but as the space dimensionality increases, the effectiveness of trees decreases~\cite{bohm2000cost, weber1998quantitative}, and query performance is thus limited.
Graph-based methods typically have better query efficiency, but take considerably more time to construct indexes and require a larger memory footprint~\cite{hydra2,li2019approximate}. 
The reason is that graph-based methods need to identify the near neighbors for each data point in the dataset (and connect to them) during the indexing phase, while in the query phase, they only need to search on a gradually converging path.
Thus, no existing ANN method performs well in both indexing and query answering performance, while providing rigorous theoretical guarantees for the quality of the answers.

\textbf{Our Method.} 
In this paper, we propose a novel ANN search framework called \emph{Subspace Collision (SC)} and design an index structure and query strategy for this framework, forming an efficient and accurate method named SuCo~\cite{sucocode}.
Compared to other ANN methods, SuCo performs well in both indexing performance (in terms of time and memory footprint) and query performance and has rigorous theoretical guarantees.
Moreover, we first design SC-score, a metric that we show follows the \emph{Pareto principle} (also known as the 80/20 rule) on many commonly used real-world datasets (Figure~\ref{scscore}), demonstrating that data points closer to the query point tend to have larger SC-scores.
As such, SC-score can act as a proxy for the Euclidean distance between data points.
Second, inspired by SC-score, we present a novel ANN search framework called \enquote{Subspace Collision (SC)} (cf. Section~\ref{scframework}) that is different from the existing LSH-based, VQ-based, tree-based, and graph-based frameworks.
We design a naive method (without index structure) called SC-Linear, which achieves extremely high recall (over 0.99) on real-world datasets (cf. Table~\ref{SC-Linear_recall}), illustrating the effectiveness of the subspace collision framework for ANN search.
Third, we propose SuCo (cf. Section~\ref{our_method}), which achieves efficient and accurate ANN search by designing an index structure and query strategies for the subspace collision framework.
SuCo constructs lightweight indexes by clustering data points in each subspace and using the inverted multi-index (IMI) to reduce the clustering complexity.
We further design a new query algorithm for IMI, called \emph{Dynamic Activation} (cf. Algorithm~\ref{dynamic_activation}), which improves the query efficiency by up to 40\% over the original query algorithm.
Fourth, we perform a rigorous mathematical analysis, showing that the proposed subspace collision framework can provide theoretical guarantees on the quality of its results (cf. Theorems~\ref{theorem1} and~\ref{theorem2}).
Fifth, we conduct extensive experiments on real-world datasets. Numerical results show that SuCo outperforms state-of-the-art ANN methods that can provide theoretical guarantees in indexing and query answering performance, performing 1-2 orders of magnitude faster query answering with only up to one-tenth of the index memory footprint.
Moreover, SuCo achieves top performance (best for hard datasets) even when compared to methods that do not provide theoretical guarantees.

Our main contributions are summarized as follows.

\begin{itemize} 
        \item We design SC-score, a metric that we show follows the \emph{Pareto principle} on commonly used real-world datasets, demonstrating that SC-score can act as a proxy for the Euclidean distance between data points. 
        \item We present \emph{Subspace Collision (SC),} a novel ANN search framework based on SC-score. Our experiments show that the subspace collision framework can achieve high recall for ANN search. We perform rigorous mathematical analysis showing that the subspace collision framework can provide theoretical guarantees on the quality of its results.
        \item We propose SuCo (code available online~\cite{sucocode}), which achieves efficient and accurate ANN search based on our subspace collision framework. 
        We design a clustering-based lightweight index 
        to ensure excellent indexing performance, and design query strategies to ensure excellent query performance.
        \item We conduct extensive experiments, demonstrating that both the indexing and query answering performance of SuCo outperforms state-of-the-art ANN methods that can provide theoretical guarantees, performing 1-2 orders of magnitude faster query answering with only up to one-tenth of the index memory footprint. Moreover, SuCo achieves top performance (best for hard datasets) even when compared to methods that do not provide theoretical guarantees.
\end{itemize}

\section{Related Work} \label{related_work}

\textbf{LSH-based methods.}
Locality-sensitive hashing (LSH)-based methods are known for their theoretical guarantees on the quality of returned query results~\cite{e2lsh, dblsh, c2lsh, qalsh, r2lsh, vhp, lccslsh, srs, pmlsh, lazylsh, eilsh, andoni2015optimal,detlsh}. 
A family of LSH functions is used to map data points from the original high-dimensional space to low-dimensional projected spaces. 
The properties of LSH can ensure that data points that are closer in the original space are also closer in the projected space~\cite{gionis1999similarity}. 
For this reason, all that is needed to obtain high-quality results is to examine the points around the query point in the projected spaces~\cite{datar2004locality}.
LSH-based methods improve query efficiency by designing the index structure and query strategy in the projected spaces.
Based on the query strategy, there are three categories of LSH-based methods: boundary constraint~\cite{e2lsh, lsbforest, sklsh, dblsh}, collision counting~\cite{c2lsh, qalsh, r2lsh, vhp, lccslsh}, and distance metric~\cite{srs, pmlsh}.
DET-LSH~\cite{detlsh} is the state-of-the-art LSH-based method, which combines the ideas of boundary constraint and distance metric.

\begin{figure*} [t!]
        \subfigcapskip=5pt
	\subfigure[Data points distribution, where the Euclidean distance from $O_1$ to $O_{10}$ to the query point $q$ gradually increases.]{
		\includegraphics[width=0.21\linewidth]{./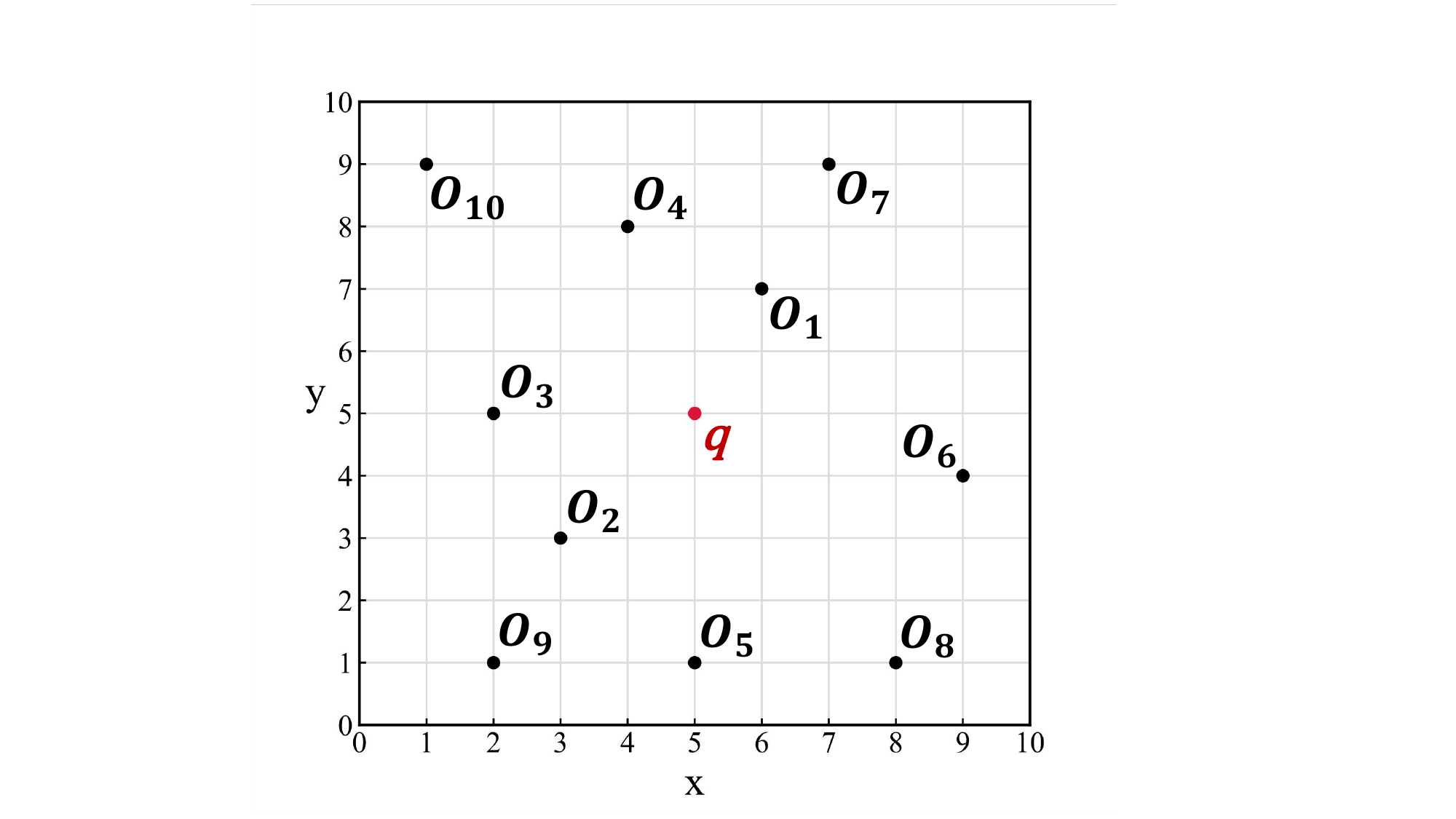}
		\label{distribution}}\hspace{7mm}
	\subfigure[Nearest neighbors when choosing $x$ or $y$ dimension as the subspace for measuring similarity.]{
		\includegraphics[width=0.31\linewidth]{./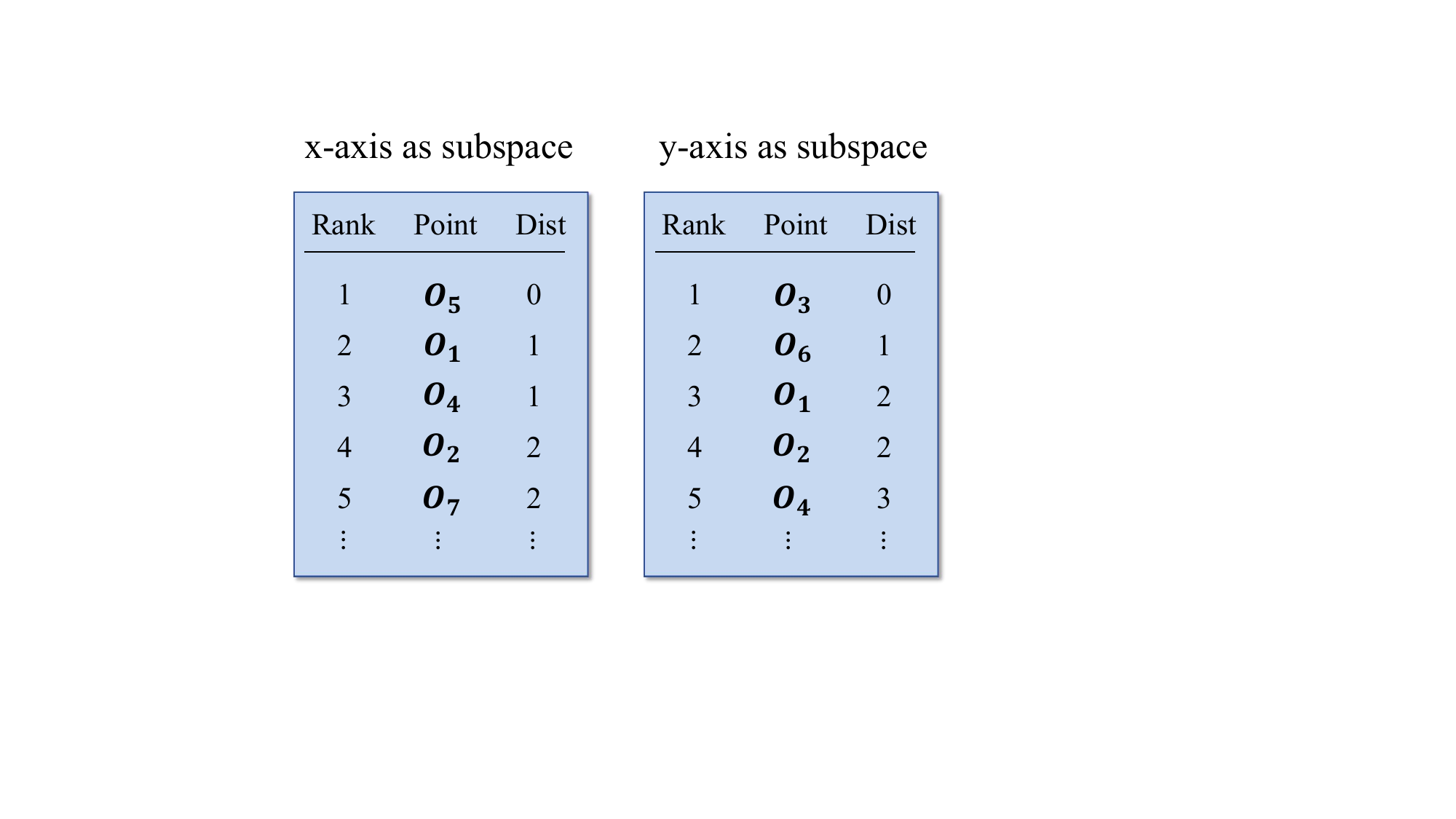}
		\label{xysubspace}}\hspace{7mm}
        \subfigure[Nearest neighbors when introducing the idea of collision based on $x$ and $y$ subspaces, where the collision ratio $\alpha=0.5$.]{
		\includegraphics[width=0.19\linewidth]{./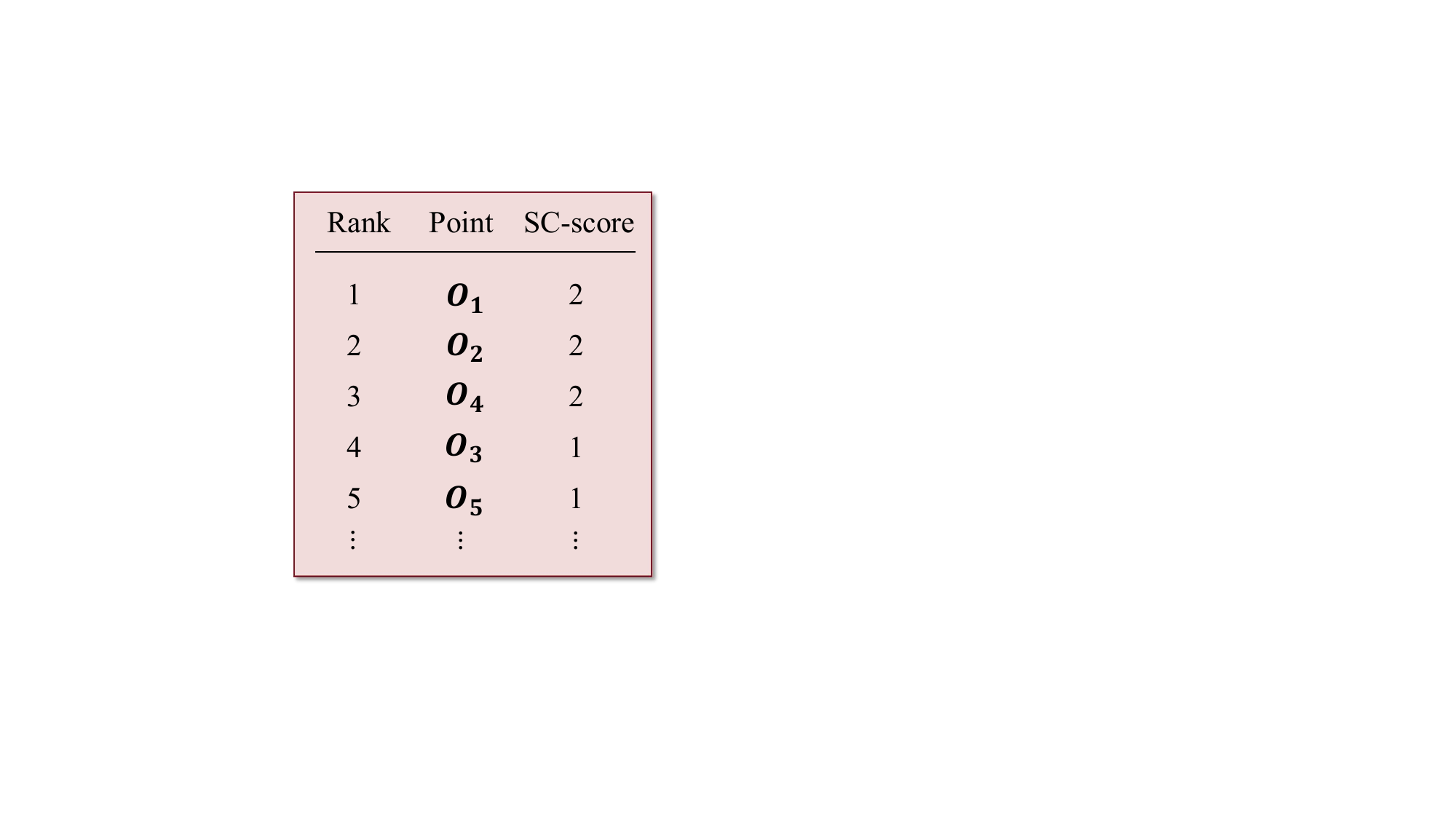}
		\label{collision}}
	\caption{Illustration of finding nearest neighbors using the idea of subspace and collision.}
	\label{subspaceandcollision}
\end{figure*}

\textbf{VQ-based methods.} 
Vector quantization is a lossy data compression technique that encodes vectors from a high-dimensional space into a finite set of values from low-dimensional discrete subspaces~\cite{gray1984vector,gray1998quantization}. 
Vector Quantization (VQ)-based methods aim to minimize the quantization distortion, which is the sum difference between each data point and its approximation~\cite{li2019approximate}.
Based on the way of combining quantizers among multiple subspaces, there are four categories of VQ-based methods: product quantization (PQ)~\cite{jegou2010product}, additive quantization (AQ)~\cite{babenko2014additive}, composite quantization (CQ)~\cite{zhang2014composite}, and tree quantization (TQ)~\cite{babenko2015tree}. 
PQ is the most popular VQ-based methods, which divide the original high-dimensional space into a Cartesian product of many low-dimensional subspaces and then quantizes subvectors in each subspace separately~\cite{matsui2018survey}.
Many extensions have been proposed to improve the performance of PQ for index construction and query answering~\cite{ge2013optimized,kalantidis2014locally,norouzi2013cartesian,babenko2014inverted,xia2013joint,babenko2016efficient,matsui2015pqtable,jegou2011searching}.
Although both PQ and subspace collision framework divide the high-dimensional space into subspaces, their design concepts are different. PQ sums the distances between quantized data points and the query in all subspaces to judge similarity, while the subspace collision framework counts the number of collisions between data points and the query in all subspaces to judge similarity.
Optimized Product Quantization (OPQ)~\cite{ge2013optimized} using the inverted multi-index~\cite{babenko2014inverted} is the state-of-the-art VQ-based method~\cite{matsui2018survey,hydra2}. 

\textbf{Tree-based methods.} 
Tree-based methods hierarchically partition the high-dimensional space and group similar data points into the same partition as leaf nodes~\cite{zeyubulletin-sep23}.
The partitioning process is usually recursive, so the root node of the tree covers the entire high-dimensional space, and its children nodes cover disjoint subspaces.
Based on the way of partitioning, there are three categories of tree-based methods: pivoting~\cite{yianilos1993data,cayton2008fast,boytsov2013learning}, hyperplane~\cite{annoy,dasgupta2008random,silpa2008optimised,muja2014scalable,iSAX2,wang2013data,ads,coconutjournal,paris+,DBLP:journals/vldb/PengFP21,ulissejournal,dpisaxjournal,sing,chatzakis2023odyssey,fatourou2023fresh,seanet,echihabi2022hercules,dumpy,leafi} and compact partitioning~\cite{fukunaga1975branch,navarro2002searching,beygelzimer2006cover}.
Hyperplane partitioning is the most popular tree-based method, which recursively partitions the space by the hyperplane with random direction or axis-aligned separating hyperplane.
Annoy~\cite{annoy} is the state-of-the-art tree-based method with balanced indexing and query answering performance~\cite{li2019approximate}.

\textbf{Graph-based methods.} 
Graph-based methods construct a proximity graph where each node represents a data point, and edges represent the neighbor relationships between data points~\cite{wang2021comprehensive,annbulletin,iliassigmod25}.
The main idea of graph-based methods is \enquote{a neighbor’s neighbor is likely also to be a neighbor}.
Compared with other types of methods, graph-based methods typically have better query efficiency but require considerably more time to construct indexes~\cite{hydra2,li2019approximate}. 
The reason is that graph-based methods need to identify the near neighbors for each data point in the dataset (and connect to them) during the indexing phase, while in the query phase, they only need to search on a gradually converging path. 
Based on the way of building the proximity graph, there are three categories of graph-based methods: cluster and merge~\cite{munoz2019hierarchical}, iterate from an initial graph~\cite{dong2011efficient,li2019approximate,fu2019fast,fu2021high,peng2023efficient,chen2018sptag}, consecutive insertion~\cite{malkov2014approximate,malkov2018efficient,lu2021hvs,lshapg,azizi2023elpis}.
Although there have been some innovative works subsequently, HNSW~\cite{malkov2018efficient} is still one of the state-of-the-art graph-based method~\cite{wang2021comprehensive}, which has an industrial-grade library and is therefore widely used.

\section{Subspace Collision Framework} \label{scframework}

\subsection{Framework Design} \label{framework_design}

Intuitively, two data points that are close in the high-dimensional original space are also more likely to be close in its random subspace. 
Given a dataset $\mathcal D$ of $n$ data points in $d$-dimensional space and a query $q \in \mathbb{R}^d$, and $o^*_1,\ldots,o^*_k$ are the $k$-NNs of $q$ in $\mathcal D$.
Suppose we randomly select $s$ dimensions from all $d$ dimensions as a subspace ($s<d$), and the points $o^*_1,\ldots,o^*_k$ and $q$ in this subspace are denoted as $o_1^{*\prime},\ldots,o_k^{*\prime}$ and $q^\prime$. 
Then, it is expected that statistically, the proximity of $o_1^{*\prime}, \ldots, o_k^{*\prime}$ and $q^\prime$ is greater than other points in $\mathcal D$.
The above intuition is based on the premise that the distances between data points are relatively evenly distributed in each dimension. 
If the distance between two points is concentrated in some dimensions, the above intuition may not hold. 
As shown in Figure~\ref{distribution} and Figure~\ref{xysubspace}, if we choose the $x$ dimension as the subspace for measuring distance, $O_5$ is the nearest neighbor because its distance from the query point $q$ is only distributed in the $y$ dimension. 
Similarly, if we choose the $y$ dimension as the subspace for measuring distance, $O_3$ and $O_6$ are the nearest neighbors since their distances to $q$ are mostly distributed in the $x$ dimension. 
Points like $O_1$ and $O_2$, which are close to $q$ but their distances evenly distributed in both $x$ and $y$ dimensions, are easily missed.

We need to design a framework to solve the above problems, which meets two requirements:
\begin{itemize} 
\item \textbf{Req1}: combine multiple subspaces to alleviate significant errors that a single subspace may cause;
\item \textbf{Req2}: reduce the impact of distance ranking within a single subspace on the final result.
\end{itemize}
To this end, we first define ``collision'' and ``subspace collision''.

\begin{definition}[Collision]\label{def_collision}
	Given a dataset $\mathcal D$ of $n$ data points in $d$-dimensional space, a query point $q \in \mathbb{R}^d$, and a collision ratio $\alpha \in \left(0, 1\right)$, if a data point $o \in \mathcal D$ satisfies: $\left\|o,q\right\|$ is one of the minimum $\alpha \cdot n$ distances between all $n$ data points and $q$, i.e., $o$ is one of the $(\alpha \cdot n)$-NNs of $q$ in $\mathcal D$, we say that $o$ collides with $q$.
\end{definition} 

\begin{definition}[Subspace Collision]\label{def_subspace_collision}
	Given a dataset $\mathcal D$ of $n$ data points in $d$-dimensional space, a query $q \in \mathbb{R}^d$, and a collision ratio $\alpha \in \left(0, 1\right)$. 
 We randomly select $s$ dimensions from all $d$ dimensions as a subspace $\mathbb{R}^s$ ($s<d$).
 The dataset $\mathcal D$, the data point $o$, and the query point $q$ in this subspace are denoted as $\mathcal D^\prime$, $o^\prime$, and $q^\prime$. If a point $o \in \mathcal D$ satisfies: $o^\prime$ is one of the $(\alpha \cdot n)$-NNs of $q^\prime$ in $\mathcal D^\prime$, we say that $o$ collides with $q$ in the subspace $\mathbb{R}^s$.
\end{definition} 

The definition of subspace collision downplays the distance ranking between data points and $q$ in the subspace, allowing \textbf{Req2} to hold.
As long as a data point is close enough to $q$ (in the $(\alpha \cdot n)$-NNs), it can be considered as colliding with $q$ in the subspace.
Subspace collision can thus alleviate the problem that the distance between $o$ and $q$ is mainly distributed in specific dimensions. 

Since using only one subspace to measure the similarity between data points and the query point may cause significant errors, we need to design a method to obtain multiple subspaces and a method that combines the collision results of these subspaces to judge the similarity.
To make full use of the information in all $d$ dimensions of the data points, we design a multi-round sampling strategy \enquote{Subspace Sampling} for the $d$ dimensions. 

\begin{definition}[Subspace Sampling]\label{def_subspace_sampling}
	Given a dataset $\mathcal D$ of $n$ data points in $d$-dimensional space, we adopt a multi-round sampling strategy to obtain $N_s$ subspaces.
 In round $i$, a number of $s=\lfloor \frac{d}{N_s} \rfloor$ dimensions are uniformly sampled without replacement to form a subspace $S_i$, $i=1,2,\ldots,N_s-1$. For the last subspace $S_{N_s}$, it simply pick up all remaining dimensions.
\end{definition}

Definition~\ref{def_subspace_sampling} makes full use of all dimensions of the data.
Even if $\frac{d}{N_s}$ is not an integer, all remaining dimensions after $N_s-1$ rounds of sampling will be picked up by the last subspace $S_{N_s}$.
For ease of reading and understanding, we assume that $\frac{d}{N_s}$ is an integer in the following sections, i.e., all subspaces ($S_1,...,S_{N_s}$) have $s=\frac{d}{N_s}$.

\begin{table}
	\centering
	\caption{Notations}
	\label{table1}
 {
	\begin{tabular}{cc}
		\toprule
		\textbf{Notation} & \textbf{Description} \\
		\midrule
		$\mathbb{R}^d$ & $d$-dimensional Euclidean space \\
		$\mathcal D$ & Dataset of points in $\mathbb{R}^d$ \\
		$n$ & Dataset cardinality  $\lvert \mathcal D \rvert$ \\
		$o, q$ & A data point in $\mathcal D$ and a query point in $\mathbb{R}^d$ \\
		$o^*_i$ & The $i$-th nearest data point to $q$ in $\mathcal D$ \\
        $o_i$ & The $i$-th data point in $\mathcal D$ \\
		$\left\|o_1, o_2\right\|$ & The Euclidean distance between $o_1$ and $o_2$ \\
        $o^{\prime},q^{\prime}$ & $o$ and $q$ in a subspace \\
        $N_s$ & Number of subspaces\\ 
        $s$ & Dimension of each subspace \\
        $S_i, \mathcal D_i$ & The $i$-th subspace and all data points in $S_i$ \\
        $o^i_j$, $q^i$ & The $j$-th data point and $q$ in the $i$-th subspace \\
        $\alpha$ & Collision ratio \\
		$\beta$ & Re-rank ratio \\
        $K, t$ & Number of K-means clusters and iterations\\
		\bottomrule
	\end{tabular}
 } 
\end{table}

\begin{algorithm}[tb]
	\caption{SC-Linear}                                                                           
	\label{SC-Linear}
	\LinesNumbered
	\KwIn{Dataset $\mathcal D$, dataset size $n$, data dimensionality $d$, query point $q$, number of results $k$, subspace number $N_s$, collision ratio $\alpha$, re-rank ratio $\beta$}
	\KwOut{$k$ nearest points to $q$ in $\mathcal D$}
        Initialize an array $SC\_scores$ of length $n$ and set all elements to 0; \\
	Divide the $d$-dimensional space into $N_s$ subspaces: $S_1, S_2,\ldots, S_{N_s}$; \\
        Divide all data points $o_1,\ldots,o_n$ and $q$ into $N_s$ subspaces; \\
	\For{$i=1$ to $N_s$}{
            \For{$j=1$ to $n$}{
                Calculate the Euclidean distance between $o^i_j$ and $q^i$; \\
            }
            Sort $o^i_1,\ldots,o^i_n$ by their distances to $q^i$ in $S_i$; \\
            \For{$z=1$ to $\alpha \cdot n$}{
                Select the $z$-th-nearest point to $q^i$ whose id in $\mathcal D$ is $t$; \\
                $SC\_scores[t]$++; \\
            }
	}
        Sort $SC\_scores$ in descending order; \\
        \For{$z=1$ to $\beta \cdot n$}{
            Select the point with the $z$-th-largest SC-score in $SC\_scores$ whose id in $\mathcal D$ is $t$; \\
            Calculate the Euclidean distance between $o_t$ and $q$; \\
        }
	\Return the \textit{top}-$k$ points closest to $q$ in the $\beta \cdot n$ candidates; \\
\end{algorithm}

Based on the obtained $N_s$ subspaces, we design a metric \enquote{Subspace Collision Score (SC-score)} to measure the similarity between a data point and the query point.

\begin{definition}[SC-score]\label{def_similarity}
	Given a dataset $\mathcal D$ of $n$ data points in $d$-dimensional space, a query $q \in \mathbb{R}^d$, $N_s$ $s$-dimensional subspaces, a collision ratio $\alpha \in \left(0, 1\right)$. Probe collisions of $q$ in $N_s$ subspaces with the collision ratio $\alpha$.
 For a data point $o \in \mathcal D$, its SC-score is the number of subspaces where it collides with $q$. Therefore, SC-score is an integer in $\left[0, N_s\right]$.
\end{definition} 

SC-score combines the collision results of $N_s$ subspaces to judge the similarity between data points and query points, allowing \textbf{Req1} to hold.
As shown in Figure~\ref{collision}, data points that are closer to the query point in the original space tend to have larger SC-scores ($\alpha=0.5$), and the subspace collision framework can return high-quality results even if the distances between data points and the query point are unevenly distributed.
For example, $O_1$ and $O_2$ are not the closest points to $q$ in both $x$ and $y$ dimensions, but they have the highest SC-score and, thus, the highest similarity.
SC-score reduces the similarity measurement error caused by the unevenly distributed distance between $o$ and $q$ in different dimensions.
The experimental results in Section~\ref{scscore_exp} (Figure~\ref{scscore}) show that SC-score is a good metric to measure the Euclidean distance between data points and query point because \enquote{\textit{data points closer to query point tend to have larger SC-scores.}}
Both Figure~\ref{subspaceandcollision} and Figure~\ref{scscore} show that the proposed subspace collision framework appropriately addresses \textbf{Req1} and \textbf{Req2} simultaneously.
Theorem~\ref{theorem1} in Section~\ref{guarantee} provides a theoretical guarantee on the effectiveness of SC-score.

\begin{figure} [tb]
        \subfigcapskip=5pt
	\subfigure[Sift10M: $d$=128, $N_s$=8, $s$=16, $\alpha$=0.1]{
		\includegraphics[width=0.46\linewidth]{./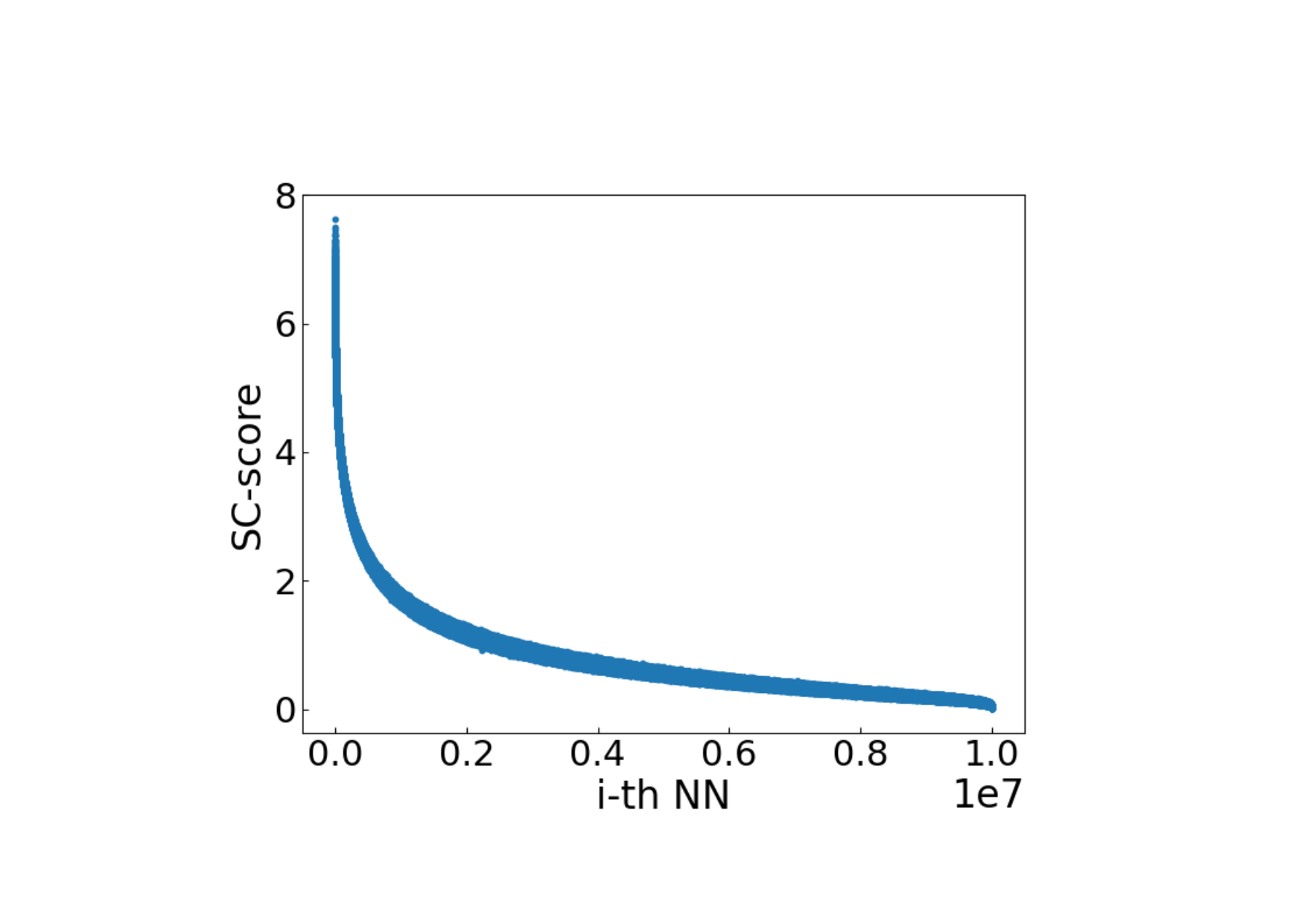}
		\label{sift10m_scscore}}\hspace{2mm}
	\subfigure[Yandex Deep10M: $d$=96, $N_s$=8, $s$=12, $\alpha$=0.1]{
		\includegraphics[width=0.46\linewidth]{./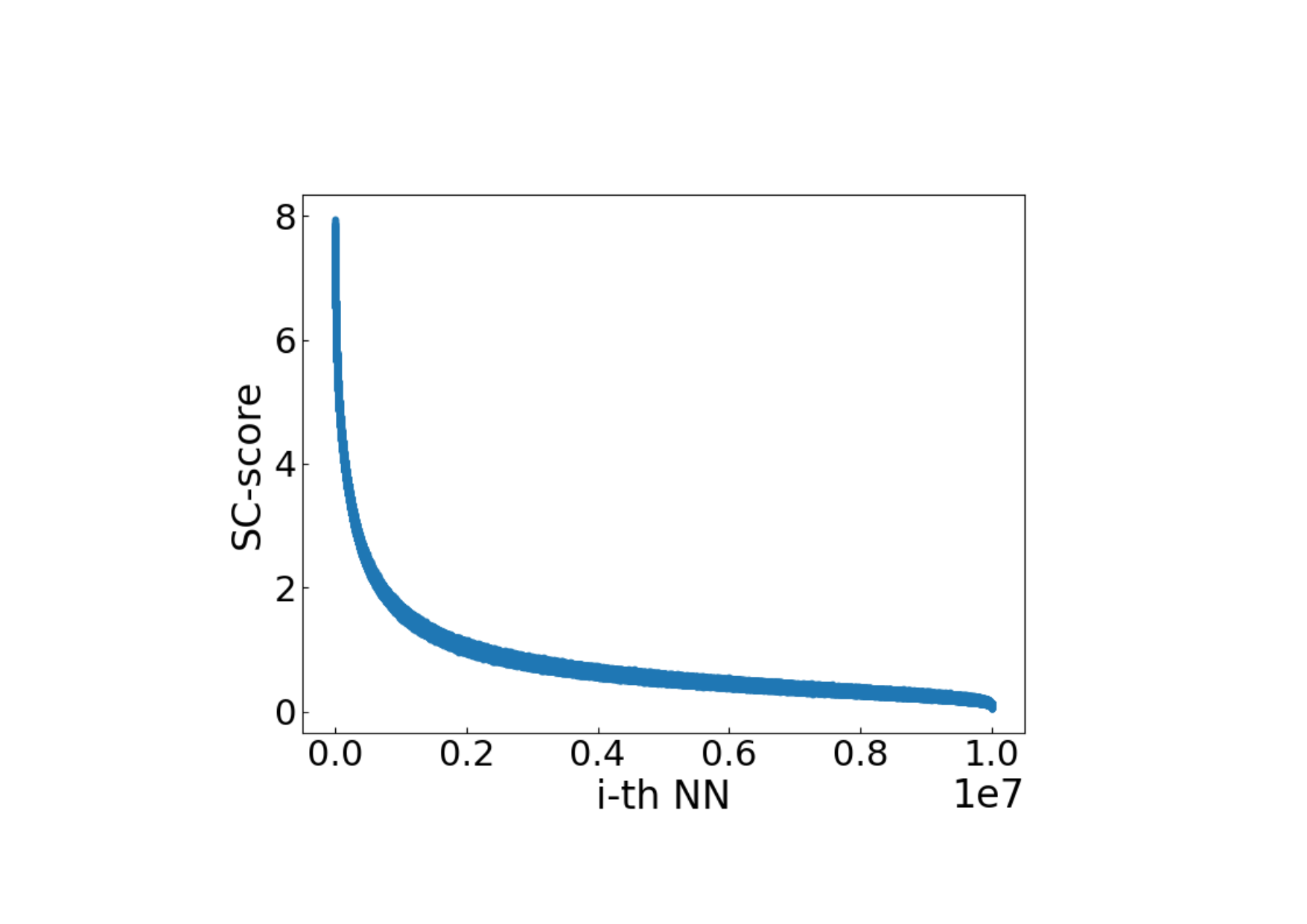}
		\label{deep10m_scscore}}\hspace{2mm}
        \subfigure[Microsoft SPACEV10M: $d$=100, $N_s$=10, $s$=10, $\alpha$=0.1]{
		\includegraphics[width=0.46\linewidth]{./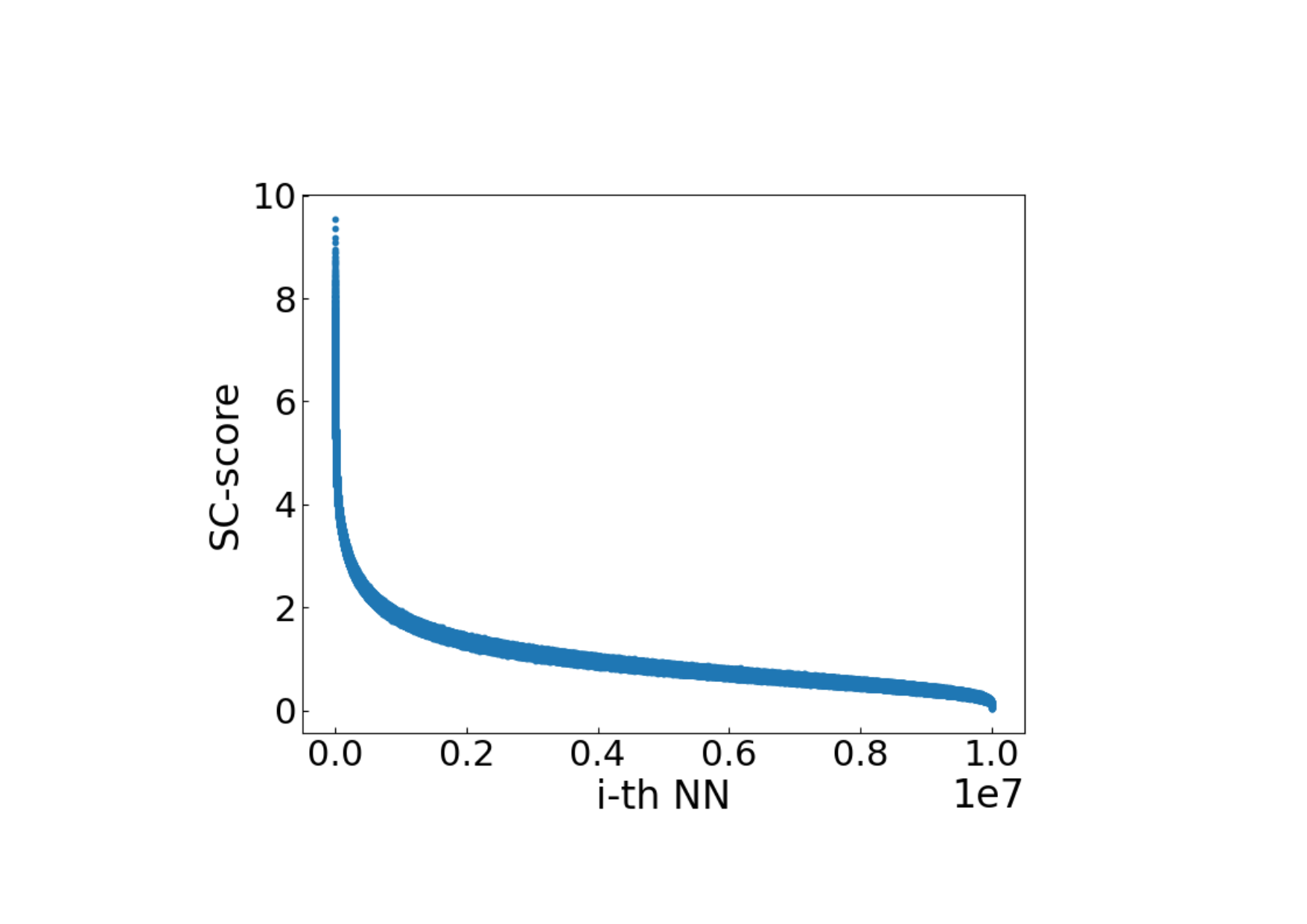}
		\label{spacev10m_scscore}}\hspace{2mm}
      \subfigure[Microsoft Turing-ANNS10M: $d$=100, $N_s$=10, $s$=10, $\alpha$=0.1]{
		\includegraphics[width=0.46\linewidth]{./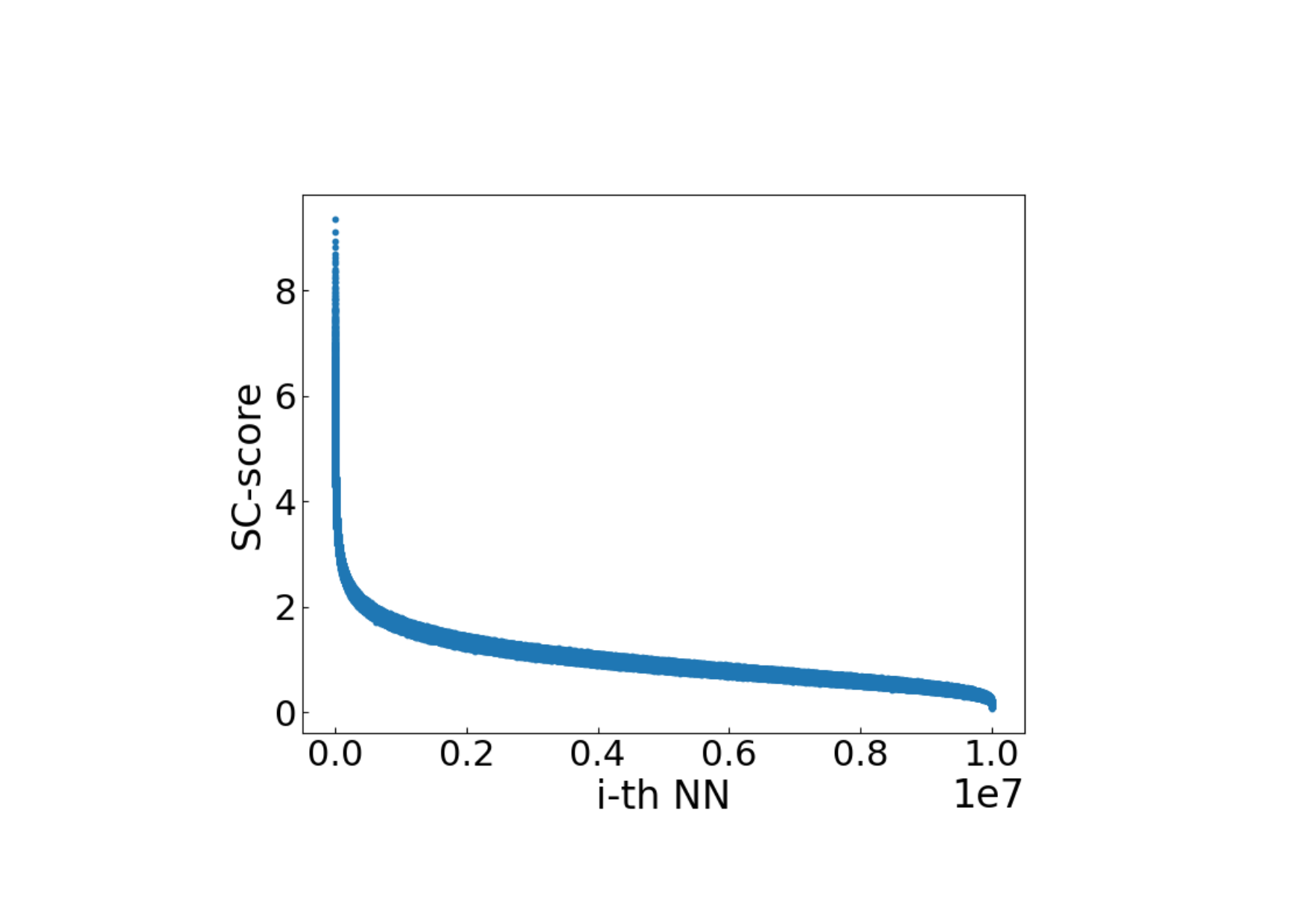}
		\label{turing10m_scscore}}
	\caption{\enquote{Pareto principle} of SC-score on four datasets. Each figure contains $n=10$M $(10^7)$ scatter points, the scatter $(i,j)$ represents the average SC-score of the $i$-th NN for 1000 queries is $j$, where $i=1,2,\ldots,10^7$, and $j \in \left[0, N_s\right]$.}
	\label{scscore}
\end{figure}

\subsection{Naive method for ANN} \label{naive_method}

Based on the subspace collision framework proposed in Section~\ref{framework_design}, we design a naive method (without index structure) called SC-Linear to support ANN search, as shown in Algorithm~\ref{SC-Linear}. 
First, an array is initialized to record the SC-score of each data point (line 1). 
Then, use Definition~\ref{def_subspace_sampling} to sample and obtain $N_s$ subspaces, and divide all data points and the query point into these subspaces, where $o^i_j$ and $q^i$ represent the $j$-th data point and $q$ in the $i$-th subspace (lines 2-3).
In practice, for convenience, we can divide the $d$-dimensional space into $N_s$ subspaces of the same length $s=\frac{d}{N_s}$, where the $i$-th subspace is allocated from the ($s \cdot (i-1)+1$)-th to ($s \cdot i$)-th dimensions in the original space, $i=1,\ldots,N_s$.
This division method can be regarded as a special case of Definition~\ref{def_subspace_sampling}.
In order to count the SC-score of each data point, it is necessary to calculate the Euclidean distance between each data point and $q$ in each subspace (lines 4-6). 
For each subspace $S_i$, sort the distances and select $\alpha \cdot n$ points closest to $q^i$ as collisions, and increase the SC-score of these points by one, where $\alpha$ is the collision ratio (lines 7-10).
To further improve query accuracy, we design a re-ranking mechanism (lines 11-15).
Specifically, sort the SC-scores of all points and select $\beta \cdot n$ points with the largest SC-scores as candidates, where $\beta$ is the re-rank ratio (lines 11-13). Calculate the Euclidean distance between all candidates and $q$ in the original space, and return the \textit{top}-$k$ points closest to $q$ (lines 14-15).
To facilitate understanding of SC-Linear, readers can refer to Figure~\ref{overview} (the workflow of SuCo method). The difference between SC-Linear and SuCo is that SC-Linear does not construct the inverted multi-indexes to speed up query answering.

The experimental results in Section~\ref{SC-Linear_performance} show that SC-Linear has high query accuracy, which demonstrates the effectiveness of our designed subspace collision framework. 
Theorem~\ref{theorem2} in Section~\ref{guarantee} provides a theoretical guarantee on the quality guarantee of results returned by Algorithm~\ref{SC-Linear}.
However, SC-Linear has no index structure to achieve fast query answering. It relies on calculating the Euclidean distance between all data points and the query point in all subspaces, which has the same time complexity as linear scanning in the original space (that is why we named Algorithm~\ref{SC-Linear} SC-Linear). 
Therefore, we need to further design an index structure and query strategies for the subspace collision framework to improve query efficiency, which will be introduced in Section~\ref{our_method}.

\begin{table}[tb]
\centering
	\caption{Recall of SC-Linear when returning 50 NNs under different datasets and parameters.}
	\label{SC-Linear_recall}
\begin{tabular}{c m{1.2cm}<{\centering}m{1.2cm}<{\centering}m{1.2cm}<{\centering}m{1.2cm}<{\centering}}
\hline
                & \makecell[c]{$\alpha$=0.05 \\ $\beta$=0.001} & \makecell[c]{$\alpha$=0.05 \\ $\beta$=0.005} & \makecell[c]{$\alpha$=0.05 \\ $\beta$=0.01} & \makecell[c]{$\alpha$=0.05 \\ $\beta$=0.05} \\ \hline
\textit{Sift10M} & 0.9536  & 0.9916         & 0.9968                & 1                   \\ 
\textit{\makecell[c]{Deep10M} }  & 0.9418  & 0.9848         & 0.9968              & 1                   \\ 
\textit{\makecell[c]{SPACEV10M} }  & 0.9742  & 0.994         & 0.9962              & 0.9992                   \\ 
\textit{\makecell[c]{Turing10M} }   & 0.9638  & 0.9876              & 0.9944               & 0.999                   \\ \hline
\end{tabular}
\end{table}

\subsection{Preliminary Experiments} \label{preliminary_experiments}

In this section, we conduct preliminary experiments to determine whether the subspace collision framework is effective. 
On the one hand, we explore whether SC-score is a good measure of the distance between all data points and the query point. 
On the other hand, we measure the performance of the SC-Linear algorithm to see whether it can return high-quality query results. 
More detailed experiments and parameter learning will be given in Section~\ref{selfevaluation}.

\subsubsection{Effectiveness of SC-score} \label{scscore_exp}

We select 1000 queries and count the SC-score of all data points in the dataset under each query. 
We perform this statistical analysis on multiple commonly used ANN datasets. 
Due to space limitations, we only show the results on four datasets in Figure~\ref{scscore} (other results are similar).

We can see from Figure~\ref{scscore} that SC-score follows the \enquote{Pareto principle} and is \textit{insensitive} to the data distribution.
Data points that are close to the query point have a high SC-score. 
Then, as the distance increases, the SC-score decreases rapidly until a \enquote{turning point} is reached. 
After the \enquote{turning point}, the SC-score decreases slowly with the distance. 
The sharp change in trend makes each figure look like an \enquote{L} shape.

The \enquote{Pareto principle} of SC-score makes it very suitable for ANN search. 
In all four datasets, the x-axis of the \enquote{turning point} is about $0.2n$, which means that the 20\% of data points closest to the query have a large and distinguishable SC-score, while the remaining 80\% of the data points have a small and indistinguishable SC-score. 
Therefore, we only need to select data points with a large enough SC-score before the \enquote{turning point} to obtain high-quality ANN search results.
SC-Linear is designed based on this idea. 
Specifically, the re-ranking mechanism (lines 11-15 in Algorithm~\ref{SC-Linear}) selects $\beta \cdot n$ data points with the largest SC-scores as candidates and then returns the \textit{top}-$k$ data points closest to the query among these candidates. 
Experimental results in Section~\ref{SC-Linear_performance} demonstrate that SC-Linear can return high-quality ANN results.

\subsubsection{Performance of SC-Linear} \label{SC-Linear_performance}

To measure whether SC-Linear can return high-quality results, we perform ANN search when $k$=50 on four datasets ($N_s$ and $s$ set for each dataset are the same as that in Figure~\ref{scscore}). 
Table~\ref{SC-Linear_recall} shows the average recall of 100 queries. We can see that SC-Linear can support accurate ANN search with extremely high recall, demonstrating that the subspace collision framework we designed is very suitable for ANN search. 
Furthermore, when the parameters $N_s$ and $\alpha$ are set reasonably, SC-Linear can still have a high recall even if a very small re-rank ratio $\beta$ is used. 
This also confirms the rule that \enquote{\textit{data points closer to query point tend to have larger SC-scores}}.
More detailed parameter learning on $\alpha$, $\beta$, and $N_s$ will be given in Section~\ref{selfevaluation}.


\subsection{Theoretical Guarantee} \label{guarantee}

In this section, through rigorous mathematical analysis, we provide theoretical guarantees on the effectiveness of SC-score and the quality guarantee of our proposed subspace collision framework for ANN search. 

\begin{theorem}[Effectiveness of SC-score]\label{theorem1}
Given a query point $q \in \mathbb{R}^d$, two independent random data points $o_1, o_2 \in$ dataset $\mathcal D$, and the SC-score of $o_1$ is greater than that of $o_2$, then $\left\|o_1, q\right\| < \left\|o_2, q\right\|$ holds with probability at least $1/2 - 1/e^2$ for appropriate choices of the number of subspaces $N_s$ and collision ratio $\alpha \in (0,1)$ that depend on the data statistics and the dimension $d$.
\end{theorem}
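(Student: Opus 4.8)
The plan is to adopt a clean probabilistic model for the two random points and then tie a point's SC-score to its distance from $q$ through the conditional mean of its per-subspace squared distances. Fix $q$ and treat $o_1,o_2$ as i.i.d.\ draws from the data distribution. Decompose $\|o_1,q\|^2$ and $\|o_2,q\|^2$ over subspaces: $\|o_\ell,q\|^2=\sum_{i=1}^{N_s}Z_{\ell,i}$, where $Z_{\ell,i}=\|o_\ell^{i},q^{i}\|^2$ is the squared distance of $o_\ell$ to $q$ inside subspace $S_i$ (notation as in Table~\ref{table1}). Let $\theta$ be the population $\alpha$-quantile of the (common) law of a per-subspace squared distance $Z$, and set $\mu_-=\mathbb{E}[Z\mid Z\le\theta]$, $\mu_+=\mathbb{E}[Z\mid Z>\theta]$, $\Delta=\mu_+-\mu_->0$; these quantities, the variance of $Z$, and $d$ are the \enquote{data statistics} on which the chosen $N_s$ and $\alpha$ will depend. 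For large $n$ the empirical collision threshold in subspace $S_i$ (the $(\alpha n)$-th order statistic of the $n$ values $Z_{\cdot,i}$) concentrates around $\theta$ by Dvoretzky--Kiefer--Wolfowitz type bounds, so up to a negligible term $S_\ell=\sum_{i=1}^{N_s}\mathbf 1[Z_{\ell,i}\le\theta]$, which is (approximately) $\mathrm{Binomial}(N_s,\alpha)$.

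The structural heart of the argument is the monotone dependence of distance on SC-score: conditioning on $S_\ell$, a colliding subspace contributes squared distance with mean $\mu_-$ and a non-colliding one with mean $\mu_+$, so $\mathbb{E}[\|o_\ell,q\|^2\mid S_\ell]=N_s\mu_+-S_\ell\Delta$, strictly decreasing in $S_\ell$ with a guaranteed drop $\Delta$ per unit. Hence $S_1>S_2$ gives $\mathbb{E}[\|o_1,q\|^2\mid S_1]-\mathbb{E}[\|o_2,q\|^2\mid S_2]\le-\Delta<0$: a larger SC-score genuinely predicts a smaller distance. As a sanity check, in the idealized case where a fixed point's per-subspace distances are mutually independent one can even couple $o_1$ and $o_2$ on their shared colliding/non-colliding coordinates, leaving $S_1-S_2\ge1$ unmatched coordinates each of which contributes a below-threshold value to $\|o_1,q\|^2$ against an above-threshold value to $\|o_2,q\|^2$, so $\|o_1,q\|<\|o_2,q\|$ almost surely; the theorem asserts that real dependence erodes this only by a bounded amount.

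To bound the erosion I would use concentration at scale $\sqrt{N_s}$. Hoeffding's inequality applied to $S_\ell$ (a sum of $N_s$ near-Bernoulli$(\alpha)$ indicators) and to $\|o_\ell,q\|^2$ given $S_\ell$ (a sum of $N_s$ terms bounded by the data range) yields one-sided tails of the form $e^{-2t^2/N_s}$; taking $t=\sqrt{N_s}$ makes each such bad-deviation probability at most $1/e^2$. One then chooses $N_s$ large and $\alpha$ so that the guaranteed gap $\Delta\cdot(S_1-S_2)\ge\Delta$ dominates the combined $O(\sqrt{N_s})$ fluctuation of $\|o_1,q\|^2$ and $\|o_2,q\|^2$ about their conditional means — this is precisely where the dependence on the data statistics and $d$ enters. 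Under such a choice the event $\{S_1>S_2,\ \|o_1,q\|\ge\|o_2,q\|\}$ entails the failure of one of the controlled deviations; combining these tail bounds with the symmetric (uncontrolled) contribution that survives yields $\Pr[\|o_1,q\|<\|o_2,q\|\mid S_1>S_2]\ge\tfrac12-\tfrac1{e^2}$.

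I expect the main obstacle to be the dependence I keep flagging. The per-subspace squared distances of one point are positively associated (a point far from $q$ tends to be far in every subspace), and the empirical collision thresholds are themselves functions of the whole dataset, so the literal i.i.d.\ Hoeffding step is not justified as stated. Making the phrase \enquote{appropriate choices of $N_s$ and $\alpha$ depending on the data statistics} precise enough to absorb this — e.g.\ via an explicit weak-dependence / exchangeability assumption on the subspace decomposition together with a bounded-differences (McDiarmid) argument for the thresholds, or by working conditionally on $\|o_\ell,q\|$ and controlling the simplex-constrained law of the $Z_{\ell,i}$ — is the crux; once that is in place, the remaining Hoeffding bookkeeping that produces the $1/e^2$ is routine.
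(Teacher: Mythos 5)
Your outline parallels the paper's own proof at the structural level: decompose the squared distance over subspaces, interpret a collision as the per-subspace squared distance falling below a data-dependent threshold (your population quantile $\theta$ plays the role the paper assigns, via the Paley--Zygmund inequality, to $\sqrt{(1-\alpha)(\sigma^2+m^2)}$), reduce to the worst case where the SC-scores differ in a single subspace, and control the remaining subspaces by concentration (your Hoeffding versus the paper's Chebyshev). The dependence issues you flag at the end are not a gap relative to the paper, which simply assumes the per-subspace squared norms $Z_i^j$ are independent with common mean $m$ and variance $\sigma^2$; on that point you are being more careful than the source.

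The genuine gap is in the quantitative step where the gap must beat the fluctuation. Conditional on $S_1>S_2$, the guaranteed advantage is only one subspace's worth, $\Delta=\mu_+-\mu_-$, which does not grow with $N_s$, whereas the deviations of $\|o_i,q\|^2$ about their conditional means are sums over up to $N_s$ subspaces, of typical size $\sigma\sqrt{N_s}$ (or $B\sqrt{N_s}$ with range $B$ under Hoeffding). Your prescription --- take $t=\sqrt{N_s}$ so each tail is $1/e^2$, then choose $N_s$ large so that $\Delta$ dominates the $O(\sqrt{N_s})$ fluctuation --- is internally inconsistent: enlarging $N_s$ enlarges the fluctuation while $\Delta$ stays fixed, and shrinking the deviation threshold to order $\Delta$ turns the Hoeffding bound into $\exp(-\Theta(\Delta^2/(N_s B^2)))$, which is vacuous since $\Delta\le B$. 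What actually closes the argument --- and what the paper encodes in its condition $(c_2-c_1(N_s-C-1))m>(C+c_2-c_1(N_s-C-1))\sqrt{(1-\alpha)(\sigma^2+m^2)}$ together with the constraint $\alpha>\max\left(1/(1+m^2/\sigma^2),\,1-e^2/(1+m^2/\sigma^2)\right)$ --- is a separation condition on the data statistics: $\alpha$ must be chosen large enough that the non-collision mean $m$ exceeds the collision threshold by an amount comparable to $\sigma\sqrt{N_s}$; the ``appropriate choice'' is of $\alpha$ as a function of $m/\sigma$ (with $N_s$ moderate), not of $N_s$ large. Relatedly, your final step ``combining these tails with the symmetric uncontrolled contribution yields $1/2-1/e^2$'' is not an actual accounting; in the paper the constant comes from an explicit budget, $1/4$ from Chebyshev over the non-colliding subspaces (via $c_1\propto\sqrt{N_s-1}$) plus $1/e^2$ from the single differing subspace (via the choice of $c_2$), giving $3/4-1/e^2\ge 1/2-1/e^2$. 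Until you state the separation condition explicitly and carry out such a probability budget, the proposal does not deliver the claimed constant.
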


\begin{proof}[Proof of Theorem~\ref{theorem1}]
In the following, we provide a proof of Theorem~\ref{theorem1} by making explicit the impact of the number of subspaces $N_s$ and the collision ratio $\alpha \in (0,1)$ as a function of the data· statistics.
Assume without loss of generality that $\frac{d}{N_s}$ is an integer.
Let $o_i^j \in \mathbb{R}^{\frac{d}{N_s}}$ and $q^j$ denote the subvector of $o_i$ and $q$ in the $j$-th subspace, with $i \in \{1, 2\}$, $j \in \{1,\ldots,N_s\}$, and $N_s$ is the number of subspaces.
Let $z_i= |o_i-q| \in \mathbb{R}^d$ denote the vector of absolute difference between $o_i$ and $q$, and $z_i^j \in \mathbb R^{\frac{d}{N_s}}$ its subvectors. 
Assume that the squared Euclidean norms of subvector $Z_i^j \equiv \left\|z_i^j\right\|^2$ are independent random variables with mean $m > 0$ and variance $\sigma^2$,
we compare the following differences
\begin{equation}
\textstyle \left\|z_1\right\|^2=\sum_{j=1}^{N_s}Z_1^j, \quad \left\|z_2\right\|^2=\sum_{j=1}^{N_s}Z_2^j,
\end{equation} 
by evaluating the number of subspace collisions (i.e., SC-score).
Note that the subvectors $o_1^j$ versus $o_2^j$ with same index $j$ \emph{must} belong to one the following three scenarios: 
\begin{itemize}
    \item[(i)] ``subspace collision'' happens for both subvectors, this happens $C \in \{0,\ldots,N_s-1\}$ times; or
    \item[(ii)] ``subspace collision'' happens for $o_1^j$ but not $o_2^j$, this happens $\Delta \in \{1,\ldots,N_s - C\}$ times; or
    \item[(iii)] ``subspace collision'' happens for neither subvector, this happens $N_s - C - \Delta$ times.
\end{itemize}
For scenario~(i), it follows from the \textit{Paley–Zygmund anti-concentration inequality} that,
\begin{equation}
    \textstyle \Pr \left( Z_i^j \leq \sqrt{ (1 - \alpha) (\sigma^2 + m^2) } \right) \leq \alpha,
\end{equation}
with a collision ratio $\alpha \in \left(0, 1\right)$.
We thus have, on the index set $\mathcal{S}_C$ (of cardinality $C$) on which collision occurs, that
\begin{equation}
\textstyle 0 \le \sum_{j \in \mathcal{S}_C} Z_i^j \le C \sqrt{ (1 - \alpha) (\sigma^2 + m^2) }.
\end{equation} 
For scenario~(iii), on non-collision subspaces (that corresponds to the index set $\mathcal{S}_{\neg C}$ having cardinality $N_s-C-\Delta$), note that the random variable $Z_1^j-Z_2^j$ is of zero mean and variance $2 \sigma^2$, it then follows from \textit{Chebyshev's inequality} that, for any $t>0$,
\begin{equation}
\Pr{(|Z_1^j-Z_2^j| \ge t)} \le 2\sigma^2/t^2,
\end{equation}
and thus
\begin{equation}
\textstyle 0 \le \sum_{j \in \mathcal{S}_{\neg C}} |Z_1^j-Z_2^j| \le (N_s-C-\Delta)t.
\end{equation}
For scenario~(ii), we have, for some $t' >0$ and a given $j$,
\begin{equation*}
\textstyle 0 \le Z_1^j \le \sqrt{ (1 - \alpha) (\sigma^2 + m^2) }, \quad \Pr{(|Z_2^j-m| \ge t')} \le (\sigma/t')^2.
\end{equation*}
Finally, adding things up leads to
\begin{equation}
\textstyle  \sum_{j \in \mathcal{S}_C \cup \mathcal{S}_{\neg C}}  |Z_2^j-Z_1^j| \le C \sqrt{ (1 - \alpha) (\sigma^2 + m^2) } + (N_s-C-\Delta)t,
\end{equation}
holds with probability at least $1 - 2(N_s - C - \Delta) \sigma^2/t^2$, for any $t > 0$.
Note that the \textit{worst case} is the case of $\Delta = 1$, taking then $t = c_1 (m - \sqrt{ (1 - \alpha) (\sigma^2 + m^2) } )$ and $t' = c_2 m+ (1 - c_2) \sqrt{ (1 - \alpha) (\sigma^2 + m^2) })$ with some $c_1, c_2 >0$ such that
\begin{equation}
\begin{aligned}
    (c_2 - c_1 & (N_s - C - 1)) m > \\
    & \left( C + c_2 - c_1 (N_s - C - 1) \right) \sqrt{ (1 - \alpha) (\sigma^2 + m^2) },
\end{aligned}
\end{equation}
then we have $\left\|o_1, q\right\| < \left\|o_2, q\right\|$.
It can be checked that this happens with probability at least $ 1 - \frac{2 (N_s - 1)}{c_1^2} \left(m/\sigma - \sqrt{ (1 - \alpha) (1 + m^2/\sigma^2) } \right)^{ - 2} - \left( c_2 \cdot m/\sigma + \sqrt{ (1 - \alpha) (1 + m^2/\sigma^2) } (1-c_2) \right)^{-2}$.
For given $m, \sigma^2$, we take $c_1 = \sqrt{ 8 (N_s -1) }/ (m/\sigma - \sqrt{ (1 - \alpha) (1 + m^2/\sigma^2) } )$, $c_2 = (e  - \sqrt{  (1 - \alpha) (1 + m^2/\sigma^2) })/( m/\sigma - \sqrt{  (1 - \alpha) (1 + m^2/\sigma^2) } ) $ and any $\alpha > \max (1/(1 + m^2/\sigma^2), 1 - e^2/(1 + m^2/\sigma^2))$, we have $c_1, c_2 > 0$ so that the success probability is at least $1/2 - 1/e^2$.
This concludes the proof of Theorem~\ref{theorem1}.
\end{proof}


\begin{theorem}[Quality Guarantee of ANN Search]\label{theorem2}
Given a query point $q \in \mathbb{R}^d$ and a dataset $\mathcal D$ consist of $n$ independent random vectors, then,
Algorithm~\ref{SC-Linear} can answer a $k$-ANN query for $q$ with probability at least $1/2$, with appropriate choices of the number of subspaces $N_s$, the collision ratio $\alpha \in (0,1)$, and the re-rank ratio $\beta \in (0,1)$, as a function of the data statistics, the dataset size $n$, and the dimension $d$.
\end{theorem}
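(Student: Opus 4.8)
The plan is to reduce the correctness of Algorithm~\ref{SC-Linear} to a single containment event and then lower-bound that event's probability using Theorem~\ref{theorem1} together with the independence of the $n$ data vectors. First I would observe that the re-ranking step (lines~11--15 of Algorithm~\ref{SC-Linear}) computes the \emph{exact} Euclidean distance $\|o_t,q\|$ for every point $o_t$ in the candidate pool consisting of the $\beta n$ points of largest SC-score, and returns the true top-$k$ among them. Hence Algorithm~\ref{SC-Linear} returns the exact $k$ nearest neighbors $o^*_1,\ldots,o^*_k$ of $q$ — and in particular answers the $k$-ANN query — whenever all of $o^*_1,\ldots,o^*_k$ lie in that pool, i.e., whenever each $o^*_i$ is among the $\beta n$ points of $\mathcal D$ of highest SC-score. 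So it suffices to lower-bound $\Pr[\mathcal E]$, where $\mathcal E := \{\mathrm{rank}(o^*_i)\le\beta n \text{ for all } i\le k\}$ and $\mathrm{rank}(o)$ denotes the position of $o$ when $\mathcal D$ is sorted by decreasing SC-score.

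Next I would fix a true neighbor $o^*_i$, write $\mathrm{rank}(o^*_i)=1+\#\{j:\mathrm{SCscore}(o_j)>\mathrm{SCscore}(o^*_i)\}$, and control the right-hand side from two sides. On one side, since $o^*_i$ is one of the $k$ globally closest points to $q$, the subvector-norm concentration used in the proof of Theorem~\ref{theorem1} (Paley--Zygmund applied to each $Z_i^j$, Chebyshev across the $N_s$ subspaces) shows that, for the parameters $N_s,\alpha$ of Theorem~\ref{theorem1} and $N_s$ large enough that the $O(\sqrt{N_s})$ fluctuations are dominated by the $(1-\alpha)N_s$ separation between the collision and non-collision regimes, $o^*_i$ collides with $q$ in all but a vanishing fraction of subspaces, so $\mathrm{SCscore}(o^*_i)\ge(1-\varepsilon)N_s$ with high probability. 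On the other side, I would first decouple the $n$ SC-scores — each ``collision'' being a global rank statistic over all $n$ points — by conditioning on the per-subspace empirical $\alpha$-quantiles $\tau_j$ of the in-subspace distances, which for i.i.d. data concentrate on their population values; conditionally, a point's subspace collisions are independent across $j$, so its SC-score is stochastically close to a binomial-type count, and $\Pr[\mathrm{SCscore}(o_j)\ge(1-\varepsilon)N_s]$ is small once the per-subspace collision probability is below $1-\varepsilon$ and $N_s$ is large. Thus the number of points with SC-score $\ge(1-\varepsilon)N_s$ is (conditionally) a sum of $n$ i.i.d. Bernoullis with common mean $f(\alpha,N_s)<1$ — in fact small, matching the empirically observed ``turning point'' of the Pareto L-shape — so choosing the re-rank ratio $\beta$ with $f(\alpha,N_s)\le\beta<1$ and applying a Chernoff bound makes $\Pr[\mathrm{rank}(o^*_i)>\beta n]$ exponentially small in $n$.

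Finally I would assemble the pieces by a union bound over the $k$ neighbors and over the $O(k)$ quantile-concentration and SC-score-concentration events, calibrating $\varepsilon$, the slack in $\beta$, and $N_s,\alpha$ — all as explicit functions of the data statistics $m,\sigma^2$, the size $n$, and the dimension $d$ — so that the total failure probability is at most $1/2$; this yields $\Pr[\mathcal E]\ge 1/2$ and hence the theorem.

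The step I expect to be the main obstacle is the quantitative ``Pareto / L-shape'' estimate of the middle paragraph: showing that only a strict (indeed small) fraction of the $n$ points attain a near-maximal SC-score, with a failure probability that survives a union bound over $\Theta(n)$ competitors. The difficulty is twofold. The subspace-collision events are rank statistics over the whole dataset, so the $n$ SC-scores are not literally independent and must be decoupled by conditioning on the empirical quantiles $\tau_j$ and controlling their deviation from the population quantiles. And Theorem~\ref{theorem1} supplies only a pairwise, constant-order ($1/2-1/e^2$) guarantee, whereas here one needs the high-SC-score count to concentrate strictly below $\beta n$, which forces an upgrade from the Chebyshev-type control of the per-subspace collision indicators to an exponential (Bernstein/Hoeffding) concentration over the $N_s$ independent subspaces, together with careful bookkeeping of how the admissible $\beta$ depends on $\alpha$, $N_s$, $m/\sigma$, and $d$.
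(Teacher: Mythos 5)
Your proposal is a genuinely different route from the paper's own argument, and the direction of the implication is essentially reversed. You argue: exact re-ranking reduces correctness to the containment event that all true neighbors $o^*_1,\ldots,o^*_k$ land in the $\beta n$-point candidate pool; then (a) each true neighbor has near-maximal SC-score with high probability, (b) only a small fraction of the $n$ points attain such scores (after decoupling the rank-statistic dependence by conditioning on per-subspace $\alpha$-quantiles and applying Chernoff), and (c) a union bound over the $k$ neighbors gives success probability at least $1/2$. The paper instead considers a single data point whose collision count is $C=N_s$ (i.e., a point that survives the SC-score filter when $\beta$ is small), bounds its squared distance by $N_s\sqrt{(1-\alpha)(\sigma^2+m^2)}$ via Paley--Zygmund, and then invokes order statistics: assuming the squared distances $\|z_i\|^2$ are (approximately) Gaussian $\mathcal{N}(N_s m, N_s\sigma^2)$, it plugs in closed-form approximations for the mean $E_{k,n}$ and variance $V_{k,n}$ of the $k$-th smallest of $n$ i.i.d. samples and applies Chebyshev with an explicit $t$ to conclude that such a point's distance falls among the top-$k$ smallest with probability at least $1/2$. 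In short, the paper shows that fully colliding candidates are genuinely among the $k$ nearest (with the role of $\beta$ treated only informally, ``chosen so that $C=N_s$''), while you show that the $k$ nearest are among the retrieved candidates; your framing addresses the recall event more directly and makes $\beta$ explicit, at the cost of needing the decoupling and counting machinery the paper avoids, whereas the paper's route is much shorter but leans on an explicit distributional model for $\|z_i\|^2$ and on order-statistic approximations.

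One caveat on your middle step: the claim that a true neighbor collides in ``all but a vanishing fraction'' of subspaces with high probability does not hold for fixed constant $\alpha$ and arbitrary $N_s$. Under the same i.i.d.-subspace modeling used in the paper, conditioning on $\|z\|^2$ being near the minimum of $n$ samples shifts each per-subspace component down by roughly $\sigma\sqrt{2\ln n / N_s}$, while the per-subspace $\alpha$-quantile sits only a constant number of standard deviations below the mean; so the per-subspace collision probability of the nearest neighbor is bounded away from $1$ unless $N_s$ is small relative to $\ln n$ or $\alpha$ is taken fairly large. This is repairable within the latitude of the theorem statement (which lets $N_s,\alpha,\beta$ depend on the data statistics, $n$, and $d$): either constrain $N_s$ accordingly, or replace the $(1-\varepsilon)N_s$ threshold by some $\theta N_s$ with $\alpha<\theta<$ the neighbor's per-subspace collision rate, and rerun your binomial/Chernoff count against that threshold. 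With that adjustment your plan goes through in spirit, and is arguably more faithful to what Algorithm~\ref{SC-Linear} actually needs than the paper's own order-statistics argument; but as literally stated, step (a) is the piece that would fail in the practically relevant regime of constant $\alpha$ and moderate $N_s$.
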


\begin{proof}[Proof of Theorem~\ref{theorem2}]
Without loss of generality, we assume $d/N_s$ is an integer.
Following the notations in the proof of Theorem~\ref{theorem1}, we denote $o_i^j \in \mathbb{R}^{\frac{d}{N_s}}$ and $q^j$ the subvector of $o_i$ and $q$ in the $j$-th subspace, respectively, with $i \in \{1, \ldots, n \}$ and $j \in \{1,\ldots,N_s\}$.
Let $z_i= |o_i-q| \in \mathbb{R}^d$ denote the vector of absolute difference between $o_i$ and $q$, and $z_i^j \in \mathbb R^{\frac{d}{N_s}}$ its subvectors. 
Assume that $Z_i^j \equiv \left\|z_i^j\right\|^2$ are independent random variables across index $i$ and $j$, with mean $m > 0$ and variance $\sigma^2$, we obtain by independence that
\begin{equation}
    \textstyle \mathbb{E}[\| z_i \|^2] = N_s m, \quad {\rm Var}[\| z_i \|^2] =  N_s \sigma^2,
\end{equation}
for which each index $j$ \textit{must} belong to one the following two cases:
\begin{itemize}
    \item[(i)] ``subspace collision'' happens for subspace $o_i^j$, this happens for $C \in \{0,\ldots,N_s\}$ times for the data vector $o_i$; or
    \item[(ii)] ``subspace collision'' does not happen for $o_i^j$, this happens for  $N_s - C $ times for the data vector $o_i$.
\end{itemize}
Similar to the proof of Theorem~\ref{theorem1}, in each case, we can bound the value of $Z_i^j$ as follows.
\begin{itemize}
    \item[(i)] By \textit{Paley–Zygmund inequality} that 
    \begin{equation}
        \textstyle \Pr \left( Z_i^j \leq \sqrt{ (1 - \alpha) (\sigma^2 + m^2) } \right) \leq \alpha,
    \end{equation}
    with a collision ratio $\alpha \in \left(0, 1\right)$.
    \item[(ii)] By \textit{Chebyshev's inequality} that, for any $t>0$,
    \begin{equation}
    \Pr{(|Z_i^j-m| \ge t)} \le \sigma^2/t^2.
    \end{equation}
\end{itemize}
We thus have that
\begin{equation*}
    (N_s - C) (m -t) \leq \| z_i \|^2 \leq C \sqrt{ (1 - \alpha) (\sigma^2 + m^2) } + (N_s - C) (m+t),
\end{equation*}
holds with probability at least $1 - (N_s - C)\sigma^2/t^2$.
In the following, without loss of generality, we only discuss here the case where the re-rank ratio $\beta$ is chosen so that $C = N_s$, for which we should have $k$ and $\beta$ both small, and $\| z_i \|^2 \leq N_s \sqrt{ (1 - \alpha) (\sigma^2 + m^2) }$.
Other scenarios with $C < N_s$ can be similarly studied by increasing $\beta$ accordingly.
In this setting, we show in the following, that the obtained range of the squared Euclidean distance $\| z_i \|^2 \in (0, N_s \sqrt{ (1 - \alpha) (\sigma^2 + m^2) }]$ is indeed among the top-$k$ smallest in a set of $n$ samples, with a controlled probability.
The moments and joint distribution of the top-$k$ ordered (e.g., smallest) values in a set of $n$ i.i.d.\@ samples has been extensively studied in the literature of \emph{order statistics}~\cite{2003Bounds,2003Expected}.

For the simplicity of exposition, we discuss here the case where the squared distances are normally distributed $\| z_i \|^2 \sim \mathcal{N}(N_s m,N_s \sigma^2)$ (which, as we shall see below, admits closed-form approximation for moments and leads to an explicit control of the success probability as a function of all tuning parameters).
To treat general (e.g., non-Gaussian) distributions, it suffices to replace the (approximations of) first and second moments in \eqref{eq:approx_expectation_gaussian}~and~\eqref{eq:approx_variance_gaussian} using PDF and CDF of the specific distribution.
In the normal case, the expectation of the $k$-th order statistic (i.e., the expected value of the $k$ smallest value) in a set of $n$ sample with $n$ large is approximately given by
\begin{equation}\label{eq:approx_expectation_gaussian}
    E_{k,n} = N_s m + \sqrt{ N_s \sigma^2 } \cdot \Phi^{-1} \left( \frac{k - \gamma}{ n - 2 \gamma + 1 } \right), \quad \gamma = 0.375,
\end{equation}
and variance approximately given by 
\begin{equation}\label{eq:approx_variance_gaussian}
    V_{k,n} = N_s \sigma^2 \cdot \frac{k (n-k+1)}{(n+1)^2 (n+2)} \left( \phi \left( \Phi^{-1} \left( \frac{k}{n+1} \right) \right) \right)^{-2},
\end{equation}
with $\phi(\cdot)$ and $\Phi(\cdot)$ the PDF and CDF of standard Gaussian distribution.
These approximations are known to be rather accurate as long as $n \geq 9$, see~\cite{blom1958Statistical,baglivo2005Mathematica}.
Thus, it follows again from \textit{Chebyshev's inequality} that the probability of correct answering is at least $1 - V_{k,n}/t^2$ for $t$ such that $t > N_s m \sqrt{ (1-\alpha) (1+ \sigma^2/m^2) } - E_{k,n}$.
Taking $N_s$ and $\alpha \in (0,1)$ and $t = \sqrt{2 N_s} \frac{\sigma}{n} \frac{k(n - k +1)}{n} \left( \phi \left( \Phi^{-1} \left( \frac{k}{n+1} \right) \right) \right)^{-1} $, we conclude that the probability of correct answering is at least $1/2$.
This concludes the proof of Theorem~\ref{theorem2}.
\end{proof}

The proof of Theorem~\ref{theorem1} relies on the idea that in cases where ``subspace collision'' happens for both or neither of the two subvectors, their Euclidean distance should be approximately the same; while in cases where ``subspace collision'' happens for one but not the other, their distance should differ by a significant amount, making the overall distance different.
A similar argument holds for a single data in the proof of Theorem~\ref{theorem2}.
Intuitively, larger values of collision ratio $\alpha$ and re-rank ratio $\beta$ lead to higher computational complexity, while too small $\alpha$ and $\beta$ reduce the accuracy of the proposed approach.
As a consequence, the parameters $\alpha$ and $\beta$ should be chosen (within a range) to achieve an optimal ``computation-accuracy trade-off.''
This is consistent with the choice of $\alpha,\beta$ in the proofs of Theorems~\ref{theorem1}~and~\ref{theorem2} above, and with our experimental conclusions in Section~\ref{sec:parameter_study}.

\section{The SuCo Method} \label{our_method}

As analyzed in Section~\ref{naive_method} and Section~\ref{preliminary_experiments}, SC-Linear's query efficiency is limited because when counting collisions in each subspace, it is necessary to calculate the Euclidean distance between the query and all data points (lines 4-10 in Algorithm~\ref{SC-Linear}). 
Therefore, the problem that needs to be solved is transformed into \enquote{\textit{how to find the $\alpha \cdot n$ data points closest to the query in each subspace as quickly and accurately as possible.}}

In this section, we propose SuCo to accelerate collision counting in each subspace by building lightweight indexes and designing query strategies. 
Figure~\ref{overview} provides a high-level overview of the SuCo workflow, including index construction and query answering. 

\begin{figure} [tb]
	\subfigcapskip=5pt
	\subfigure[Index construction.]{
		\includegraphics[width=0.95\linewidth]{./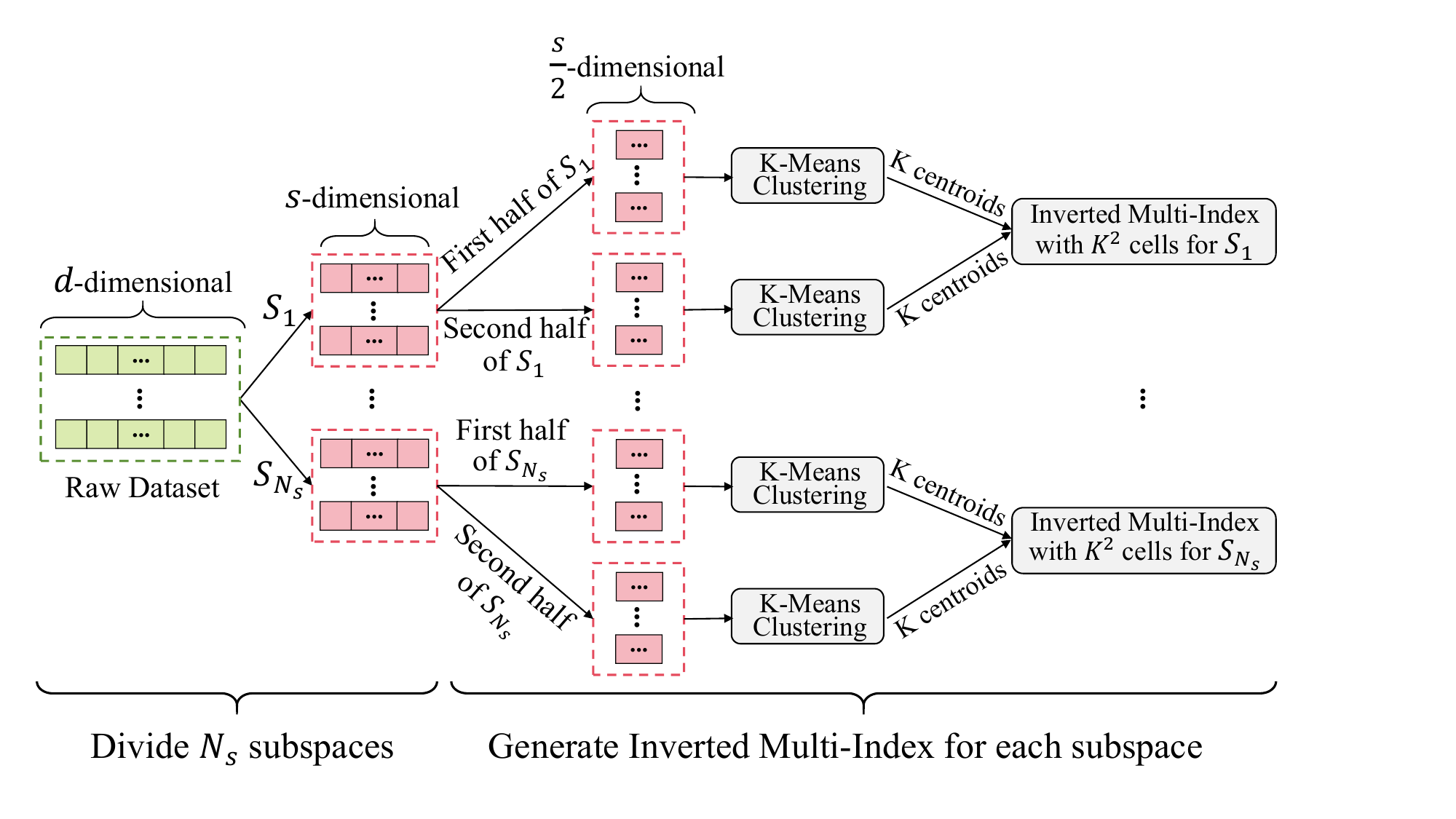}
		\label{overview_index}}
	\subfigure[Query answering.]{
		\includegraphics[width=0.95\linewidth]{./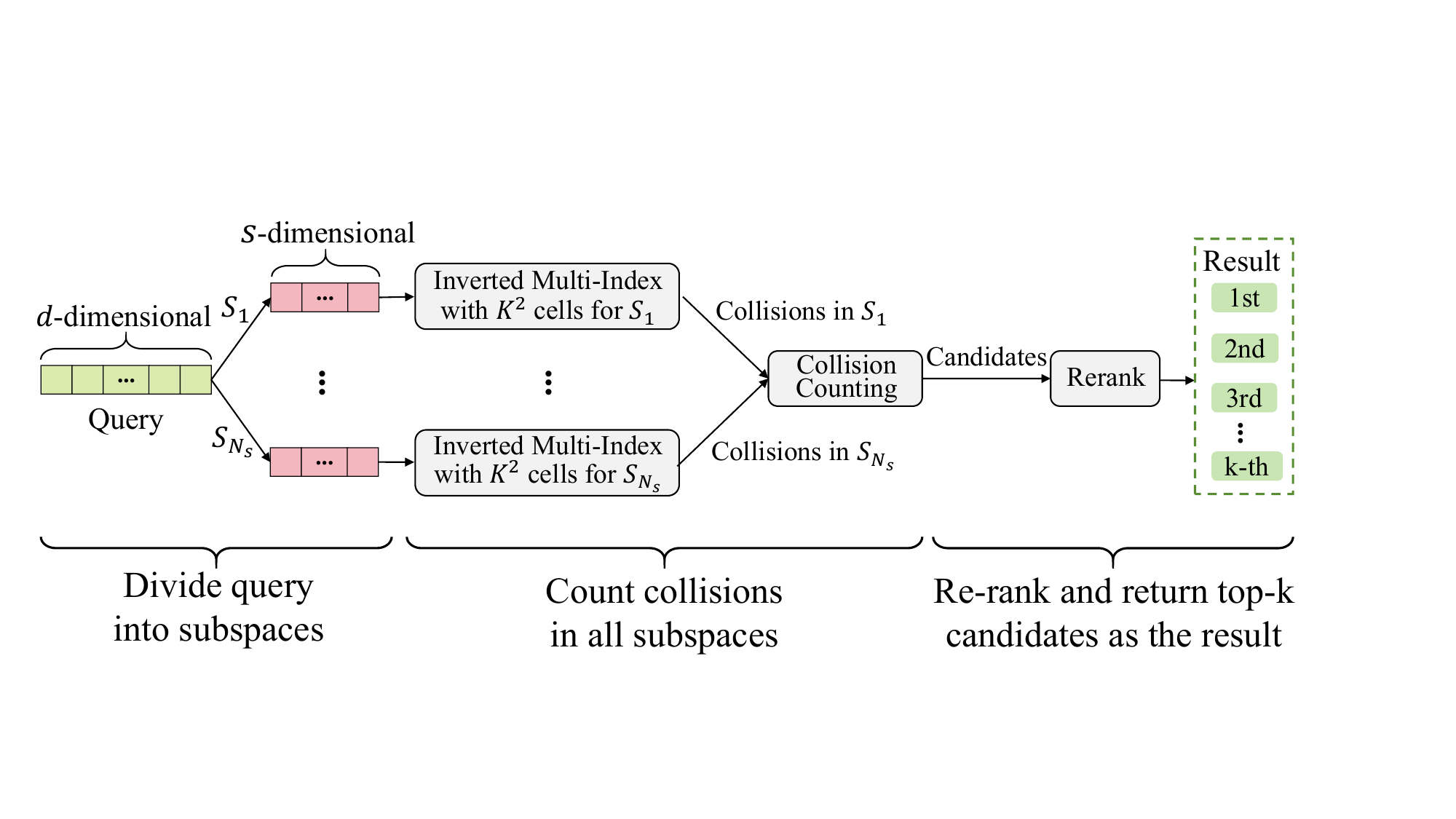}
		\label{overview_query}}
	\caption{Overview of the SuCo workflow.}
	\label{overview}
\end{figure}

\begin{figure*} [tb]
	\subfigcapskip=5pt
	\subfigure[K-means clustering with inverted index. K-means clustering is performed in the 2D space. The inverted index holds $K$ 2D centroids (red stars in the figure).]{
		\includegraphics[width=0.38\linewidth]{./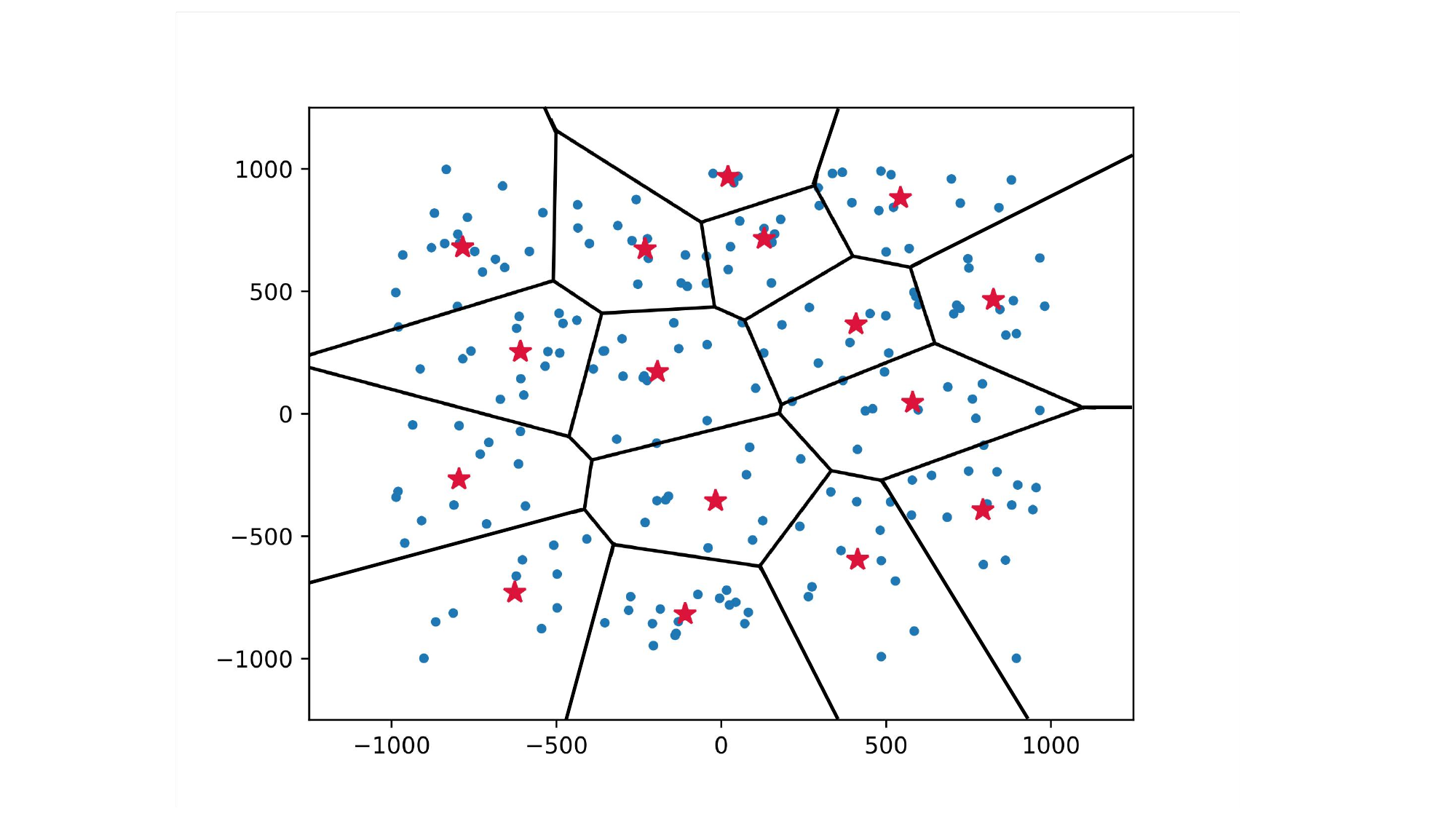}
		\label{kmeans_voronoi}}\hspace{12mm}
	\subfigure[K-means clustering with inverted multi-index. K-means clustering is performed in each dimension. The inverted multi-index holds $\sqrt{K}$ 1D centroids for each dimension. $K$ 2D clusters are obtained through Cartesian product of 1D clusters.]{
		\includegraphics[width=0.38\linewidth]{./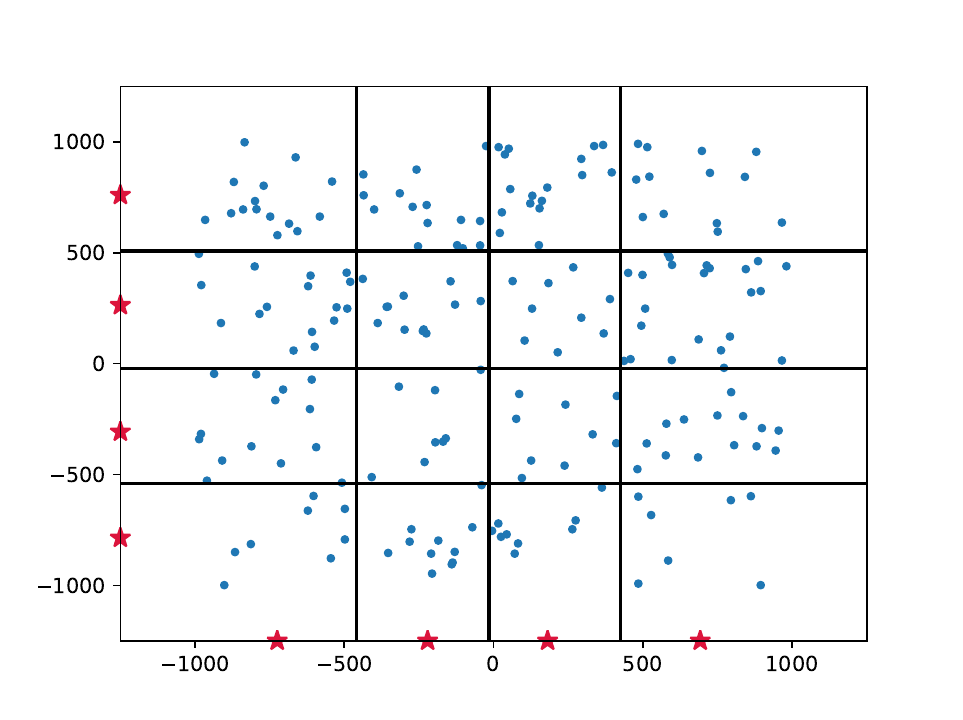}
		\label{kmeans_imi}}
	\caption{Illustration of K-means clustering ($K$=16) in 2D space using inverted index or inverted multi-index.}
	\label{kmeans_index}
\end{figure*}

\subsection{Index Construction} \label{indexing_phase}

To count collisions as quickly and accurately as possible, i.e., find the $\alpha \cdot n$ data points closest to the query in each subspace, we need to construct an index in each subspace, which should be able to quickly locate data points around the query. 

There are two requirements for the index structure: 
(1) The index structure should be lightweight enough. The index needs to be constructed independently in all $N_s$ subspaces, and $N_s$ is 6-12 in practice. 
If the index structure is complex, it requires a long indexing time and a large memory footprint to construct $N_s$ indexes. 
Therefore, the graph-based and tree-based indexes that are currently popular in ANN search are no longer applicable.
(2) The query method for collision counting needs to be simple and fast enough. 
Collision counting needs to be performed in all $N_s$ subspaces, so the query speed of the index for each subspace needs to be fast enough. 
Fortunately, we don't need to obtain few (usually $k \in \left[1, 100\right]$) but accurate query results like graph-based and tree-based indexes. 
We only need to find many ($\alpha \cdot n$, usually $\alpha \in \left[0.01, 0.1\right]$) but not necessarily accurate data points as collisions.

Clustering-based indexes can meet the above two requirements. 
The index structure is lightweight enough because only the cluster centroids and data point assignments for each cluster need to be recorded. 
The query method for collision counting is simple and fast enough because only the distance between the query and each centroid needs to be calculated. 
Each centroid can represent all data points in its cluster, and if the centroid is close enough to the query, all data points within the cluster can be considered to collide with the query.
Ordinary K-means clustering uses an inverted index to store which data points are in each cluster. 
However, to achieve fine-grained indexing, many clusters need to be constructed ($K$ is large), which will restrict indexing and query answering efficiency. 
The inverted multi-index (IMI)~\cite{babenko2014inverted} replaces the vector quantization inside the inverted index of K-means with the product quantization~\cite{jegou2010product}. 
Figure~\ref{kmeans_index} shows the difference between K-means clustering using the inverted index and the inverted multi-index. 
In Figure~\ref{kmeans_voronoi}, K-means clustering is performed in the 2D space, and the inverted index holds $K$=16 2D centroids and the IDs of all data points belonging to each cluster. 
In Figure~\ref{kmeans_imi}, K-means clustering is performed in each dimension, and the inverted multi-index holds $\sqrt{K}=4$ 1D centroids for each dimension and the IDs of all data points belonging to the cluster. Then, $K$=16 2D clusters can be obtained through the Cartesian product of 1D clusters.
Therefore, given the number of clusters $K$, K-means clustering using the inverted multi-index can reduce the time complexity from $\mathcal{O}(K \cdot n \cdot d \cdot t)$ to $\mathcal{O}(\sqrt{K} \cdot n \cdot d \cdot t)$ compared to using the inverted index, $t$ is the number of iterations.

\begin{algorithm}[tb]
	\caption{Create Index}                                                                           
	\label{create_index}
	\LinesNumbered
	\KwIn{A dataset $\mathcal D$, dataset size $n$, dimensionality of data points $d$, subspace number $N_s$, number of K-means clusters $K$, number of K-means iterations $t$}
	\KwOut{Centroid list $centroids$ and inverted multi-index list $IMIs$}
        Initialize a list $centroids$ of length $2N_s$ and a list $IMIs$ of length $N_s$; \\
	Divide the $d$-dimensional space into $N_s$ subspaces: $S_1, S_2,\ldots, S_{N_s}$; \\
        Divide all data points into $N_s$ subspaces: $\mathcal D_1,\ldots,\mathcal D_{N_s}$; \\
	\For{$i=1$ to $N_s$}{
            Initialize a map $IMI_i$ as the inverted multi-index; \\
            Further divide $S_i$ into two subspaces $S^1_i$ and $S^2_i$; 
            \\ Divide $\mathcal D_i$ into two subspaces: $\mathcal D^1_i$ and $\mathcal D^2_i$; \\
            $centroids^1_i, assignments^1_i \leftarrow$ \textbf{call} Kmeans$(\mathcal D^1_i, \sqrt{K}, t)$; \\
            $centroids^2_i, assignments^2_i \leftarrow$ \textbf{call} Kmeans$(\mathcal D^2_i, \sqrt{K}, t)$; \\
            $centroids.append(centroids^1_i, centroids^2_i)$; \\
            \For{$j=0$ to $n-1$}{
            $IMI_i[assignments^1_i[j], assignments^2_i[j]].append(j)$; \\
            }
            $IMIs.append(IMI_i)$; \\
	}
	\Return $centroids$ and $IMIs$; \\
\end{algorithm}

Figure~\ref{overview_index} shows the workflow of SuCo to construct the index, and Algorithm~\ref{create_index} gives the pseudocode.
First, we need to initialize two lists, one to store the centroids obtained by clustering in each subspace (the first part of our index) and the other to store the inverted multi-indexes constructed in each subspace (the second part of our index) (line 1). 
Then, we divide the $d$-dimensional original space into $N_s$ subspaces ($S_1,S_2,\ldots,S_{N_s}$) with $s=\frac{d}{N_s}$ dimensions, and divide all data points into each subspace (lines 2-3). 
Assume without loss of generality that $\frac{s}{2}$ is an integer.
For each subspace, based on the product quantization concept of IMI, we further divide each $s$-dimensional subspace into two $\frac{s}{2}$-dimensional subspaces (lines 4-7). 
Although each subspace can be divided into more than two parts, both the IMI paper~\cite{babenko2014inverted} and our experiments show that dividing each space into two parts achieves the best performance. 
Then, for $S_i$, we perform K-means clustering on all data points in two $\frac{s}{2}$-dimensional subspaces ($S^1_i$ and $S^2_i$) to obtain the centroids of clusters $centroids^1_i, centroids^2_i$ and the data point assignments (belongs to which cluster) $assignments^1_i, assignments^2_i$ (lines 8-9). 
We put all the obtained centroids into a list as the first part of our index (line 10). 
To improve query efficiency, in each subspace $S_i$, we use the obtained $assignments^1_i, assignments^2_i$ to construct a map $IMI_i$ as the inverted multi-index, and we put all the obtained IMIs into a list as the second part of our index (lines 11-13).
Finally, the centroids and IMI lists are returned as the SuCo indexes.

The original intention of IMI is to achieve as fine-grained data points division as possible, so $K$ is very large ($K=2^{28}$ is used in its paper~\cite{babenko2014inverted}). 
In contrast, in the subspace collision framework, we only need to find a large number of data points (3\%-5\%) in the IMI of each subspace in a coarse-grained way, so $K$ is much smaller ($K \in \left[2^{10}, 2^{12}\right]$).
That is why SuCo's indexing performance outperforms all competitors (shortest time and least memory footprint); the experimental results are shown in Section~\ref{comparison}.

\subsection{Query Answering} \label{query_phase}

To support ANN search based on the subspace collision framework, two query strategies need to design: 
(1) how to count collisions for the IMI of each subspace; 
(2) how to combine the collisions of IMIs in all subspaces to obtain the final ANN result.

The \textit{Multi-sequence} algorithm was proposed with IMI to query IMI and obtain data points that are close to the query~\cite{babenko2014inverted}. 
However, the \textit{Multi-sequence} algorithm uses a priority queue to hold the candidate clusters, 
For the priority queue, both insertion and popping operations require logarithmic time complexity, which is time-consuming. 
Therefore, we design a new algorithm called \textit{Dynamic Activation} to support query IMI without a priority queue.
\textit{Dynamic Activation} algorithm returns the same query results as \textit{Multi-sequence} algorithm.
The experimental results in Section~\ref{imiquery_compare} show that the efficiency of \textit{Dynamic Activation} algorithm is up to 40\% higher than that of the \textit{Multi-sequence} algorithm.

\begin{figure}[tb] 
	\centering
	\includegraphics[width=0.95\linewidth]{./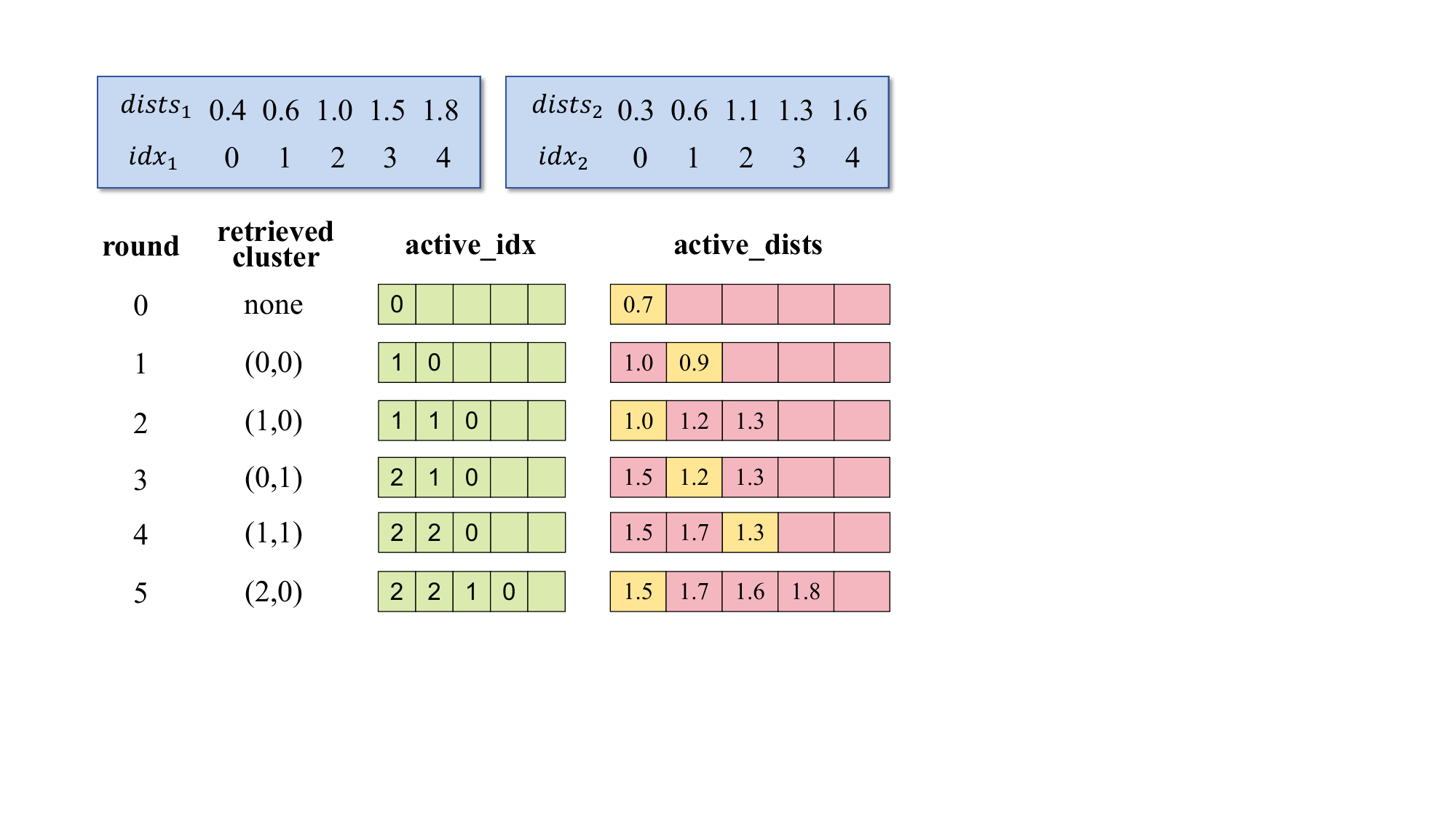}
	\caption{An illustration of the \textit{Dynamic Activation} algorithm}
	\label{dynamic_activation_example}
\end{figure}

\begin{algorithm}[t]
	\caption{Dynamic Activation}     
	\label{dynamic_activation}
	\LinesNumbered
	\KwIn{Collision ratio $\alpha$, dataset size $n$, number of K-means clusters $K$, distances and indices of the first and second subspace $dists_1,idx_1,dists_2,idx_2$, the inverted multi-index $IMI$}
	\KwOut{Clusters containing data points that collide with $q$}
        Initialize a list $retrieved\_clusters$, two arrays $active\_idx$ and $active\_dists$ of length $\sqrt{K}$; \\
        $retrieved\_num \leftarrow 0$; \\
        $active\_idx[0] \leftarrow 0$; \\
        $active\_dists[0] \leftarrow dists_1[idx_1[0]]+dists_2[idx_2[0]]$; \\
        \While{\textit{TRUE}}{
        $pos \leftarrow$ index of the minimum element in $active\_dists$; \\
        $cluster \leftarrow IMI[idx_1[pos], idx_2[active\_idx[pos]]]$; \\
        $retrieved\_clusters.append(cluster)$; \\
        $retrieved\_num$ += $sizeof(cluster)$; \\
        \If{$retrieved\_num \geq \alpha \cdot n$}{
            break; \\
        }
        \If{$active\_idx[pos]==0$ and $pos<\sqrt{K}-1$}{
            $active\_idx[pos+1] \leftarrow 0$; \\
            $active\_dists[pos+1] \leftarrow dists_1[idx_1[pos+1]]+dists_2[idx_2[0]]$; \\
        }
        \If{$active\_idx[pos]<\sqrt{K}-1$}{
            $active\_idx[pos]$++; \\
            $active\_dists[pos] \leftarrow dists_1[idx_1[pos]]+dists_2[idx_2[active\_idx[pos]]]$; \\
        }
        }
        
	\Return $retrieved\_clusters$; \\
\end{algorithm}

Algorithm~\ref{dynamic_activation} gives the pseudocode of the \textit{Dynamic Activation} algorithm.
Given the distances between a query and all centroids in two subspaces $dists_1,dists_2$ and the indices of sorted distances in ascending order $idx^{i}_1,idx^{i}_2$, we need to select one cluster from each subspace to form a cluster in $IMI$. 
The sum of their distances (one from $dists_1$, the other from $dists_2$) is used as the selection criterion. 
The smaller the distance sum of clusters in $IMI$, the earlier they are retrieved.
We first activate a cluster in the first subspace, which has the minimum distance in $dists_1$ (lines 3-4).
Activated clusters can be combined with clusters from the second subspace, and the two clusters with the minimum sum of distances will be combined as a retrieved cluster in $IMI$ (lines 6-9). 
When the number of data points in all retrieved clusters reaches the collision requirement (lines 10-11), the algorithm stops and returns the retrieved clusters (line 18).
Otherwise, we dynamically activate the clusters in the first subspace in ascending order of $dists_1$ (lines 12-14) and update the combination information of the activated clusters (lines 15-17).

Figure~\ref{dynamic_activation_example} gives a running example of Algorithm~\ref{dynamic_activation}.
Algorithm~\ref{dynamic_activation} runs in a multi-round way, and in each round, it retrieves a cluster whose sum of $dists_1$ (among the activated indices) and $dists_2$ is minimal.
In the initialization round (round 0), the index 0 of $dists_1$ is activated (with initial value 0).
In round 1, since only index 0 of $dists_1$ is activated, the cluster with the joint index of $dists_1$ and $dists_2$ (0, 0) is retrieved, and its distance to the query is 0.7.
Then, we activate index 1 of $dists_1$ (with initial value 0) and increase the value of index 0 to 1.
In round 2, two indices of $dists_1$ have been activated (0 and 1), and we retrieve the cluster with the joint index (1, 0) because it has a smaller distance (0.9) than the cluster with the joint index (0, 1).
Then, we activate index 2 of $dists_1$ (with initial value 0) and increase the value of index 1 to 1.
In round 3, three indices of $dists_1$ have been activated (0, 1, and 2), and we retrieve the cluster with the joint index (0, 1) because it has a minimum distance of 1.0.
It should be noted that no new index of $dists_1$ is activated in this round, because for the retrieved cluster, the value of the activated index is not 0 but 1.
The operations for subsequent rounds are the same as the previous rounds.

Algorithm~\ref{dynamic_activation} supports collision counting for the IMI of each subspace. 
We further design Algorithm~\ref{kann_query} to combine the collisions of IMIs in all subspaces and support $k$-ANN queries, and Figure~\ref{overview_query} shows the workflow of SuCo to answer $k$-ANN queries.
First, we initialize an array to record the SC-score of all data points and divide the query $q$ into $N_s$ subspaces (lines 1-2). 
For each subspace, we calculate the distance between the query and all centroids in the two divided parts and then obtain the indices of sorted distances in ascending order (lines 3-8).
Then, we call Algorithm~\ref{dynamic_activation} to obtain clusters containing data points that collide with $q$ in each subspace and count collisions for all data points (lines 9-12). 
Finally, we re-rank the $\beta \cdot n$ data points with the largest SC-scores and return the \textit{top}-$k$ points closest to $q$ among them (lines 13-17).

\begin{algorithm}[t]
	\caption{$k$-ANN Query}        
	\label{kann_query}
	\LinesNumbered
	\KwIn{A dataset $\mathcal D$, dataset size $n$, dimensionality of data points $d$, a query point $q$, number of results $k$, subspace number $N_s$, collision ratio $\alpha$, re-rank ratio $\beta$, number of K-means clusters $K$, the centroid list $centroids$, the IMI list $IMIs$}
	\KwOut{$k$ nearest points to $q$ in $\mathcal D$}
        Initialize an array $SC\_scores$ of length $n$ and set to 0; \\
        Divide $q$ into $N_s$ subspaces: $q^1,\ldots,q^{N_s}$; \\
	\For{$i=1$ to $N_s$}{
            Divide $q^{i}$ into two subspaces $q^{i}_1$ and $q^{i}_2$; \\
            $centroids_1 \leftarrow centroids[2(i-1)]$; \\
            $centroids_2 \leftarrow centroids[2(i-1)+1]$; \\
            Calculate the distance between each centroid in $centroids_1$($centroids_2$) and $q^{i}_1$($q^{i}_2$), obtain $dists^{i}_1$($dists^{i}_2$); \\
            Obtain the indices of sorted $dists^{i}_1$($dists^{i}_2$) in ascending order: $idx^{i}_1$($idx^{i}_2$); \\
            $clusters \leftarrow$ \textbf{call} DynamicActivation $(\alpha,n,K,dists^{i}_1,idx^{i}_1,dists^{i}_2,idx^{i}_2,IMIs[i-1])$; \\
            \For {each $cluster \in clusters$}{
                \For{each $point\_id \in cluster$}{
                    $SC\_scores[point\_id]$++; \\
                }
            }
	}
        Sort $SC\_scores$ in descending order; \\
        \For{$z=1$ to $\beta \cdot n$}{
            Select the point with the $z$-th-largest SC-score in $SC\_scores$ whose index in $\mathcal D$ is $t$; \\
            Calculate the Euclidean distance between $o_t$ and $q$; \\
        }
	\Return the \textit{top}-$k$ points closest to $q$ in the $\beta \cdot n$ candidates; \\
\end{algorithm}

\subsection{Complexity Analysis}

For index construction, SuCo has time cost $\mathcal{O}(n(\sqrt{K}dt+N_s))$ and space cost $\mathcal{O}(\sqrt{K}d+nN_s)$.
The time cost comes from three parts: 
(1) divide all data points into $N_s$ subspaces, $\mathcal{O}(nd)$; 
(2) use the K-means algorithm to cluster all data points within two parts of each subspace, $\mathcal{O}(\sqrt{K} n d t)$; 
(3) build a map as the inverted multi-index in each subspace, $\mathcal{O}(n N_s)$.
Therefore, the total time cost is $\mathcal{O}(n(\sqrt{K}dt+N_s))$.
The space cost comes from two parts: 
(1) centroids obtained by clustering in each subspace, $\mathcal{O}(\sqrt{K} d)$; 
(2) the inverted multi-indexes constructed in each subspace, $\mathcal{O}(n N_s)$.
Therefore, the total space cost is $\mathcal{O}(\sqrt{K}d+nN_s)$.

For query answering, SuCo has time cost $\mathcal{O}(n(\alpha N_s+\beta d+\log (\beta n))+\sqrt{K}(d+\alpha \sqrt{K} \log \sqrt{K}))$.
The time cost comes from seven parts: 
(1) calculate the distance between all centroids and the query point in each subspace, $\mathcal{O}(\sqrt{K} d)$;
(2) obtain the indices of sorted distances in ascending order in each subspace, $\mathcal{O}(\sqrt{K} \log \sqrt{K})$;
(3) invoke the \textit{Dynamic Activation} algorithm to obtain clusters containing data points that collide with the query in each subspace, $\mathcal{O}(\alpha K \log \sqrt{K})$;
(4) count collisions for all data points in each subspace, $\mathcal{O}(\alpha n N_s)$;
(5) partial sort and select candidates with large SC-score, $\mathcal{O}(n \log (\beta n))$;
(6) calculate the distance between all candidates and the query, $\mathcal{O}(\beta n d)$;
(7) partial sort and return the \textit{top}-$k$ points closest to the query among candidates, $\mathcal{O}(\beta n \log k)$.
Therefore, the total time cost is $\mathcal{O}(n(\alpha N_s+\beta d+\log (\beta n))+\sqrt{K}(d+\alpha \sqrt{K} \log \sqrt{K}))$.

\begin{table}
	\centering
	\caption{Summary of datasets}
	\label{datasets}
	\begin{tabular}{cccc}
		\toprule
		\textbf{Dataset} & \textbf{Cardinality} & \textbf{Dimensions} & \textbf{LID} \\
		\midrule
		Deep1M & 1,000,000 & 256 & 37.26\\
        Gist1M & 1,000,000 & 960 & 70.15\\
		Sift10M & 10,000,000 & 128 & 22.05\\
        Microsoft SPACEV10M & 10,000,000 & 100 & 41.72\\
        Yandex Deep10M & 10,000,000 & 96 & 29.10\\
		TinyImages80M & 79.302,017 & 384 & 61.75\\
		Sift100M & 100,000,000 & 128 & 23.79\\
		Yandex Deep100M & 100,000,000 & 96 & 29.61\\
		\bottomrule
	\end{tabular}
\end{table}

\section{Experimental Evaluation}

In this section, we first compare the performance of the \textit{Dynamic Activation} algorithm we designed for IMI with the original \textit{Multi-sequence} algorithm.
Then, we study the performance of SuCo by self-evaluating and conducting comparative experiments with state-of-the-art ANN methods (with and without theoretical guarantees). 
All methods are implemented in C/C++. 
SuCo is implemented in C++ and compiled using -O3 optimization, useing OpenMP for parallelization and SIMD for accelerating calculations~\cite{sucocode}.
All experiments are conducted on a machine with 2 AMD EPYC 9554 CPUs @ 3.10GHz and 756 GB RAM, running on Ubuntu 22.04.

\subsection{Experimental Setup}

\noindent \textbf{Datasets and Queries.} 
We use eight real-world datasets for ANN search, whose key statistics are shown in Table \ref{datasets}. 
LID\footnote{There is no unified method to estimate the LID, and we use an open-source toolkit~\cite{bac2021scikit} with theoretical basis~\cite{johnsson2014low} to calculate it for each dataset.} is the local intrinsic dimensionality and a larger LID implies harder dataset.
Note that the points in Sift10M and Sift100M are randomly chosen from the Sift1B dataset\footnote{http://corpus-texmex.irisa.fr/}. 
Similarly, the points in Microsoft SPACEV10M, Yandex Deep10M, and Yandex Deep100M are also randomly chosen from their 1B-scale datasets\footnote{https://big-ann-benchmarks.com/neurips21.html}. 
We randomly select 100 data points as queries and remove them from the original datasets.

\noindent \textbf{Evaluation Measures.}
We adopt six measures to evaluate the performance of all methods: indexing time, index memory footprint, query time, query per second (QPS), recall, and mean relative error (MRE)~\cite{patella2008many,patella2009approximate,aumuller2020ann}, where the indexing time, query time, and QPS evaluate the efficiency of methods, the index memory footprint evaluates the storage resource consumption of methods, and the recall and MRE evaluate the quality of returned results. 
For a query $q$, if the returned result set is $R=\{o_1,\ldots,o_k\}$ and the exact $k$-NN set is $R^*=\{o_1^*,\ldots,o_k^*\}$, recall is defined as $\frac{\lvert R \cap R^* \rvert}{k}$, and mean relative error (MRE) is defined as $\frac{1}{k} \sum_{i=1}^{k} \frac{\left\|q,o_i\right\|-\left\|q,o_i^*\right\|}{\left\|q,o_i^*\right\|}$.

\noindent \textbf{Benchmark Methods.}
We compare SuCo with seven state-of-the-art in-memory ANN methods. 
DET-LSH~\cite{detlsh}, DB-LSH~\cite{dblsh}, PM-LSH~\cite{pmlsh}, and LCCS-LSH~\cite{lccslsh} are the state-of-the-art LSH-based methods that provide theoretical guarantees.
They are single-threaded because the authors did not design parallel methods.
OPQ~\cite{ge2013optimized}, Annoy~\cite{annoy}, HNSW~\cite{malkov2018efficient}, and SPTAG~\cite{chen2018sptag} are the state-of-the-art VQ-based, tree-based, graph-based, and tree-graph hybrid methods, respectively, which do not provide theoretical guarantees.
They use OpenMP and/or Pthreads for parallelization, and SIMD for accelerating calculations.


\begin{figure}[tb] 
	\centering
	\includegraphics[width=0.95\linewidth]{./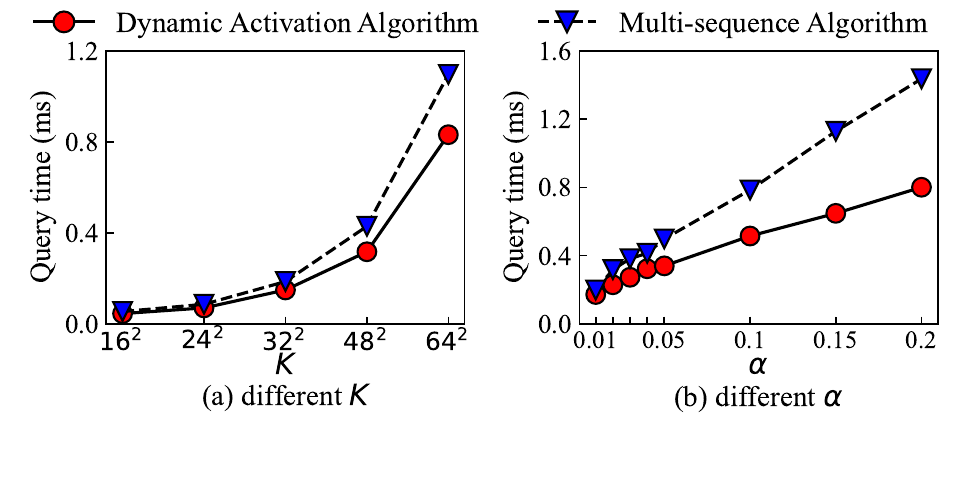}
	\caption{Comparison of query efficiency between Dynamic Activation and Multi-sequence algorithm on Sift10M.}
	\label{smi_query}
\end{figure}





\noindent \textbf{Parameter Settings.} 
Both the indexing and query answering phases of ANN methods require parameter settings.
To select indexing parameters, we refer to the optimal parameters given in previous papers~\cite{hydra2,li2019approximate,detlsh}  and conduct our own experiments to verify their rationality so that all methods can achieve the best indexing performance.
For DET-LSH, $L=4$, $K=16$. 
For DB-LSH, $L=5$, $K=12$. 
For PM-LSH, $s=5$, $m=15$.
For LCCS-LSH, $m=64$. 
For OPQ, $M=2$, $K=2^{20}$ (number of cells after Cartesian product).
For Annoy, $f=30$.
For HNSW, $efConstruction=200$, $M=25$.
For SuCo, $K=50^2$, $N_s=8$.
For query parameters, we dynamically adjust their settings to observe the trade-off between query efficiency and accuracy (as shown in Figure~\ref{recallqps_guarantee} and Figure~\ref{recallqps_all}).
For DET-LSH, DB-LSH, and PM-LSH, $\beta \in \left[0.005,0.2\right]$, $c=1.5$. 
For LCCS-LSH, $check\_k \in \left[2^8,2^{18}\right]$.
For OPQ, $\beta \in \left[10^{-4},10^{-1}\right]$.
For Annoy, $search\_k \in \left[10^3,10^5\right]$.
For HNSW, $efSearch \in \left[300,3000\right]$.
For SPTAG, we use the balanced K-means tree (BKT).
For SuCo, $\alpha \in \left[0.01, 0.1\right]$, $\beta \in \left[0.001, 0.05\right]$.
$k$ in $k$-ANN is set to 50.

\subsection{Dynamic Activation vs. Multi-sequence} \label{imiquery_compare}
As introduced in Section~\ref{query_phase}, we design a new query strategy (Algorithm~\ref{dynamic_activation}) to support efficient querying of IMI. 
Figure~\ref{smi_query} compares the query efficiency between \textit{Dynamic Activation} algorithm (we designed) and \textit{Multi-sequence} algorithm (proposed with IMI~\cite{babenko2014inverted}). 
While achieving the same query accuracy (they return the same query results), the efficiency of \textit{Dynamic Activation} algorithm is up to 40\% higher than that of the \textit{Multi-sequence} algorithm.
The advantage of \textit{Dynamic Activation} algorithm increases as the query workload increases (larger $K$ and $\alpha$).
The reason is that the \textit{Multi-sequence} algorithm relies on a priority queue to hold the candidate clusters, and frequent insertion and popping operations are time-consuming.
\textit{Dynamic Activation} algorithm updates and maintains the activation list through activation strategies, eliminating the reliance on the priority queue and improving query efficiency.

\begin{table}
	\centering
	\caption{Comparison on SuCo and SC-Linear}
	\label{scevaluation}
\begin{tabular}{@{}ccccc@{}}
\toprule
\textbf{Method}                               & \textbf{Dataset} & \textbf{Query Time (ms)} & \textbf{Speedup} & \textbf{Recall} \\ \midrule
\multirow{2}{*}{SC-Linear}                    & Sift10M          & 3104.42                  & \textbackslash{} & 0.968           \\
                                              & Sift100M         & 71644.5                  & \textbackslash{} & 0.9942           \\ 
\multicolumn{1}{c}{\multirow{2}{*}{SuCo}} & Sift10M          & 5.139                    & 604.1            & 0.9346          \\
\multicolumn{1}{c}{}                        & Sift100M         & 68.835                   & 1040.8            & 0.9822          \\ \bottomrule
\end{tabular}
\end{table}

\begin{figure*}[tb] 
	\centering
	\includegraphics[width=0.95\linewidth]{./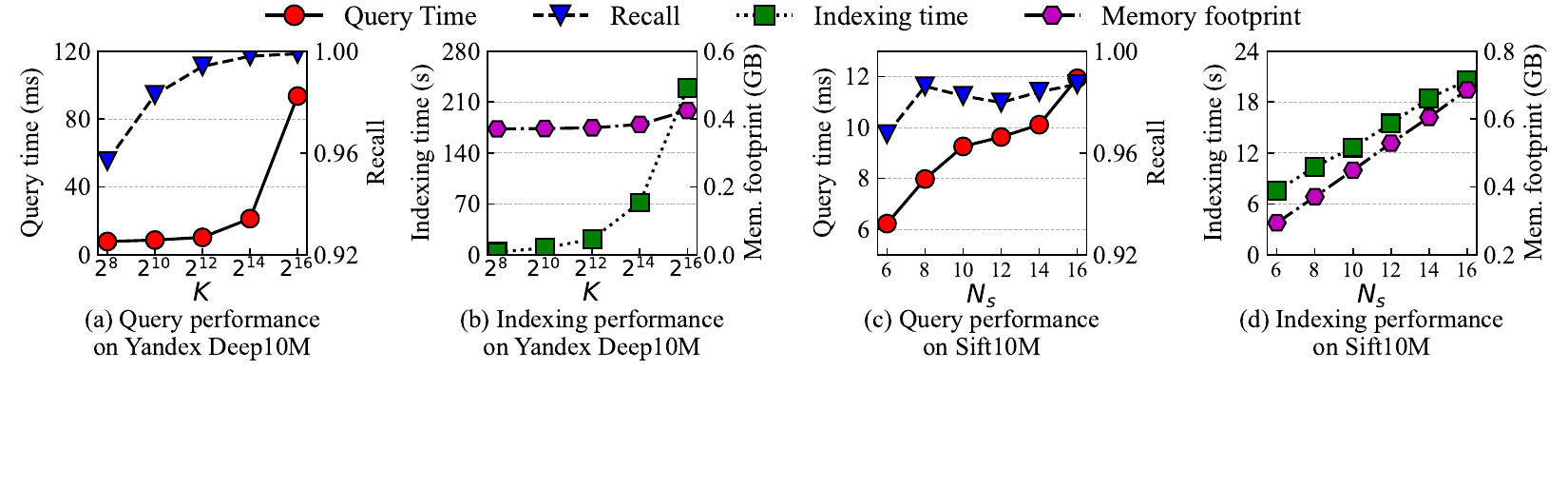}
	\caption{Performance of SuCo when varying the number of K-means clusters $K$ and the number of subspaces $N_s$.}
	\label{diff_K_Ns}
\end{figure*}

\begin{figure*}[tb] 
	\centering
	\includegraphics[width=0.95\linewidth]{./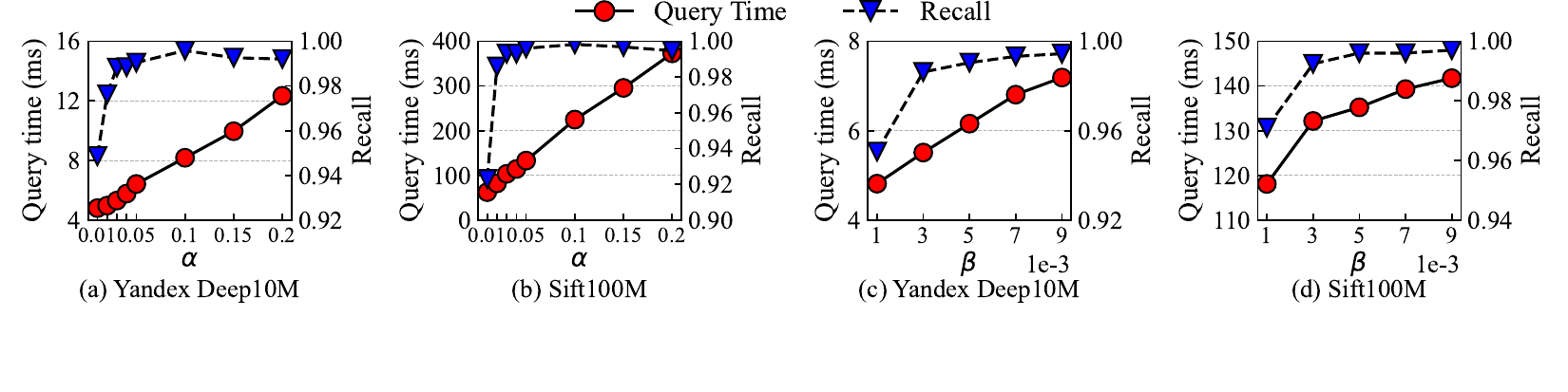}
	\caption{Query performance of SuCo when varying the collision ratio $\alpha$ and the re-rank ratio $\beta$.}
	\label{diff_alpha_beta}
\end{figure*}

\subsection{Self-evaluation of SuCo} \label{selfevaluation}

\subsubsection{SuCo vs. SC-Linear}
Preliminary experiments in Section~\ref{preliminary_experiments} have demonstrated that the subspace collision framework is effective, and Table~\ref{SC-Linear_recall} shows that SC-Linear (has no index) can return high-quality query results. 
In this section, we further compare the query performance between SuCo and SC-Linear under the same parameter settings: $\alpha=0.03$, $\beta=0.003$.
As shown in Table~\ref{scevaluation}, compared with SC-Linear, the query efficiency of SuCo is significantly improved (up to 1000 times), with an acceptable sacrifice of query accuracy.
The results demonstrate that the index structure and query strategy we designed for SuCo can be well applied to the subspace collision framework for ANN search.


\begin{table}
	\centering
	\caption{SuCo under different distance measures}
	\label{distance_measures}
\begin{tabular}{@{}cccccc@{}}
\toprule
\textbf{}                               & \textbf{} & \makecell[c]{Gist\\1M} & \makecell[c]{Sift\\10M} & \makecell[c]{SPACEV\\10M} & \makecell[c]{Yandex Deep\\100M} \\ \midrule
\multicolumn{1}{c}{\multirow{2}{*}{SuCo-L1}}                    & Recall          & 0.9328                  & 0.9868 & 0.966 & 0.9906           \\
                                              & MRE         & 0.00099                  & 0.00039 & 0.00103 & 0.0002           \\ 
\multicolumn{1}{c}{\multirow{2}{*}{SuCo-L2}} & Recall          & 0.9508                    & 0.9862    & 0.9802           & 0.9988          \\
\multicolumn{1}{c}{}                        & MRE         & 0.00068                   & 0.00042  & 0.00035            & 0.00002          \\ \bottomrule
\end{tabular}
\end{table}

\subsubsection{Parameter study on $K$ and $N_s$}
Parameters $K$ and $N_s$ determine the efficiency and space consumption of index construction and affect the efficiency and accuracy of queries.
We evaluate the impact of $K$ and $N_s$ on the performance of SuCo by setting $K$ in the range $\left[2^8, 2^{16}\right]$ and $N_s$ in the range $\left[6, 16\right]$.
As shown in Figure~\ref{diff_K_Ns} (a) and (b), the indexing time and query time increase gradually with $K$ increasing from $2^8$ to $2^{12}$ but significantly with $K$ increasing from $2^{12}$ to $2^{16}$.
In contrast, the recall increase significantly with $K$ increasing from $2^8$ to $2^{12}$ but gradually with $K$ increasing from $2^{12}$ to $2^{16}$.
The index memory footprint remains rather stable in the whole range.
Therefore, we choose $K=50^2 \in (2^{10}, 2^{12})$ as the default value.
As shown in Figure~\ref{diff_K_Ns} (c) and (d), the indexing time, index memory footprint, and query time continue to increase as $N_s$ increases.
The recall remains rather stable when $N_s$ is larger than 8.
Considering these two factors, we choose $N_s=8$ as the default value.
Based on our experiments with datasets of different dimensionality and size, we suggest choosing $N_s \in \left[6,12\right]$ for new datasets, which is a good balance between indexing cost and query accuracy.

\begin{figure}[tb] 
	\centering
	\includegraphics[width=\linewidth]{./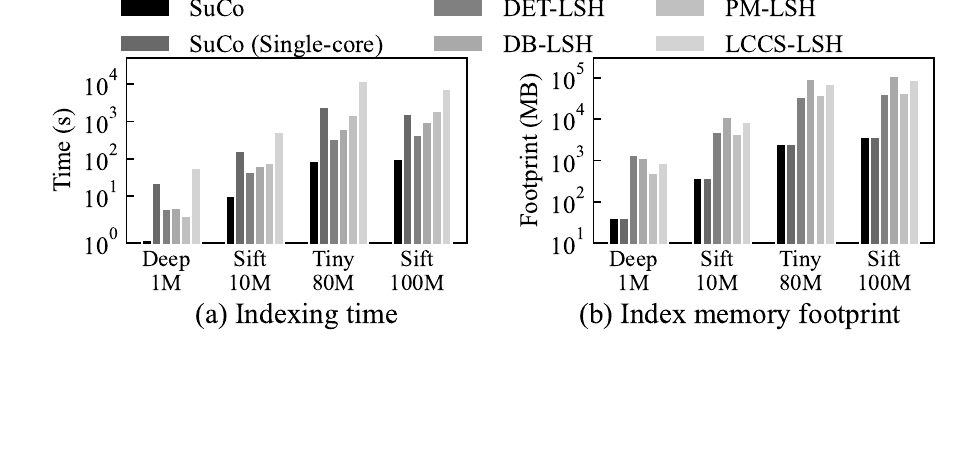}
	\caption{Indexing performance comparison between SuCo and competitors that provide theoretical guarantees.}
	\label{indexing_lsh}
\end{figure}

\subsubsection{Parameter study on $\alpha$ and $\beta$}
\label{sec:parameter_study}

Parameters $\alpha$ and $\beta$ have a great impact on query efficiency and accuracy.
We evaluate the impact of $\alpha$ and $\beta$ on the query performance of SuCo by setting $\alpha$ in the range $\left[0.01, 0.2\right]$ and $\beta$ in the range $\left[0.001, 0.009\right]$.
As shown in Figure~\ref{diff_alpha_beta} (a) and (b), the query time continues to increase as $\alpha$ increases, but the recall is rather stable when $\alpha$ is larger than 0.05. 
As shown in Figure~\ref{diff_alpha_beta} (c) and (d), the query time continues to increase as $\beta$ increases, but the recall increases slowly when $\beta$ is larger than 0.005.
Therefore, $\alpha=0.05, \beta=0.005$ is a good choice for SuCo.
In addition, we found that SuCo can achieve a high recall with smaller $\alpha$ and $\beta$ on large-scale datasets. 
On the Sift100M dataset, the recall of SuCo is 0.9822 when $\alpha=0.03$, $\beta=0.003$.
This demonstrates that SuCo has good scalability on large-scale datasets.
Based on our experiments with datasets of different dimensionality and size, we suggest choosing $\alpha \in \left[0.03, 0.1\right]$ and $\beta \in \left[0.003, 0.005\right]$ for new datasets, which is a good balance between query efficiency and accuracy.
In addition, large-scale datasets tend to use smaller $\alpha$ and $\beta$, and hard datasets tend to use larger $\beta$.

\begin{figure}[tb] 
	\centering
	\includegraphics[width=\linewidth]{./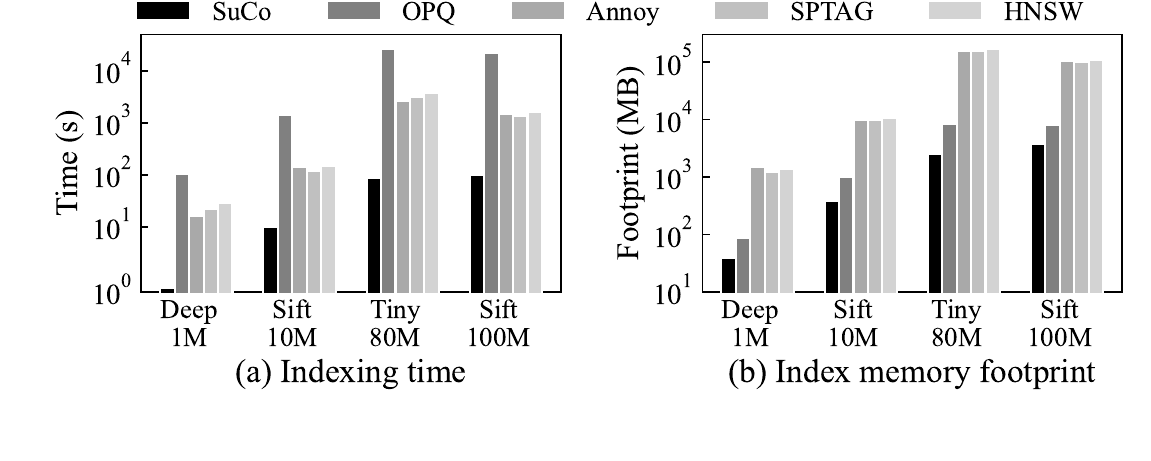}
	\caption{Indexing performance comparison between SuCo and competitors that do not provide theoretical guarantees.}
	\label{indexing_all}
\end{figure}

\begin{figure*}[tb] 
	\centering
	\includegraphics[width=0.86\linewidth]{./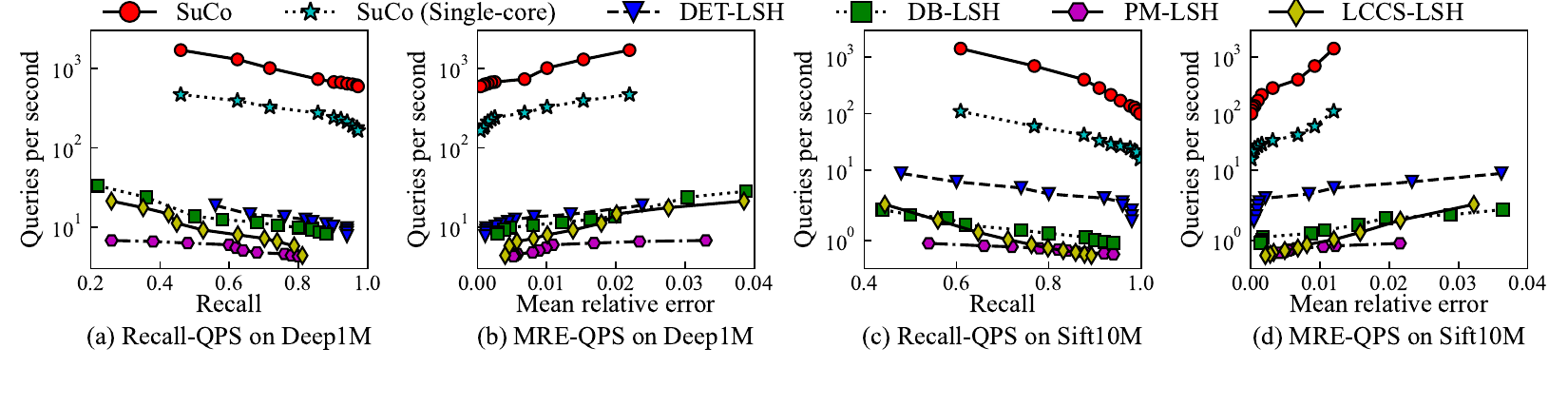}
	\caption{Query performance comparison between SuCo and competitors that provide theoretical guarantees.}
	\label{recallqps_guarantee}
\end{figure*}

\begin{figure*}[tb] 
	\centering
	\includegraphics[width=0.86\linewidth]{./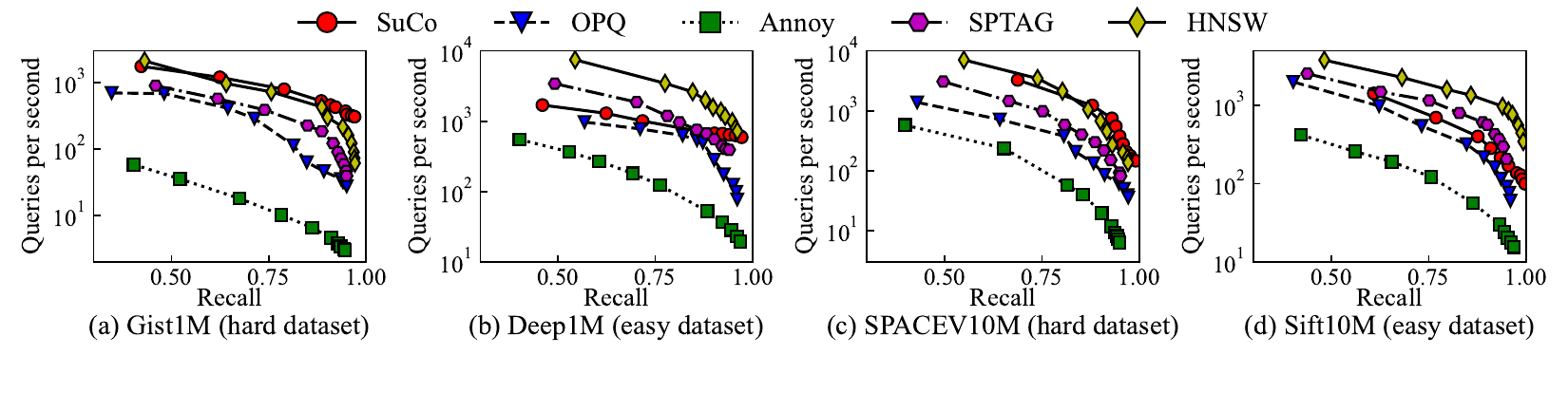}
	\caption{Query performance comparison between SuCo and competitors that do not provide theoretical guarantees.}
	\label{recallqps_all}
\end{figure*}

\begin{figure*}[tb] 
	\centering
	\includegraphics[width=0.86\linewidth]{./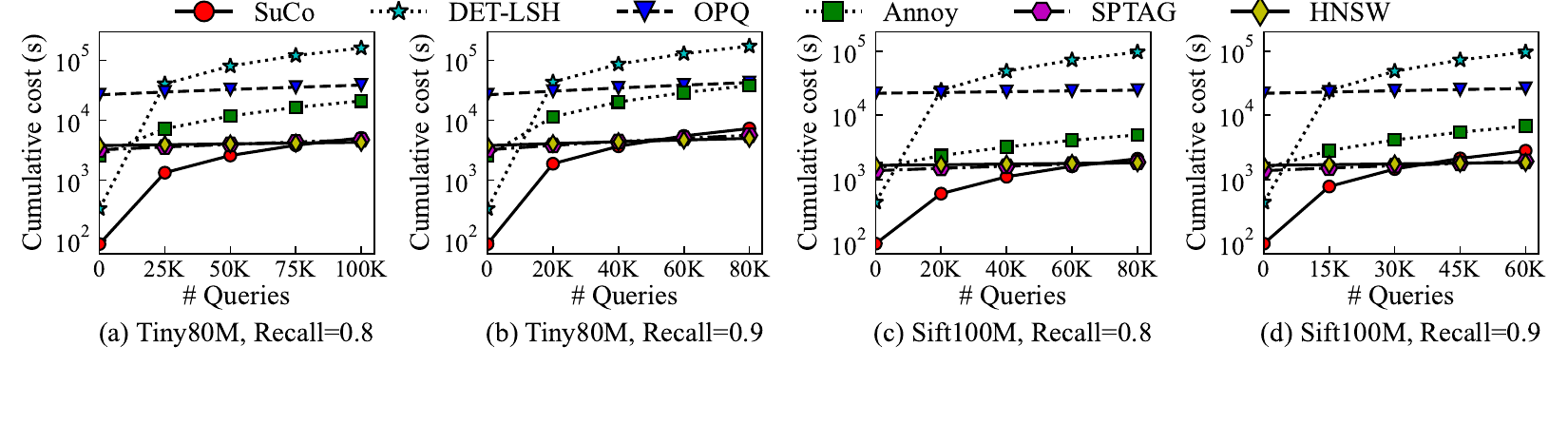}
	\caption{Cumulative query cost (start with the indexing time); comparison to methods with/without guarantees.}
	\label{query_indexing}
\end{figure*}

\begin{figure*}[tb] 
	\centering
	\includegraphics[width=0.86\linewidth]{./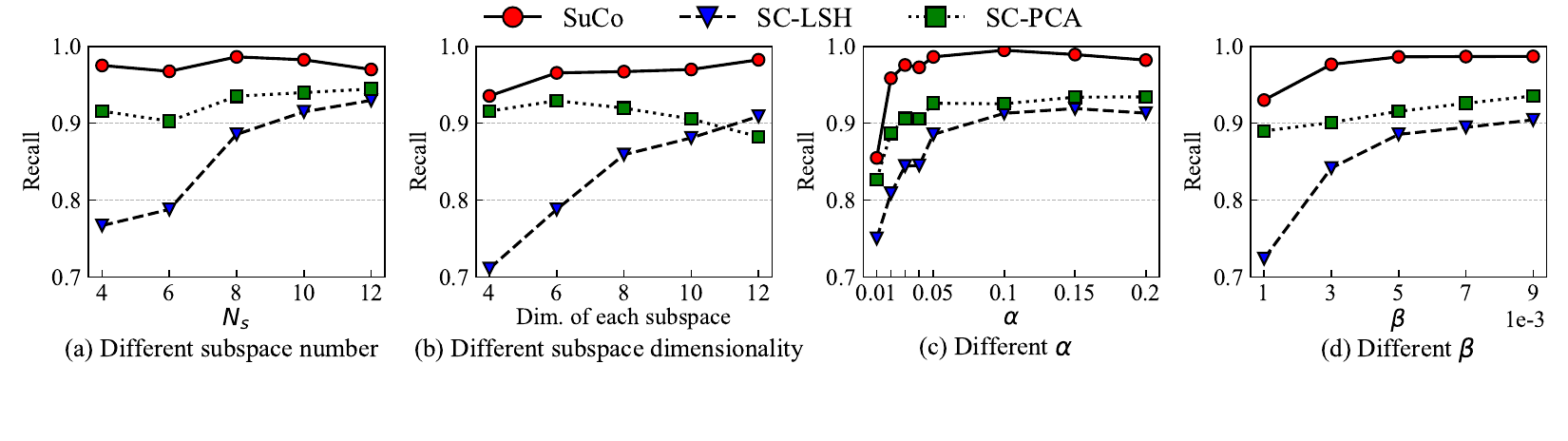}
	\caption{Query performance of SC-based methods with different data preprocessing techniques on Sift10M.}
	\label{diff_preprocessing_techniques}
\end{figure*}

\subsubsection{SuCo under different distance measures}
In this paper, we focus on the L2 distance (Euclidean distance), which is the most widely used distance measure.
It is interesting to examine if SuCo can be used with other distance measures.
Table~\ref{distance_measures} reports the performance of SuCo using the L1 (Manhattan distance) and L2 distance measures on four datasets under the same setting.
We observe that SuCo achieves high query accuracy with L1, too.



\begin{table*}
	\centering
	\caption{The time and space complexity of the state-of-the-art methods (for notation, see Section~\ref{sec:complexitycomparison})}
	\label{complexity}
\begin{tabular}{@{}ccccccc@{}}
\toprule
\textbf{}                               & SuCo & DET-LSH & OPQ & Annoy & SPTAG & HNSW\\ \midrule
Indexing time                    & $\mathcal{O}(n(\sqrt{K}dt+N_s))$          & $\mathcal{O}(LK n(d+\log N_r))$                  & $\mathcal{O}(n(\sqrt{K}+k^*)dt)$ & $\mathcal{O}(Lnd\log n)$ & $\mathcal{O}(nd\log n)$ & $\mathcal{O}(nd\log n)$          \\
Index space                                              & $\mathcal{O}(\textcolor{gray}{\sqrt{K}d}+nN_s)$         & $\mathcal{O}(LK n)$                  & \makecell[c]{$\mathcal{O}(n(m\log k^*+\log \sqrt{K})$\\$+\textcolor{gray}{(\sqrt{K}+k^*)d})$} & $\mathcal{O}(Ldn)$ & $\mathcal{O}(n(d+E))$  &  $\mathcal{O}(nE)$        \\ 
Query time & \makecell[c]{$\mathcal{O}(n(\alpha N_s+\beta d+\log (\beta n))$\\$+\textcolor{gray}{\sqrt{K}(d+\alpha \sqrt{K} \log \sqrt{K}}))$}          & $\mathcal{O}(n(\beta d+LK\log N_r))$                    & \makecell[c]{$\mathcal{O}(n(\gamma m+\beta d)$\\$+\textcolor{gray}{(\sqrt{K}+k^*)d})$}    & \makecell[c]{$\mathcal{O}(\textcolor{gray}{Ld\log n}$\\$+\beta nd)$}           & $\mathcal{O}(d\log n)$ & $\mathcal{O}(d\log n)$\\ \bottomrule
\end{tabular}
\end{table*}

\subsection{SuCo vs. Competitors with Guarantees} \label{comparison_lsh}

\subsubsection{Indexing performance}
Figure~\ref{indexing_lsh} shows the indexing time and index memory footprint of all methods that provide theoretical guarantees.
We found that SuCo and single-threaded SuCo have the least index memory footprint. Since SuCo’s clustering-based index structure is lightweight and only the cluster centroids and data point assignments need to be recorded, the required memory space is small.
In addition, SuCo has the best indexing efficiency, while single-threaded SuCo is (naturally) slower, with the extra cost coming from the (now serial) subspace clustering step.
In practice, parallel indexing methods are used in the modern multi-core systems to speed up index construction, so SuCo has an advantage in indexing efficiency compared to other methods with theoretical guarantees.
LSH-based methods need to build one (PM-LSH) or multiple (DET-LSH, DB-LSH) index trees or circular shift array (LCCS-LSH), which are heavy-weight index structures, requiring more time to construct the index, and occupying more memory.


\subsubsection{Query performance}
For all methods, there is a trade-off between query efficiency and accuracy.
Figure~\ref{recallqps_guarantee} gives the recall-QPS and MRE-QPS curves of all methods that have theoretical guarantees. 
We found that the query performance of SuCo and single-threaded SuCo is 1-2 orders of magnitude better than that of LSH-based methods. 
Benefiting from the design of the subspace collision framework, index structure, and query strategy, SuCo can efficiently count collisions to obtain candidate points and then ensure the high quality of query results through a re-ranking mechanism.
SuCo only needs to retrieve 0.3\%-0.5\% of candidate points to achieve a high recall, indicating that the obtained candidate points are of high quality. 
However, DET-LSH, DB-LSH, and PM-LSH need to obtain more candidate points (up to 10\%) and spend more time to achieve the same recall because the LSH projection loses the distance information between data points.

\subsection{SuCo vs. Competitors without Guarantees} \label{comparison}

\subsubsection{Indexing performance}
Figure~\ref{indexing_all} shows the indexing time and index memory footprint of SuCo and competitors that do not provide theoretical guarantees.
We found that SuCo has the best indexing efficiency and the least index memory footprint.
Since SuCo only needs to find many (about 3\%-5\%) but not necessarily accurate data points as collisions, the index structure does not need to divide the data points very finely.
To construct the IMI of each subspace, we only need to perform K-means clustering of $\sqrt{K}=50$ twice, which makes SuCo highly efficient for indexing.
However, the tree structure partitions data points in a fine-grained manner, while the graph structure connects each data point to its sufficiently close neighbor points, which makes the index heavyweight and requires longer indexing time and larger memory space (Annoy, SPTAG, and HNSW).
Although OPQ also uses IMI as the index structure, its query strategy requires IMI to be able to divide data points very finely (different from our subspace collision framework), so the $K$ of K-means clustering needs to be very large ($\sqrt{K}=2^{10}-2^{14}$), which seriously restricts the indexing efficiency of OPQ.

\subsubsection{Query performance}
Figure~\ref{recallqps_all} gives the recall-QPS curves of SuCo and competitors that do not provide theoretical guarantees.
We found that SuCo and HNSW outperform other methods, and on the hard datasets (Gist1M, SPACEV10M), SuCo performs better, while on the easy datasets (Deep1M, Sift10M), HNSW performs better.
Graph-based methods (such as HNSW) sacrifice indexing time and space consumption in exchange for higher query efficiency because they need to identify the near neighbors for each data point in the dataset (and connect to them) during the indexing phase, while in the query phase, they only need to search on a gradually converging path.
However, on the hard datasets, HNSW's greedy query strategy is prone to falling into local optimal subgraphs and is difficult to escape. 
In addition, HNSW requires few and accurate query results directly from the index, so its query strategy has low fault tolerance.
On the contrary, SuCo only needs to find many (about 3\%-5\%) but not necessarily accurate data points from its index as collisions, so its query strategy is highly fault-tolerant and more suitable for hard datasets.
It is important to emphasize that SuCo has to pay the cost of providing guarantees for its answers; other competitors, that do not provide any guarantees, do not pay this cost.
Even with this cost, SuCo achieves top performance (best for hard datasets) when compared to competitors that do not provide theoretical guarantees.

\subsection{Overall Evaluation}
We now evaluate the cumulative indexing and query time cost of methods with theoretical guarantees (SuCo and DET-LSH) and without (OPQ, Annoy, SPTAG, and HNSW).
Figure~\ref{query_indexing} shows the cumulative query costs of all methods, where the cost starts with the indexing time.
Compared with methods that provide theoretical guarantees (e.g., DET-LSH), SuCo always has a considerable performance advantage, ranging between $1$ and more than $2$ orders of magnitude.
Even when compared with methods that do \emph{not} provide theoretical guarantees (OPQ, Annoy, SPTAG, and HNSW), SuCo also has an advantage: it creates the index and answers 40K-80K queries before the best competitor (i.e., HNSW) answers its first query.
Note that SuCo performs better on hard datasets: compared with the Sift100M (easy) dataset, SuCo answers more queries on the Tiny80M (hard) dataset before HNSW answers its first query.


\subsection{Complexity Comparison and Analysis}
\label{sec:complexitycomparison}

Since the index structures and query strategies of the state-of-the-art methods are different (refer to Table~\ref{complexity}), it is hard to make a direct comparison among them. 
In our analysis, we will focus on the main parts that affect the complexity of each method (ignoring secondary terms marked in gray in Table~\ref{complexity}, which are not at the same complexity level as the main items, e.g., $\mathcal{O}(\log n)$ and $\mathcal{O}(n)$) and conduct a comparative analysis based on these parts.
For all methods, $n,d,\beta,t$ have the same meanings as defined in this paper.

In terms of indexing time, Annoy ($L$ is the number of trees), HNSW, and SPTAG need more than $\mathcal{O}(n\log n)$ complexity, which is time-consuming.
Although OPQ has the same complexity level $\mathcal{O}(n)$ as SuCo and DET-LSH ($L$ and $K$ are the number and dimensionality of projected spaces, and $N_r$ is the number of regions in each projected space), it requires two layers of clustering for quantization (the number of centroids is $K$ and $k^*$, respectively), so it takes a long time to build the index.
Therefore, SuCo and DET-LSH are suitable for scenarios sensitive to indexing time.
In terms of index space, SuCo and OPQ are cluster-based indexes, and their complexity coefficients $N_s$, $m$ (the number of subvectors), and $k^*$ are relatively small. 
The tree-based indexes (DET-LSH, Annoy), graph-based index (HNSW), and hybrid index (SPTAG) have larger complexity coefficients, i.e., the number of trees ($L$) and the number of neighbors for each vertex in the graph ($E$).
Therefore, SuCo and OPQ are suitable for scenarios with limited index space.
In terms of query time, HNSW and SPTAG have $\mathcal{O}(\log n)$ complexity, while other methods rely on re-ranking, with a complexity of $\mathcal{O}(\beta nd)$.
Since the subspace collision framework can effectively select candidate points, SuCo has a smaller re-rank ratio $\beta$ than DET-LSH, OPQ ($\gamma$ is the candidate pool ratio), and Annoy.
Therefore, HNSW, SuCo, and SPTAG are suitable for scenarios sensitive to query time, as validated by the experimental results in Section~\ref{comparison_lsh} and Section~\ref{comparison}.

\subsection{Impact of Data Preprocessing Techniques}

Our proposed subspace collision framework consists of two ideas: (1) use a simple division strategy to preprocess the raw data into subspaces and (2) use $\alpha$ and $\beta$ to generate a pool of NN candidates and then pick the best NNs from that pool. 
We explored the effects of combining different data preprocessing techniques (LSH for projection, Principal Component Analysis (PCA) for
dimensionality reduction, and our simple division) with the subspace collision framework, as shown in Figure~\ref{diff_preprocessing_techniques}.
Figure~\ref{diff_preprocessing_techniques} (a) and Figure~\ref{diff_preprocessing_techniques} (b) show that SuCo outperforms SC-LSH and SC-PCA in query accuracy under different subspace settings. 
In addition, SuCo's data preprocessing speed is 4x and 12x faster than SC-LSH and SC-PCA, respectively.
Figure~\ref{diff_preprocessing_techniques} (c) and Figure~\ref{diff_preprocessing_techniques} (d) explore the query performance of SC-based methods under different alpha-beta settings: SuCo exhibits the best performance.
Therefore, our proposed simple division strategy is more beneficial to the subspace collision framework than previously proposed data preprocessing techniques.

\section{Conclusions}

In this paper, we first designed SC-score, a metric that follows the \enquote{Pareto principle} and can act as a proxy for the Euclidean distance between data points.
Next, we proposed a novel ANN search framework called subspace collision, which can achieve high recall and provide theoretical guarantees on the quality of its results.
Then, we proposed SuCo, which achieves efficient and accurate ANN search by designing a clustering-based lightweight index and query strategies for the subspace collision framework.
Finally, we conducted extensive experiments, and the results demonstrate the superiority of SuCo in indexing and query answering performance.

In future work, we will combine deep learning to explore more efficient index structures under the subspace collision framework.


\begin{acks}
X.~Lee and B.~Peng partially supported by National Natural Science Foundation of China (62202450) and FUXI Institution-CASICT Interenet Infrastructure Laboratory (E051570).
Z.~Liao partially supported by National Natural Science Foundation of China (NSFC-62206101, NSFC-12141107), and Guangdong Provincial Key Laboratory of Mathematical Foundations for Artificial Intelligence (2023B1212010001).
T. Palpanas partially supported by EU Horizon projects AI4Europe (101070000), TwinODIS (101160009), ARMADA (101168951), DataGEMS (101188416) and RECITALS (101168490).

\end{acks}

\bibliographystyle{ACM-Reference-Format}
\balance
\bibliography{ref}


\begin{thebibliography}{114}


\ifx \showCODEN    \undefined \def \showCODEN     #1{\unskip}     \fi
\ifx \showDOI      \undefined \def \showDOI       #1{#1}\fi
\ifx \showISBNx    \undefined \def \showISBNx     #1{\unskip}     \fi
\ifx \showISBNxiii \undefined \def \showISBNxiii  #1{\unskip}     \fi
\ifx \showISSN     \undefined \def \showISSN      #1{\unskip}     \fi
\ifx \showLCCN     \undefined \def \showLCCN      #1{\unskip}     \fi
\ifx \shownote     \undefined \def \shownote      #1{#1}          \fi
\ifx \showarticletitle \undefined \def \showarticletitle #1{#1}   \fi
\ifx \showURL      \undefined \def \showURL       {\relax}        \fi
\providecommand\bibfield[2]{#2}
\providecommand\bibinfo[2]{#2}
\providecommand\natexlab[1]{#1}
\providecommand\showeprint[2][]{arXiv:#2}

\bibitem[200(2003a)]%
        {2003Bounds}
 \bibinfo{year}{2003}\natexlab{a}.
\newblock \showarticletitle{Bounds and {{Approximations}} for {{Moments}} of {{Order Statistics}}}.
\newblock In \bibinfo{booktitle}{\emph{Order {{Statistics}}}}. \bibinfo{publisher}{John Wiley \& Sons, Ltd}, Chapter~4, \bibinfo{pages}{59--93}.
\newblock
\showISBNx{978-0-471-72216-8}
\urldef\tempurl%
\url{https://doi.org/10.1002/0471722162.ch4}
\showDOI{\tempurl}


\bibitem[200(2003b)]%
        {2003Expected}
 \bibinfo{year}{2003}\natexlab{b}.
\newblock \showarticletitle{Expected {{Values}} and {{Moments}}}.
\newblock In \bibinfo{booktitle}{\emph{Order {{Statistics}}}}. \bibinfo{publisher}{John Wiley \& Sons, Ltd}, Chapter~3, \bibinfo{pages}{33--58}.
\newblock
\showISBNx{978-0-471-72216-8}
\urldef\tempurl%
\url{https://doi.org/10.1002/0471722162.ch3}
\showDOI{\tempurl}


\bibitem[Andoni(2005)]%
        {e2lsh}
\bibfield{author}{\bibinfo{person}{Alexandr Andoni}.} \bibinfo{year}{2005}\natexlab{}.
\newblock \bibinfo{title}{{LSH Algorithm and Implementation (E2LSH)}}.
\newblock \bibinfo{howpublished}{\url{https://web.mit.edu/andoni/www/LSH/index.html}}.
\newblock


\bibitem[Andoni and Razenshteyn(2015)]%
        {andoni2015optimal}
\bibfield{author}{\bibinfo{person}{Alexandr Andoni} {and} \bibinfo{person}{Ilya Razenshteyn}.} \bibinfo{year}{2015}\natexlab{}.
\newblock \showarticletitle{Optimal data-dependent hashing for approximate near neighbors}. In \bibinfo{booktitle}{\emph{Proceedings of the forty-seventh annual ACM symposium on Theory of computing}}. \bibinfo{pages}{793--801}.
\newblock


\bibitem[Aum{\"u}ller et~al\mbox{.}(2020)]%
        {aumuller2020ann}
\bibfield{author}{\bibinfo{person}{Martin Aum{\"u}ller}, \bibinfo{person}{Erik Bernhardsson}, {and} \bibinfo{person}{Alexander Faithfull}.} \bibinfo{year}{2020}\natexlab{}.
\newblock \showarticletitle{ANN-Benchmarks: A benchmarking tool for approximate nearest neighbor algorithms}.
\newblock \bibinfo{journal}{\emph{Information Systems}}  \bibinfo{volume}{87} (\bibinfo{year}{2020}), \bibinfo{pages}{101374}.
\newblock


\bibitem[Azizi et~al\mbox{.}(2023)]%
        {azizi2023elpis}
\bibfield{author}{\bibinfo{person}{Ilias Azizi}, \bibinfo{person}{Karima Echihabi}, {and} \bibinfo{person}{Themis Palpanas}.} \bibinfo{year}{2023}\natexlab{}.
\newblock \showarticletitle{ELPIS: Graph-Based Similarity Search for Scalable Data Science}.
\newblock \bibinfo{journal}{\emph{Proceedings of the VLDB Endowment}} \bibinfo{volume}{16}, \bibinfo{number}{6} (\bibinfo{year}{2023}), \bibinfo{pages}{1548--1559}.
\newblock


\bibitem[Azizi et~al\mbox{.}(2025)]%
        {iliassigmod25}
\bibfield{author}{\bibinfo{person}{Ilias Azizi}, \bibinfo{person}{Karima Echihabi}, {and} \bibinfo{person}{Themis Palpanas}.} \bibinfo{year}{2025}\natexlab{}.
\newblock \showarticletitle{Graph-Based Vector Search: An Experimental Evaluation of the State-of-the-Art}.
\newblock \bibinfo{journal}{\emph{{PACMMOD}}} (\bibinfo{year}{2025}).
\newblock


\bibitem[Babenko and Lempitsky(2014a)]%
        {babenko2014additive}
\bibfield{author}{\bibinfo{person}{Artem Babenko} {and} \bibinfo{person}{Victor Lempitsky}.} \bibinfo{year}{2014}\natexlab{a}.
\newblock \showarticletitle{Additive quantization for extreme vector compression}. In \bibinfo{booktitle}{\emph{Proceedings of the IEEE Conference on Computer Vision and Pattern Recognition}}. \bibinfo{pages}{931--938}.
\newblock


\bibitem[Babenko and Lempitsky(2014b)]%
        {babenko2014inverted}
\bibfield{author}{\bibinfo{person}{Artem Babenko} {and} \bibinfo{person}{Victor Lempitsky}.} \bibinfo{year}{2014}\natexlab{b}.
\newblock \showarticletitle{The inverted multi-index}.
\newblock \bibinfo{journal}{\emph{IEEE transactions on pattern analysis and machine intelligence}} \bibinfo{volume}{37}, \bibinfo{number}{6} (\bibinfo{year}{2014}), \bibinfo{pages}{1247--1260}.
\newblock


\bibitem[Babenko and Lempitsky(2015)]%
        {babenko2015tree}
\bibfield{author}{\bibinfo{person}{Artem Babenko} {and} \bibinfo{person}{Victor Lempitsky}.} \bibinfo{year}{2015}\natexlab{}.
\newblock \showarticletitle{Tree quantization for large-scale similarity search and classification}. In \bibinfo{booktitle}{\emph{Proceedings of the IEEE Conference on Computer Vision and Pattern Recognition}}. \bibinfo{pages}{4240--4248}.
\newblock


\bibitem[Babenko and Lempitsky(2016)]%
        {babenko2016efficient}
\bibfield{author}{\bibinfo{person}{Artem Babenko} {and} \bibinfo{person}{Victor Lempitsky}.} \bibinfo{year}{2016}\natexlab{}.
\newblock \showarticletitle{Efficient indexing of billion-scale datasets of deep descriptors}. In \bibinfo{booktitle}{\emph{Proceedings of the IEEE Conference on Computer Vision and Pattern Recognition}}. \bibinfo{pages}{2055--2063}.
\newblock


\bibitem[Bac et~al\mbox{.}(2021)]%
        {bac2021scikit}
\bibfield{author}{\bibinfo{person}{Jonathan Bac}, \bibinfo{person}{Evgeny~M Mirkes}, \bibinfo{person}{Alexander~N Gorban}, \bibinfo{person}{Ivan Tyukin}, {and} \bibinfo{person}{Andrei Zinovyev}.} \bibinfo{year}{2021}\natexlab{}.
\newblock \showarticletitle{Scikit-dimension: a python package for intrinsic dimension estimation}.
\newblock \bibinfo{journal}{\emph{Entropy}} \bibinfo{volume}{23}, \bibinfo{number}{10} (\bibinfo{year}{2021}), \bibinfo{pages}{1368}.
\newblock


\bibitem[Baglivo(2005)]%
        {baglivo2005Mathematica}
\bibfield{author}{\bibinfo{person}{Jenny~A. Baglivo}.} \bibinfo{year}{2005}\natexlab{}.
\newblock \bibinfo{booktitle}{\emph{Mathematica {{Laboratories}} for {{Mathematical Statistics}}}}.
\newblock \bibinfo{publisher}{{Society for Industrial and Applied Mathematics}}.
\newblock
\showISBNx{978-0-89871-566-8}
\urldef\tempurl%
\url{https://doi.org/10.1137/1.9780898718416}
\showDOI{\tempurl}


\bibitem[Bernhardsson(2015)]%
        {annoy}
\bibfield{author}{\bibinfo{person}{Erik Bernhardsson}.} \bibinfo{year}{2015}\natexlab{}.
\newblock \bibinfo{booktitle}{\emph{Approximate Nearest Neighbors in C++/Python optimized for memory usage and loading/saving to disk}}.
\newblock
\urldef\tempurl%
\url{https://github.com/spotify/annoy}
\showURL{%
\tempurl}


\bibitem[Beygelzimer et~al\mbox{.}(2006)]%
        {beygelzimer2006cover}
\bibfield{author}{\bibinfo{person}{Alina Beygelzimer}, \bibinfo{person}{Sham Kakade}, {and} \bibinfo{person}{John Langford}.} \bibinfo{year}{2006}\natexlab{}.
\newblock \showarticletitle{Cover trees for nearest neighbor}. In \bibinfo{booktitle}{\emph{Proceedings of the 23rd international conference on Machine learning}}. \bibinfo{pages}{97--104}.
\newblock


\bibitem[Blom(1958)]%
        {blom1958Statistical}
\bibfield{author}{\bibinfo{person}{Gunnar Blom}.} \bibinfo{year}{1958}\natexlab{}.
\newblock \bibinfo{booktitle}{\emph{Statistical {{Estimates}} and {{Transformed Beta-variables}}}}.
\newblock \bibinfo{publisher}{Wiley}.
\newblock


\bibitem[B{\"o}hm(2000)]%
        {bohm2000cost}
\bibfield{author}{\bibinfo{person}{Christian B{\"o}hm}.} \bibinfo{year}{2000}\natexlab{}.
\newblock \showarticletitle{A cost model for query processing in high dimensional data spaces}.
\newblock \bibinfo{journal}{\emph{ACM Transactions on Database Systems (TODS)}} \bibinfo{volume}{25}, \bibinfo{number}{2} (\bibinfo{year}{2000}), \bibinfo{pages}{129--178}.
\newblock


\bibitem[Borodin et~al\mbox{.}(1999)]%
        {borodin1999lower}
\bibfield{author}{\bibinfo{person}{Allan Borodin}, \bibinfo{person}{Rafail Ostrovsky}, {and} \bibinfo{person}{Yuval Rabani}.} \bibinfo{year}{1999}\natexlab{}.
\newblock \showarticletitle{Lower bounds for high dimensional nearest neighbor search and related problems}. In \bibinfo{booktitle}{\emph{Proceedings of the thirty-first annual ACM symposium on Theory of computing}}. \bibinfo{pages}{312--321}.
\newblock


\bibitem[Boytsov and Naidan(2013)]%
        {boytsov2013learning}
\bibfield{author}{\bibinfo{person}{Leonid Boytsov} {and} \bibinfo{person}{Bilegsaikhan Naidan}.} \bibinfo{year}{2013}\natexlab{}.
\newblock \showarticletitle{Learning to prune in metric and non-metric spaces}.
\newblock \bibinfo{journal}{\emph{Advances in Neural Information Processing Systems}}  \bibinfo{volume}{26} (\bibinfo{year}{2013}).
\newblock


\bibitem[Camerra et~al\mbox{.}(2014)]%
        {iSAX2}
\bibfield{author}{\bibinfo{person}{Alessandro Camerra}, \bibinfo{person}{Jin Shieh}, \bibinfo{person}{Themis Palpanas}, \bibinfo{person}{Thanawin Rakthanmanon}, {and} \bibinfo{person}{Eamonn Keogh}.} \bibinfo{year}{2014}\natexlab{}.
\newblock \showarticletitle{Beyond one billion time series: indexing and mining very large time series collections with iSAX2+}.
\newblock \bibinfo{journal}{\emph{Knowledge and information systems}} \bibinfo{volume}{39}, \bibinfo{number}{1} (\bibinfo{year}{2014}), \bibinfo{pages}{123--151}.
\newblock


\bibitem[Cayton(2008)]%
        {cayton2008fast}
\bibfield{author}{\bibinfo{person}{Lawrence Cayton}.} \bibinfo{year}{2008}\natexlab{}.
\newblock \showarticletitle{Fast nearest neighbor retrieval for bregman divergences}. In \bibinfo{booktitle}{\emph{Proceedings of the 25th international conference on Machine learning}}. \bibinfo{pages}{112--119}.
\newblock


\bibitem[Chatzakis et~al\mbox{.}(2023)]%
        {chatzakis2023odyssey}
\bibfield{author}{\bibinfo{person}{Manos Chatzakis}, \bibinfo{person}{Panagiota Fatourou}, \bibinfo{person}{Eleftherios Kosmas}, \bibinfo{person}{Themis Palpanas}, {and} \bibinfo{person}{Botao Peng}.} \bibinfo{year}{2023}\natexlab{}.
\newblock \showarticletitle{Odyssey: A Journey in the Land of Distributed Data Series Similarity Search}.
\newblock \bibinfo{journal}{\emph{Proceedings of the VLDB Endowment}} \bibinfo{volume}{16}, \bibinfo{number}{5} (\bibinfo{year}{2023}), \bibinfo{pages}{1140--1153}.
\newblock


\bibitem[Chen et~al\mbox{.}(2018)]%
        {chen2018sptag}
\bibfield{author}{\bibinfo{person}{Qi Chen}, \bibinfo{person}{Haidong Wang}, \bibinfo{person}{Mingqin Li}, \bibinfo{person}{Gang Ren}, \bibinfo{person}{Scarlett Li}, \bibinfo{person}{Jeffery Zhu}, \bibinfo{person}{Jason Li}, \bibinfo{person}{Chuanjie Liu}, \bibinfo{person}{Lintao Zhang}, {and} \bibinfo{person}{Jingdong Wang}.} \bibinfo{year}{2018}\natexlab{}.
\newblock \bibinfo{title}{SPTAG: A library for fast approximate nearest neighbor search}.
\newblock
\newblock


\bibitem[Chen et~al\mbox{.}(2021)]%
        {chen2021spann}
\bibfield{author}{\bibinfo{person}{Qi Chen}, \bibinfo{person}{Bing Zhao}, \bibinfo{person}{Haidong Wang}, \bibinfo{person}{Mingqin Li}, \bibinfo{person}{Chuanjie Liu}, \bibinfo{person}{Zengzhong Li}, \bibinfo{person}{Mao Yang}, {and} \bibinfo{person}{Jingdong Wang}.} \bibinfo{year}{2021}\natexlab{}.
\newblock \showarticletitle{Spann: Highly-efficient billion-scale approximate nearest neighborhood search}.
\newblock \bibinfo{journal}{\emph{Advances in Neural Information Processing Systems}}  \bibinfo{volume}{34} (\bibinfo{year}{2021}), \bibinfo{pages}{5199--5212}.
\newblock


\bibitem[Couillet and Liao(2022)]%
        {couillet2022RMT4ML}
\bibfield{author}{\bibinfo{person}{Romain Couillet} {and} \bibinfo{person}{Zhenyu Liao}.} \bibinfo{year}{2022}\natexlab{}.
\newblock \bibinfo{booktitle}{\emph{Random Matrix Methods for Machine Learning}}.
\newblock \bibinfo{publisher}{Cambridge University Press}.
\newblock
\showISBNx{9781009186742}


\bibitem[Dasgupta and Freund(2008)]%
        {dasgupta2008random}
\bibfield{author}{\bibinfo{person}{Sanjoy Dasgupta} {and} \bibinfo{person}{Yoav Freund}.} \bibinfo{year}{2008}\natexlab{}.
\newblock \showarticletitle{Random projection trees and low dimensional manifolds}. In \bibinfo{booktitle}{\emph{Proceedings of the fortieth annual ACM symposium on Theory of computing}}. \bibinfo{pages}{537--546}.
\newblock


\bibitem[Datar et~al\mbox{.}(2004)]%
        {datar2004locality}
\bibfield{author}{\bibinfo{person}{Mayur Datar}, \bibinfo{person}{Nicole Immorlica}, \bibinfo{person}{Piotr Indyk}, {and} \bibinfo{person}{Vahab~S Mirrokni}.} \bibinfo{year}{2004}\natexlab{}.
\newblock \showarticletitle{Locality-sensitive hashing scheme based on p-stable distributions}. In \bibinfo{booktitle}{\emph{Proceedings of the twentieth annual symposium on Computational geometry}}. \bibinfo{pages}{253--262}.
\newblock


\bibitem[Dong et~al\mbox{.}(2011)]%
        {dong2011efficient}
\bibfield{author}{\bibinfo{person}{Wei Dong}, \bibinfo{person}{Charikar Moses}, {and} \bibinfo{person}{Kai Li}.} \bibinfo{year}{2011}\natexlab{}.
\newblock \showarticletitle{Efficient k-nearest neighbor graph construction for generic similarity measures}. In \bibinfo{booktitle}{\emph{Proceedings of the 20th international conference on World wide web}}. \bibinfo{pages}{577--586}.
\newblock


\bibitem[Echihabi et~al\mbox{.}(2022)]%
        {echihabi2022hercules}
\bibfield{author}{\bibinfo{person}{Karima Echihabi}, \bibinfo{person}{Panagiota Fatourou}, \bibinfo{person}{Kostas Zoumpatianos}, \bibinfo{person}{Themis Palpanas}, {and} \bibinfo{person}{Houda Benbrahim}.} \bibinfo{year}{2022}\natexlab{}.
\newblock \showarticletitle{Hercules against data series similarity search}.
\newblock \bibinfo{journal}{\emph{Proceedings of the VLDB Endowment}} \bibinfo{volume}{15}, \bibinfo{number}{10} (\bibinfo{year}{2022}), \bibinfo{pages}{2005--2018}.
\newblock


\bibitem[Echihabi et~al\mbox{.}(2020)]%
        {DBLP:conf/wims/EchihabiZP20}
\bibfield{author}{\bibinfo{person}{Karima Echihabi}, \bibinfo{person}{Kostas Zoumpatianos}, {and} \bibinfo{person}{Themis Palpanas}.} \bibinfo{year}{2020}\natexlab{}.
\newblock \showarticletitle{Scalable Machine Learning on High-Dimensional Vectors: From Data Series to Deep Network Embeddings}. In \bibinfo{booktitle}{\emph{International Conference on Web Intelligence, Mining and Semantics {WIMS}}}. \bibinfo{pages}{1--6}.
\newblock


\bibitem[Echihabi et~al\mbox{.}(2019)]%
        {hydra2}
\bibfield{author}{\bibinfo{person}{Karima Echihabi}, \bibinfo{person}{Kostas Zoumpatianos}, \bibinfo{person}{Themis Palpanas}, {and} \bibinfo{person}{Houda Benbrahim}.} \bibinfo{year}{2019}\natexlab{}.
\newblock \showarticletitle{Return of the Lernaean Hydra: Experimental Evaluation of Data Series Approximate Similarity Search}.
\newblock \bibinfo{journal}{\emph{Proc. {VLDB} Endow.}} \bibinfo{volume}{13}, \bibinfo{number}{3} (\bibinfo{year}{2019}), \bibinfo{pages}{403--420}.
\newblock
\urldef\tempurl%
\url{https://doi.org/10.14778/3368289.3368303}
\showDOI{\tempurl}


\bibitem[Fatourou et~al\mbox{.}(2023)]%
        {fatourou2023fresh}
\bibfield{author}{\bibinfo{person}{Panagiota Fatourou}, \bibinfo{person}{Eleftherios Kosmas}, \bibinfo{person}{Themis Palpanas}, {and} \bibinfo{person}{George Paterakis}.} \bibinfo{year}{2023}\natexlab{}.
\newblock \showarticletitle{FreSh: {A} Lock-Free Data Series Index}. In \bibinfo{booktitle}{\emph{42nd International Symposium on Reliable Distributed Systems, {SRDS}}}. \bibinfo{publisher}{{IEEE}}, \bibinfo{pages}{209--220}.
\newblock
\urldef\tempurl%
\url{https://doi.org/10.1109/SRDS60354.2023.00029}
\showDOI{\tempurl}


\bibitem[Fu et~al\mbox{.}(2021)]%
        {fu2021high}
\bibfield{author}{\bibinfo{person}{Cong Fu}, \bibinfo{person}{Changxu Wang}, {and} \bibinfo{person}{Deng Cai}.} \bibinfo{year}{2021}\natexlab{}.
\newblock \showarticletitle{High dimensional similarity search with satellite system graph: Efficiency, scalability, and unindexed query compatibility}.
\newblock \bibinfo{journal}{\emph{IEEE Transactions on Pattern Analysis and Machine Intelligence}} \bibinfo{volume}{44}, \bibinfo{number}{8} (\bibinfo{year}{2021}), \bibinfo{pages}{4139--4150}.
\newblock


\bibitem[Fu et~al\mbox{.}(2019)]%
        {fu2019fast}
\bibfield{author}{\bibinfo{person}{Cong Fu}, \bibinfo{person}{Chao Xiang}, \bibinfo{person}{Changxu Wang}, {and} \bibinfo{person}{Deng Cai}.} \bibinfo{year}{2019}\natexlab{}.
\newblock \showarticletitle{Fast approximate nearest neighbor search with the navigating spreading-out graph}.
\newblock \bibinfo{journal}{\emph{Proceedings of the VLDB Endowment}} \bibinfo{volume}{12}, \bibinfo{number}{5} (\bibinfo{year}{2019}), \bibinfo{pages}{461--474}.
\newblock


\bibitem[Fukunaga and Narendra(1975)]%
        {fukunaga1975branch}
\bibfield{author}{\bibinfo{person}{Keinosuke Fukunaga} {and} \bibinfo{person}{Patrenahalli~M. Narendra}.} \bibinfo{year}{1975}\natexlab{}.
\newblock \showarticletitle{A branch and bound algorithm for computing k-nearest neighbors}.
\newblock \bibinfo{journal}{\emph{IEEE transactions on computers}} \bibinfo{volume}{100}, \bibinfo{number}{7} (\bibinfo{year}{1975}), \bibinfo{pages}{750--753}.
\newblock


\bibitem[Gan et~al\mbox{.}(2012)]%
        {c2lsh}
\bibfield{author}{\bibinfo{person}{Junhao Gan}, \bibinfo{person}{Jianlin Feng}, \bibinfo{person}{Qiong Fang}, {and} \bibinfo{person}{Wilfred Ng}.} \bibinfo{year}{2012}\natexlab{}.
\newblock \showarticletitle{Locality-sensitive hashing scheme based on dynamic collision counting}. In \bibinfo{booktitle}{\emph{Proceedings of the 2012 ACM SIGMOD international conference on management of data}}. \bibinfo{pages}{541--552}.
\newblock


\bibitem[Gao and Long(2023)]%
        {gao2023high}
\bibfield{author}{\bibinfo{person}{Jianyang Gao} {and} \bibinfo{person}{Cheng Long}.} \bibinfo{year}{2023}\natexlab{}.
\newblock \showarticletitle{High-dimensional approximate nearest neighbor search: with reliable and efficient distance comparison operations}.
\newblock \bibinfo{journal}{\emph{Proceedings of the ACM on Management of Data}} \bibinfo{volume}{1}, \bibinfo{number}{2} (\bibinfo{year}{2023}), \bibinfo{pages}{1--27}.
\newblock


\bibitem[Ge et~al\mbox{.}(2013)]%
        {ge2013optimized}
\bibfield{author}{\bibinfo{person}{Tiezheng Ge}, \bibinfo{person}{Kaiming He}, \bibinfo{person}{Qifa Ke}, {and} \bibinfo{person}{Jian Sun}.} \bibinfo{year}{2013}\natexlab{}.
\newblock \showarticletitle{Optimized product quantization}.
\newblock \bibinfo{journal}{\emph{IEEE transactions on pattern analysis and machine intelligence}} \bibinfo{volume}{36}, \bibinfo{number}{4} (\bibinfo{year}{2013}), \bibinfo{pages}{744--755}.
\newblock


\bibitem[Gionis et~al\mbox{.}(1999)]%
        {gionis1999similarity}
\bibfield{author}{\bibinfo{person}{Aristides Gionis}, \bibinfo{person}{Piotr Indyk}, \bibinfo{person}{Rajeev Motwani}, {et~al\mbox{.}}} \bibinfo{year}{1999}\natexlab{}.
\newblock \showarticletitle{Similarity search in high dimensions via hashing}. In \bibinfo{booktitle}{\emph{Vldb}}, Vol.~\bibinfo{volume}{99}. \bibinfo{pages}{518--529}.
\newblock


\bibitem[Gray(1984)]%
        {gray1984vector}
\bibfield{author}{\bibinfo{person}{Robert Gray}.} \bibinfo{year}{1984}\natexlab{}.
\newblock \showarticletitle{Vector quantization}.
\newblock \bibinfo{journal}{\emph{IEEE Assp Magazine}} \bibinfo{volume}{1}, \bibinfo{number}{2} (\bibinfo{year}{1984}), \bibinfo{pages}{4--29}.
\newblock


\bibitem[Gray and Neuhoff(1998)]%
        {gray1998quantization}
\bibfield{author}{\bibinfo{person}{Robert~M. Gray} {and} \bibinfo{person}{David~L. Neuhoff}.} \bibinfo{year}{1998}\natexlab{}.
\newblock \showarticletitle{Quantization}.
\newblock \bibinfo{journal}{\emph{IEEE transactions on information theory}} \bibinfo{volume}{44}, \bibinfo{number}{6} (\bibinfo{year}{1998}), \bibinfo{pages}{2325--2383}.
\newblock


\bibitem[Hinneburg et~al\mbox{.}(2000)]%
        {hinneburg2000nearest}
\bibfield{author}{\bibinfo{person}{Alexander Hinneburg}, \bibinfo{person}{Charu~C Aggarwal}, {and} \bibinfo{person}{Daniel~A Keim}.} \bibinfo{year}{2000}\natexlab{}.
\newblock \showarticletitle{What is the nearest neighbor in high dimensional spaces?}. In \bibinfo{booktitle}{\emph{26th Internat. Conference on Very Large Databases}}. \bibinfo{pages}{506--515}.
\newblock


\bibitem[Huang et~al\mbox{.}(2015)]%
        {qalsh}
\bibfield{author}{\bibinfo{person}{Qiang Huang}, \bibinfo{person}{Jianlin Feng}, \bibinfo{person}{Yikai Zhang}, \bibinfo{person}{Qiong Fang}, {and} \bibinfo{person}{Wilfred Ng}.} \bibinfo{year}{2015}\natexlab{}.
\newblock \showarticletitle{Query-aware locality-sensitive hashing for approximate nearest neighbor search}.
\newblock \bibinfo{journal}{\emph{Proceedings of the VLDB Endowment}} \bibinfo{volume}{9}, \bibinfo{number}{1} (\bibinfo{year}{2015}), \bibinfo{pages}{1--12}.
\newblock


\bibitem[Jayaram~Subramanya et~al\mbox{.}(2019)]%
        {jayaram2019diskann}
\bibfield{author}{\bibinfo{person}{Suhas Jayaram~Subramanya}, \bibinfo{person}{Fnu Devvrit}, \bibinfo{person}{Harsha~Vardhan Simhadri}, \bibinfo{person}{Ravishankar Krishnawamy}, {and} \bibinfo{person}{Rohan Kadekodi}.} \bibinfo{year}{2019}\natexlab{}.
\newblock \showarticletitle{Diskann: Fast accurate billion-point nearest neighbor search on a single node}.
\newblock \bibinfo{journal}{\emph{Advances in Neural Information Processing Systems}}  \bibinfo{volume}{32} (\bibinfo{year}{2019}).
\newblock


\bibitem[Jegou et~al\mbox{.}(2010)]%
        {jegou2010product}
\bibfield{author}{\bibinfo{person}{Herve Jegou}, \bibinfo{person}{Matthijs Douze}, {and} \bibinfo{person}{Cordelia Schmid}.} \bibinfo{year}{2010}\natexlab{}.
\newblock \showarticletitle{Product quantization for nearest neighbor search}.
\newblock \bibinfo{journal}{\emph{IEEE transactions on pattern analysis and machine intelligence}} \bibinfo{volume}{33}, \bibinfo{number}{1} (\bibinfo{year}{2010}), \bibinfo{pages}{117--128}.
\newblock


\bibitem[J{\'e}gou et~al\mbox{.}(2011)]%
        {jegou2011searching}
\bibfield{author}{\bibinfo{person}{Herv{\'e} J{\'e}gou}, \bibinfo{person}{Romain Tavenard}, \bibinfo{person}{Matthijs Douze}, {and} \bibinfo{person}{Laurent Amsaleg}.} \bibinfo{year}{2011}\natexlab{}.
\newblock \showarticletitle{Searching in one billion vectors: re-rank with source coding}. In \bibinfo{booktitle}{\emph{2011 IEEE International Conference on Acoustics, Speech and Signal Processing (ICASSP)}}. IEEE, \bibinfo{pages}{861--864}.
\newblock


\bibitem[Johnsson et~al\mbox{.}(2014)]%
        {johnsson2014low}
\bibfield{author}{\bibinfo{person}{Kerstin Johnsson}, \bibinfo{person}{Charlotte Soneson}, {and} \bibinfo{person}{Magnus Fontes}.} \bibinfo{year}{2014}\natexlab{}.
\newblock \showarticletitle{Low bias local intrinsic dimension estimation from expected simplex skewness}.
\newblock \bibinfo{journal}{\emph{IEEE transactions on pattern analysis and machine intelligence}} \bibinfo{volume}{37}, \bibinfo{number}{1} (\bibinfo{year}{2014}), \bibinfo{pages}{196--202}.
\newblock


\bibitem[Kalantidis and Avrithis(2014)]%
        {kalantidis2014locally}
\bibfield{author}{\bibinfo{person}{Yannis Kalantidis} {and} \bibinfo{person}{Yannis Avrithis}.} \bibinfo{year}{2014}\natexlab{}.
\newblock \showarticletitle{Locally optimized product quantization for approximate nearest neighbor search}. In \bibinfo{booktitle}{\emph{Proceedings of the IEEE conference on computer vision and pattern recognition}}. \bibinfo{pages}{2321--2328}.
\newblock


\bibitem[Karpukhin et~al\mbox{.}(2020)]%
        {karpukhin2020dense}
\bibfield{author}{\bibinfo{person}{Vladimir Karpukhin}, \bibinfo{person}{Barlas O{\u{g}}uz}, \bibinfo{person}{Sewon Min}, \bibinfo{person}{Patrick Lewis}, \bibinfo{person}{Ledell Wu}, \bibinfo{person}{Sergey Edunov}, \bibinfo{person}{Danqi Chen}, {and} \bibinfo{person}{Wen-tau Yih}.} \bibinfo{year}{2020}\natexlab{}.
\newblock \showarticletitle{Dense passage retrieval for open-domain question answering}.
\newblock \bibinfo{journal}{\emph{arXiv preprint arXiv:2004.04906}} (\bibinfo{year}{2020}).
\newblock


\bibitem[Kondylakis et~al\mbox{.}(2018)]%
        {coconut}
\bibfield{author}{\bibinfo{person}{Haridimos Kondylakis}, \bibinfo{person}{Niv Dayan}, \bibinfo{person}{Kostas Zoumpatianos}, {and} \bibinfo{person}{Themis Palpanas}.} \bibinfo{year}{2018}\natexlab{}.
\newblock \showarticletitle{Coconut: A Scalable Bottom-Up Approach for Building Data Series Indexes}.
\newblock \bibinfo{journal}{\emph{Proceedings of the VLDB Endowment}} \bibinfo{volume}{11}, \bibinfo{number}{6} (\bibinfo{year}{2018}).
\newblock


\bibitem[Kondylakis et~al\mbox{.}(2019)]%
        {coconutjournal}
\bibfield{author}{\bibinfo{person}{Haridimos Kondylakis}, \bibinfo{person}{Niv Dayan}, \bibinfo{person}{Kostas Zoumpatianos}, {and} \bibinfo{person}{Themis Palpanas}.} \bibinfo{year}{2019}\natexlab{}.
\newblock \showarticletitle{Coconut: sortable summarizations for scalable indexes over static and streaming data series}.
\newblock \bibinfo{journal}{\emph{{VLDB} J.}} \bibinfo{volume}{28}, \bibinfo{number}{6} (\bibinfo{year}{2019}), \bibinfo{pages}{847--869}.
\newblock


\bibitem[Lei et~al\mbox{.}(2020)]%
        {lccslsh}
\bibfield{author}{\bibinfo{person}{Yifan Lei}, \bibinfo{person}{Qiang Huang}, \bibinfo{person}{Mohan Kankanhalli}, {and} \bibinfo{person}{Anthony~KH Tung}.} \bibinfo{year}{2020}\natexlab{}.
\newblock \showarticletitle{Locality-sensitive hashing scheme based on longest circular co-substring}. In \bibinfo{booktitle}{\emph{Proceedings of the 2020 ACM SIGMOD International Conference on Management of Data}}. \bibinfo{pages}{2589--2599}.
\newblock


\bibitem[Li et~al\mbox{.}(2020)]%
        {li2020efficient}
\bibfield{author}{\bibinfo{person}{Mingjie Li}, \bibinfo{person}{Ying Zhang}, \bibinfo{person}{Yifang Sun}, \bibinfo{person}{Wei Wang}, \bibinfo{person}{Ivor~W Tsang}, {and} \bibinfo{person}{Xuemin Lin}.} \bibinfo{year}{2020}\natexlab{}.
\newblock \showarticletitle{I/O efficient approximate nearest neighbour search based on learned functions}. In \bibinfo{booktitle}{\emph{2020 IEEE 36th International Conference on Data Engineering (ICDE)}}. IEEE, \bibinfo{pages}{289--300}.
\newblock


\bibitem[Li et~al\mbox{.}(2019)]%
        {li2019approximate}
\bibfield{author}{\bibinfo{person}{Wen Li}, \bibinfo{person}{Ying Zhang}, \bibinfo{person}{Yifang Sun}, \bibinfo{person}{Wei Wang}, \bibinfo{person}{Mingjie Li}, \bibinfo{person}{Wenjie Zhang}, {and} \bibinfo{person}{Xuemin Lin}.} \bibinfo{year}{2019}\natexlab{}.
\newblock \showarticletitle{Approximate nearest neighbor search on high dimensional data—experiments, analyses, and improvement}.
\newblock \bibinfo{journal}{\emph{IEEE Transactions on Knowledge and Data Engineering}} \bibinfo{volume}{32}, \bibinfo{number}{8} (\bibinfo{year}{2019}), \bibinfo{pages}{1475--1488}.
\newblock


\bibitem[Li et~al\mbox{.}(2024)]%
        {li2024disauth}
\bibfield{author}{\bibinfo{person}{Ying Li}, \bibinfo{person}{Jiuqi Wei}, \bibinfo{person}{Ziyu Fei}, \bibinfo{person}{Yufan Fu}, {and} \bibinfo{person}{Xiaodong Lee}.} \bibinfo{year}{2024}\natexlab{}.
\newblock \showarticletitle{DiSAuth: A DNS-based secure authorization framework for protecting data decoupled from applications}.
\newblock \bibinfo{journal}{\emph{Computer Networks}}  \bibinfo{volume}{254} (\bibinfo{year}{2024}), \bibinfo{pages}{110774}.
\newblock


\bibitem[Liao et~al\mbox{.}(2020)]%
        {liao2020rff}
\bibfield{author}{\bibinfo{person}{Zhenyu Liao}, \bibinfo{person}{Romain Couillet}, {and} \bibinfo{person}{Michael~W. Mahoney}.} \bibinfo{year}{2020}\natexlab{}.
\newblock \showarticletitle{A Random Matrix Analysis of Random {{Fourier}} Features: Beyond the {{Gaussian}} Kernel, a Precise Phase Transition, and the Corresponding Double Descent}. In \bibinfo{booktitle}{\emph{Advances in Neural Information Processing Systems}}, Vol.~\bibinfo{volume}{33}. \bibinfo{publisher}{Curran Associates, Inc.}, \bibinfo{pages}{13939---13950}.
\newblock


\bibitem[Linardi and Palpanas(2018)]%
        {ulisse}
\bibfield{author}{\bibinfo{person}{Michele Linardi} {and} \bibinfo{person}{Themis Palpanas}.} \bibinfo{year}{2018}\natexlab{}.
\newblock \showarticletitle{Scalable, variable-length similarity search in data series: The ULISSE approach}.
\newblock \bibinfo{journal}{\emph{Proceedings of the VLDB Endowment}} \bibinfo{volume}{11}, \bibinfo{number}{13} (\bibinfo{year}{2018}), \bibinfo{pages}{2236--2248}.
\newblock


\bibitem[Linardi and Palpanas(2020)]%
        {ulissejournal}
\bibfield{author}{\bibinfo{person}{Michele Linardi} {and} \bibinfo{person}{Themis Palpanas}.} \bibinfo{year}{2020}\natexlab{}.
\newblock \showarticletitle{Scalable data series subsequence matching with {ULISSE}}.
\newblock \bibinfo{journal}{\emph{{VLDB} J.}} \bibinfo{volume}{29}, \bibinfo{number}{6} (\bibinfo{year}{2020}), \bibinfo{pages}{1449--1474}.
\newblock


\bibitem[Liu et~al\mbox{.}(2021)]%
        {eilsh}
\bibfield{author}{\bibinfo{person}{Wanqi Liu}, \bibinfo{person}{Hanchen Wang}, \bibinfo{person}{Ying Zhang}, \bibinfo{person}{Wei Wang}, \bibinfo{person}{Lu Qin}, {and} \bibinfo{person}{Xuemin Lin}.} \bibinfo{year}{2021}\natexlab{}.
\newblock \showarticletitle{{EI-LSH: An early-termination driven I/O efficient incremental c-approximate nearest neighbor search}}.
\newblock \bibinfo{journal}{\emph{The VLDB Journal}}  \bibinfo{volume}{30} (\bibinfo{year}{2021}), \bibinfo{pages}{215--235}.
\newblock


\bibitem[Liu et~al\mbox{.}(2017)]%
        {liu2017pqbf}
\bibfield{author}{\bibinfo{person}{Yingfan Liu}, \bibinfo{person}{Hong Cheng}, {and} \bibinfo{person}{Jiangtao Cui}.} \bibinfo{year}{2017}\natexlab{}.
\newblock \showarticletitle{PQBF: i/o-efficient approximate nearest neighbor search by product quantization}. In \bibinfo{booktitle}{\emph{Proceedings of the 2017 ACM on Conference on Information and Knowledge Management}}. \bibinfo{pages}{667--676}.
\newblock


\bibitem[Liu et~al\mbox{.}(2014)]%
        {sklsh}
\bibfield{author}{\bibinfo{person}{Yingfan Liu}, \bibinfo{person}{Jiangtao Cui}, \bibinfo{person}{Zi Huang}, \bibinfo{person}{Hui Li}, {and} \bibinfo{person}{Heng~Tao Shen}.} \bibinfo{year}{2014}\natexlab{}.
\newblock \showarticletitle{{SK-LSH: an efficient index structure for approximate nearest neighbor search}}.
\newblock \bibinfo{journal}{\emph{Proceedings of the VLDB Endowment}} \bibinfo{volume}{7}, \bibinfo{number}{9} (\bibinfo{year}{2014}), \bibinfo{pages}{745--756}.
\newblock


\bibitem[Louart et~al\mbox{.}(2018)]%
        {louart2018random}
\bibfield{author}{\bibinfo{person}{Cosme Louart}, \bibinfo{person}{Zhenyu Liao}, {and} \bibinfo{person}{Romain Couillet}.} \bibinfo{year}{2018}\natexlab{}.
\newblock \showarticletitle{A Random Matrix Approach to Neural Networks}.
\newblock \bibinfo{journal}{\emph{Annals of Applied Probability}} \bibinfo{volume}{28}, \bibinfo{number}{2} (\bibinfo{year}{2018}), \bibinfo{pages}{1190--1248}.
\newblock
\showISSN{1050-5164}
\urldef\tempurl%
\url{https://doi.org/10.1214/17-aap1328}
\showDOI{\tempurl}


\bibitem[Lu and Kudo(2020)]%
        {r2lsh}
\bibfield{author}{\bibinfo{person}{Kejing Lu} {and} \bibinfo{person}{Mineichi Kudo}.} \bibinfo{year}{2020}\natexlab{}.
\newblock \showarticletitle{{R2LSH: A nearest neighbor search scheme based on two-dimensional projected spaces}}. In \bibinfo{booktitle}{\emph{2020 IEEE 36th International Conference on Data Engineering (ICDE)}}. IEEE, \bibinfo{pages}{1045--1056}.
\newblock


\bibitem[Lu et~al\mbox{.}(2021)]%
        {lu2021hvs}
\bibfield{author}{\bibinfo{person}{Kejing Lu}, \bibinfo{person}{Mineichi Kudo}, \bibinfo{person}{Chuan Xiao}, {and} \bibinfo{person}{Yoshiharu Ishikawa}.} \bibinfo{year}{2021}\natexlab{}.
\newblock \showarticletitle{HVS: hierarchical graph structure based on voronoi diagrams for solving approximate nearest neighbor search}.
\newblock \bibinfo{journal}{\emph{Proceedings of the VLDB Endowment}} \bibinfo{volume}{15}, \bibinfo{number}{2} (\bibinfo{year}{2021}), \bibinfo{pages}{246--258}.
\newblock


\bibitem[Lu et~al\mbox{.}(2020)]%
        {vhp}
\bibfield{author}{\bibinfo{person}{Kejing Lu}, \bibinfo{person}{Hongya Wang}, \bibinfo{person}{Wei Wang}, {and} \bibinfo{person}{Mineichi Kudo}.} \bibinfo{year}{2020}\natexlab{}.
\newblock \showarticletitle{{VHP: approximate nearest neighbor search via virtual hypersphere partitioning}}.
\newblock \bibinfo{journal}{\emph{Proceedings of the VLDB Endowment}} \bibinfo{volume}{13}, \bibinfo{number}{9} (\bibinfo{year}{2020}), \bibinfo{pages}{1443--1455}.
\newblock


\bibitem[Malkov et~al\mbox{.}(2014)]%
        {malkov2014approximate}
\bibfield{author}{\bibinfo{person}{Yury Malkov}, \bibinfo{person}{Alexander Ponomarenko}, \bibinfo{person}{Andrey Logvinov}, {and} \bibinfo{person}{Vladimir Krylov}.} \bibinfo{year}{2014}\natexlab{}.
\newblock \showarticletitle{Approximate nearest neighbor algorithm based on navigable small world graphs}.
\newblock \bibinfo{journal}{\emph{Information Systems}}  \bibinfo{volume}{45} (\bibinfo{year}{2014}), \bibinfo{pages}{61--68}.
\newblock


\bibitem[Malkov and Yashunin(2018)]%
        {malkov2018efficient}
\bibfield{author}{\bibinfo{person}{Yu~A Malkov} {and} \bibinfo{person}{Dmitry~A Yashunin}.} \bibinfo{year}{2018}\natexlab{}.
\newblock \showarticletitle{Efficient and robust approximate nearest neighbor search using hierarchical navigable small world graphs}.
\newblock \bibinfo{journal}{\emph{IEEE transactions on pattern analysis and machine intelligence}} \bibinfo{volume}{42}, \bibinfo{number}{4} (\bibinfo{year}{2018}), \bibinfo{pages}{824--836}.
\newblock


\bibitem[Matsui et~al\mbox{.}(2018)]%
        {matsui2018survey}
\bibfield{author}{\bibinfo{person}{Yusuke Matsui}, \bibinfo{person}{Yusuke Uchida}, \bibinfo{person}{Herv{\'e} J{\'e}gou}, {and} \bibinfo{person}{Shin'ichi Satoh}.} \bibinfo{year}{2018}\natexlab{}.
\newblock \showarticletitle{A survey of product quantization}.
\newblock \bibinfo{journal}{\emph{ITE Transactions on Media Technology and Applications}} \bibinfo{volume}{6}, \bibinfo{number}{1} (\bibinfo{year}{2018}), \bibinfo{pages}{2--10}.
\newblock


\bibitem[Matsui et~al\mbox{.}(2015)]%
        {matsui2015pqtable}
\bibfield{author}{\bibinfo{person}{Yusuke Matsui}, \bibinfo{person}{Toshihiko Yamasaki}, {and} \bibinfo{person}{Kiyoharu Aizawa}.} \bibinfo{year}{2015}\natexlab{}.
\newblock \showarticletitle{Pqtable: Fast exact asymmetric distance neighbor search for product quantization using hash tables}. In \bibinfo{booktitle}{\emph{Proceedings of the IEEE International Conference on Computer Vision}}. \bibinfo{pages}{1940--1948}.
\newblock


\bibitem[Muja and Lowe(2014)]%
        {muja2014scalable}
\bibfield{author}{\bibinfo{person}{Marius Muja} {and} \bibinfo{person}{David~G Lowe}.} \bibinfo{year}{2014}\natexlab{}.
\newblock \showarticletitle{Scalable nearest neighbor algorithms for high dimensional data}.
\newblock \bibinfo{journal}{\emph{IEEE transactions on pattern analysis and machine intelligence}} \bibinfo{volume}{36}, \bibinfo{number}{11} (\bibinfo{year}{2014}), \bibinfo{pages}{2227--2240}.
\newblock


\bibitem[Munoz et~al\mbox{.}(2019)]%
        {munoz2019hierarchical}
\bibfield{author}{\bibinfo{person}{Javier~Vargas Munoz}, \bibinfo{person}{Marcos~A Gon{\c{c}}alves}, \bibinfo{person}{Zanoni Dias}, {and} \bibinfo{person}{Ricardo da~S Torres}.} \bibinfo{year}{2019}\natexlab{}.
\newblock \showarticletitle{Hierarchical clustering-based graphs for large scale approximate nearest neighbor search}.
\newblock \bibinfo{journal}{\emph{Pattern Recognition}}  \bibinfo{volume}{96} (\bibinfo{year}{2019}), \bibinfo{pages}{106970}.
\newblock


\bibitem[Murray et~al\mbox{.}(2023)]%
        {murray2023Randomized}
\bibfield{author}{\bibinfo{person}{Riley Murray}, \bibinfo{person}{James Demmel}, \bibinfo{person}{Michael~W. Mahoney}, \bibinfo{person}{N.~Benjamin Erichson}, \bibinfo{person}{Maksim Melnichenko}, \bibinfo{person}{Osman~Asif Malik}, \bibinfo{person}{Laura Grigori}, \bibinfo{person}{Piotr Luszczek}, \bibinfo{person}{Micha{\l} Derezi{\'n}ski}, \bibinfo{person}{Miles~E. Lopes}, \bibinfo{person}{Tianyu Liang}, \bibinfo{person}{Hengrui Luo}, {and} \bibinfo{person}{Jack Dongarra}.} \bibinfo{year}{2023}\natexlab{}.
\newblock \bibinfo{title}{Randomized {{Numerical Linear Algebra}} : {{A Perspective}} on the {{Field With}} an {{Eye}} to {{Software}}}.
\newblock
\newblock
\urldef\tempurl%
\url{https://doi.org/10.48550/arXiv.2302.11474}
\showDOI{\tempurl}
\showeprint{2302.11474}


\bibitem[Navarro(2002)]%
        {navarro2002searching}
\bibfield{author}{\bibinfo{person}{Gonzalo Navarro}.} \bibinfo{year}{2002}\natexlab{}.
\newblock \showarticletitle{Searching in metric spaces by spatial approximation}.
\newblock \bibinfo{journal}{\emph{The VLDB Journal}}  \bibinfo{volume}{11} (\bibinfo{year}{2002}), \bibinfo{pages}{28--46}.
\newblock


\bibitem[Norouzi and Fleet(2013)]%
        {norouzi2013cartesian}
\bibfield{author}{\bibinfo{person}{Mohammad Norouzi} {and} \bibinfo{person}{David~J Fleet}.} \bibinfo{year}{2013}\natexlab{}.
\newblock \showarticletitle{Cartesian k-means}. In \bibinfo{booktitle}{\emph{Proceedings of the IEEE Conference on computer Vision and Pattern Recognition}}. \bibinfo{pages}{3017--3024}.
\newblock


\bibitem[Palpanas(2015)]%
        {DBLP:journals/sigmod/Palpanas15}
\bibfield{author}{\bibinfo{person}{Themis Palpanas}.} \bibinfo{year}{2015}\natexlab{}.
\newblock \showarticletitle{Data Series Management: The Road to Big Sequence Analytics}.
\newblock \bibinfo{journal}{\emph{{SIGMOD} Record}} (\bibinfo{year}{2015}).
\newblock


\bibitem[Palpanas and Beckmann(2019)]%
        {Palpanas2019}
\bibfield{author}{\bibinfo{person}{Themis Palpanas} {and} \bibinfo{person}{Volker Beckmann}.} \bibinfo{year}{48(3), 2019}\natexlab{}.
\newblock \showarticletitle{Report on the First and Second Interdisciplinary Time Series Analysis Workshop {(ITISA)}}.
\newblock \bibinfo{journal}{\emph{SIGREC}} (\bibinfo{year}{48(3), 2019}).
\newblock


\bibitem[Patella and Ciaccia(2008)]%
        {patella2008many}
\bibfield{author}{\bibinfo{person}{Marco Patella} {and} \bibinfo{person}{Paolo Ciaccia}.} \bibinfo{year}{2008}\natexlab{}.
\newblock \showarticletitle{The many facets of approximate similarity search}. In \bibinfo{booktitle}{\emph{First International Workshop on Similarity Search and Applications (sisap 2008)}}. IEEE, \bibinfo{pages}{10--21}.
\newblock


\bibitem[Patella and Ciaccia(2009)]%
        {patella2009approximate}
\bibfield{author}{\bibinfo{person}{Marco Patella} {and} \bibinfo{person}{Paolo Ciaccia}.} \bibinfo{year}{2009}\natexlab{}.
\newblock \showarticletitle{Approximate similarity search: A multi-faceted problem}.
\newblock \bibinfo{journal}{\emph{Journal of Discrete Algorithms}} \bibinfo{volume}{7}, \bibinfo{number}{1} (\bibinfo{year}{2009}), \bibinfo{pages}{36--48}.
\newblock


\bibitem[Peng et~al\mbox{.}(2018)]%
        {peng2018paris}
\bibfield{author}{\bibinfo{person}{Botao Peng}, \bibinfo{person}{Panagiota Fatourou}, {and} \bibinfo{person}{Themis Palpanas}.} \bibinfo{year}{2018}\natexlab{}.
\newblock \showarticletitle{ParIS: The Next Destination for Fast Data Series Indexing and Query Answering}.
\newblock \bibinfo{journal}{\emph{IEEE BigData}} (\bibinfo{year}{2018}).
\newblock


\bibitem[Peng et~al\mbox{.}(2020a)]%
        {messi}
\bibfield{author}{\bibinfo{person}{Botao Peng}, \bibinfo{person}{Panagiota Fatourou}, {and} \bibinfo{person}{Themis Palpanas}.} \bibinfo{year}{2020}\natexlab{a}.
\newblock \showarticletitle{{Messi: In-memory data series indexing}}. In \bibinfo{booktitle}{\emph{2020 IEEE 36th International Conference on Data Engineering (ICDE)}}. IEEE, \bibinfo{pages}{337--348}.
\newblock


\bibitem[Peng et~al\mbox{.}(2020b)]%
        {paris+}
\bibfield{author}{\bibinfo{person}{Botao Peng}, \bibinfo{person}{Panagiota Fatourou}, {and} \bibinfo{person}{Themis Palpanas}.} \bibinfo{year}{2020}\natexlab{b}.
\newblock \showarticletitle{Paris+: Data series indexing on multi-core architectures}.
\newblock \bibinfo{journal}{\emph{IEEE Transactions on Knowledge and Data Engineering}} \bibinfo{volume}{33}, \bibinfo{number}{5} (\bibinfo{year}{2020}), \bibinfo{pages}{2151--2164}.
\newblock


\bibitem[Peng et~al\mbox{.}(2021a)]%
        {DBLP:journals/vldb/PengFP21}
\bibfield{author}{\bibinfo{person}{Botao Peng}, \bibinfo{person}{Panagiota Fatourou}, {and} \bibinfo{person}{Themis Palpanas}.} \bibinfo{year}{2021}\natexlab{a}.
\newblock \showarticletitle{Fast data series indexing for in-memory data}.
\newblock \bibinfo{journal}{\emph{{VLDB} J.}} \bibinfo{volume}{30}, \bibinfo{number}{6} (\bibinfo{year}{2021}).
\newblock


\bibitem[Peng et~al\mbox{.}(2021b)]%
        {sing}
\bibfield{author}{\bibinfo{person}{Botao Peng}, \bibinfo{person}{Panagiota Fatourou}, {and} \bibinfo{person}{Themis Palpanas}.} \bibinfo{year}{2021}\natexlab{b}.
\newblock \showarticletitle{{SING: Sequence Indexing Using GPUs}}. In \bibinfo{booktitle}{\emph{2021 IEEE 37th International Conference on Data Engineering (ICDE)}}. IEEE, \bibinfo{pages}{1883--1888}.
\newblock


\bibitem[Peng et~al\mbox{.}(2023)]%
        {peng2023efficient}
\bibfield{author}{\bibinfo{person}{Yun Peng}, \bibinfo{person}{Byron Choi}, \bibinfo{person}{Tsz~Nam Chan}, \bibinfo{person}{Jianye Yang}, {and} \bibinfo{person}{Jianliang Xu}.} \bibinfo{year}{2023}\natexlab{}.
\newblock \showarticletitle{Efficient Approximate Nearest Neighbor Search in Multi-dimensional Databases}.
\newblock \bibinfo{journal}{\emph{Proceedings of the ACM on Management of Data}} \bibinfo{volume}{1}, \bibinfo{number}{1} (\bibinfo{year}{2023}), \bibinfo{pages}{1--27}.
\newblock


\bibitem[Schafer et~al\mbox{.}(2007)]%
        {schafer2007collaborative}
\bibfield{author}{\bibinfo{person}{J~Ben Schafer}, \bibinfo{person}{Dan Frankowski}, \bibinfo{person}{Jon Herlocker}, {and} \bibinfo{person}{Shilad Sen}.} \bibinfo{year}{2007}\natexlab{}.
\newblock \showarticletitle{Collaborative filtering recommender systems}.
\newblock In \bibinfo{booktitle}{\emph{The adaptive web: methods and strategies of web personalization}}. \bibinfo{publisher}{Springer}, \bibinfo{pages}{291--324}.
\newblock


\bibitem[Silpa-Anan and Hartley(2008)]%
        {silpa2008optimised}
\bibfield{author}{\bibinfo{person}{Chanop Silpa-Anan} {and} \bibinfo{person}{Richard Hartley}.} \bibinfo{year}{2008}\natexlab{}.
\newblock \showarticletitle{Optimised KD-trees for fast image descriptor matching}. In \bibinfo{booktitle}{\emph{2008 IEEE Conference on Computer Vision and Pattern Recognition}}. IEEE, \bibinfo{pages}{1--8}.
\newblock


\bibitem[Sun et~al\mbox{.}(2014)]%
        {srs}
\bibfield{author}{\bibinfo{person}{Yifang Sun}, \bibinfo{person}{Wei Wang}, \bibinfo{person}{Jianbin Qin}, \bibinfo{person}{Ying Zhang}, {and} \bibinfo{person}{Xuemin Lin}.} \bibinfo{year}{2014}\natexlab{}.
\newblock \showarticletitle{{SRS: solving c-approximate nearest neighbor queries in high dimensional euclidean space with a tiny index}}.
\newblock \bibinfo{journal}{\emph{Proceedings of the VLDB Endowment}} (\bibinfo{year}{2014}).
\newblock


\bibitem[Tagami(2017)]%
        {tagami2017annexml}
\bibfield{author}{\bibinfo{person}{Yukihiro Tagami}.} \bibinfo{year}{2017}\natexlab{}.
\newblock \showarticletitle{Annexml: Approximate nearest neighbor search for extreme multi-label classification}. In \bibinfo{booktitle}{\emph{Proceedings of the 23rd ACM SIGKDD international conference on knowledge discovery and data mining}}. \bibinfo{pages}{455--464}.
\newblock


\bibitem[Tao et~al\mbox{.}(2009)]%
        {lsbforest}
\bibfield{author}{\bibinfo{person}{Yufei Tao}, \bibinfo{person}{Ke Yi}, \bibinfo{person}{Cheng Sheng}, {and} \bibinfo{person}{Panos Kalnis}.} \bibinfo{year}{2009}\natexlab{}.
\newblock \showarticletitle{Quality and efficiency in high dimensional nearest neighbor search}. In \bibinfo{booktitle}{\emph{Proceedings of the 2009 ACM SIGMOD International Conference on Management of data}}. \bibinfo{pages}{563--576}.
\newblock


\bibitem[Tian et~al\mbox{.}(2023a)]%
        {annbulletin}
\bibfield{author}{\bibinfo{person}{Yao Tian}, \bibinfo{person}{Ziyang Yue}, \bibinfo{person}{Ruiyuan Zhang}, \bibinfo{person}{Xi Zhao}, \bibinfo{person}{Bolong Zheng}, {and} \bibinfo{person}{Xiaofang Zhou}.} \bibinfo{year}{2023}\natexlab{a}.
\newblock \showarticletitle{{Approximate Nearest Neighbor Search in High Dimensional Vector Databases: Current Research and Future Directions}}.
\newblock \bibinfo{journal}{\emph{{IEEE Data Engineering Bulletin}}} \bibinfo{volume}{47}, \bibinfo{number}{3} (\bibinfo{year}{2023}).
\newblock


\bibitem[Tian et~al\mbox{.}(2023b)]%
        {dblsh}
\bibfield{author}{\bibinfo{person}{Yao Tian}, \bibinfo{person}{Xi Zhao}, {and} \bibinfo{person}{Xiaofang Zhou}.} \bibinfo{year}{2023}\natexlab{b}.
\newblock \showarticletitle{{DB-LSH 2.0: Locality-Sensitive Hashing With Query-Based Dynamic Bucketing}}.
\newblock \bibinfo{journal}{\emph{IEEE Transactions on Knowledge and Data Engineering}} (\bibinfo{year}{2023}).
\newblock


\bibitem[Wang et~al\mbox{.}(2021)]%
        {wang2021comprehensive}
\bibfield{author}{\bibinfo{person}{Mengzhao Wang}, \bibinfo{person}{Xiaoliang Xu}, \bibinfo{person}{Qiang Yue}, {and} \bibinfo{person}{Yuxiang Wang}.} \bibinfo{year}{2021}\natexlab{}.
\newblock \showarticletitle{A comprehensive survey and experimental comparison of graph-based approximate nearest neighbor search}.
\newblock \bibinfo{journal}{\emph{Proceedings of the VLDB Endowment}} \bibinfo{volume}{14}, \bibinfo{number}{11} (\bibinfo{year}{2021}), \bibinfo{pages}{1964--1978}.
\newblock


\bibitem[Wang et~al\mbox{.}(2025)]%
        {leafi}
\bibfield{author}{\bibinfo{person}{Qitong Wang}, \bibinfo{person}{Ioana Ileana}, {and} \bibinfo{person}{Themis Palpanas}.} \bibinfo{year}{2025}\natexlab{}.
\newblock \showarticletitle{{LeaFi: Data Series Indexes on Steroids with Learned Filters}}.
\newblock \bibinfo{journal}{\emph{Proc. {ACM} Manag. Data}} (\bibinfo{year}{2025}).
\newblock


\bibitem[Wang and Palpanas(2021)]%
        {seanetconf}
\bibfield{author}{\bibinfo{person}{Qitong Wang} {and} \bibinfo{person}{Themis Palpanas}.} \bibinfo{year}{2021}\natexlab{}.
\newblock \showarticletitle{Deep learning embeddings for data series similarity search}. In \bibinfo{booktitle}{\emph{Proceedings of the 27th ACM SIGKDD conference on knowledge discovery \& data mining}}. \bibinfo{pages}{1708--1716}.
\newblock


\bibitem[Wang and Palpanas(2023)]%
        {seanet}
\bibfield{author}{\bibinfo{person}{Qitong Wang} {and} \bibinfo{person}{Themis Palpanas}.} \bibinfo{year}{2023}\natexlab{}.
\newblock \showarticletitle{SEAnet: {A} Deep Learning Architecture for Data Series Similarity Search}.
\newblock \bibinfo{journal}{\emph{{IEEE} Trans. Knowl. Data Eng.}} \bibinfo{volume}{35}, \bibinfo{number}{12} (\bibinfo{year}{2023}), \bibinfo{pages}{12972--12986}.
\newblock


\bibitem[Wang et~al\mbox{.}(2013)]%
        {wang2013data}
\bibfield{author}{\bibinfo{person}{Yang Wang}, \bibinfo{person}{Peng Wang}, \bibinfo{person}{Jian Pei}, \bibinfo{person}{Wei Wang}, {and} \bibinfo{person}{Sheng Huang}.} \bibinfo{year}{2013}\natexlab{}.
\newblock \showarticletitle{A data-adaptive and dynamic segmentation index for whole matching on time series}.
\newblock \bibinfo{journal}{\emph{VLDB}} (\bibinfo{year}{2013}).
\newblock


\bibitem[Wang et~al\mbox{.}(2023a)]%
        {zeyubulletin-sep23}
\bibfield{author}{\bibinfo{person}{Zeyu Wang}, \bibinfo{person}{Peng Wang}, \bibinfo{person}{Themis Palpanas}, {and} \bibinfo{person}{Wei Wang}.} \bibinfo{year}{2023}\natexlab{a}.
\newblock \showarticletitle{Graph- and Tree-based Indexes for High-dimensional Vector Similarity Search: Analyses, Comparisons, and Future Directions}.
\newblock \bibinfo{journal}{\emph{{IEEE} Data Eng. Bull.}} \bibinfo{volume}{47}, \bibinfo{number}{3} (\bibinfo{year}{2023}), \bibinfo{pages}{3--21}.
\newblock


\bibitem[Wang et~al\mbox{.}(2024a)]%
        {wang2024boldsymbol}
\bibfield{author}{\bibinfo{person}{Zeyu Wang}, \bibinfo{person}{Qitong Wang}, \bibinfo{person}{Xiaoxing Cheng}, \bibinfo{person}{Peng Wang}, \bibinfo{person}{Themis Palpanas}, {and} \bibinfo{person}{Wei Wang}.} \bibinfo{year}{2024}\natexlab{a}.
\newblock \showarticletitle{$\boldsymbol{Steiner}$-Hardness: A Query Hardness Measure for Graph-Based ANN Indexes}.
\newblock \bibinfo{journal}{\emph{PVLDB}} (\bibinfo{year}{2024}).
\newblock


\bibitem[Wang et~al\mbox{.}(2023b)]%
        {dumpy}
\bibfield{author}{\bibinfo{person}{Zeyu Wang}, \bibinfo{person}{Qitong Wang}, \bibinfo{person}{Peng Wang}, \bibinfo{person}{Themis Palpanas}, {and} \bibinfo{person}{Wei Wang}.} \bibinfo{year}{2023}\natexlab{b}.
\newblock \showarticletitle{Dumpy: {A} Compact and Adaptive Index for Large Data Series Collections}.
\newblock \bibinfo{journal}{\emph{Proc. {ACM} Manag. Data}} \bibinfo{volume}{1}, \bibinfo{number}{1} (\bibinfo{year}{2023}), \bibinfo{pages}{111:1--111:27}.
\newblock


\bibitem[Wang et~al\mbox{.}(2024b)]%
        {wang2024dumpyos}
\bibfield{author}{\bibinfo{person}{Zeyu Wang}, \bibinfo{person}{Qitong Wang}, \bibinfo{person}{Peng Wang}, \bibinfo{person}{Themis Palpanas}, {and} \bibinfo{person}{Wei Wang}.} \bibinfo{year}{2024}\natexlab{b}.
\newblock \showarticletitle{DumpyOS: A data-adaptive multi-ary index for scalable data series similarity search}.
\newblock \bibinfo{journal}{\emph{The VLDB Journal}} (\bibinfo{year}{2024}), \bibinfo{pages}{1--25}.
\newblock


\bibitem[Weber et~al\mbox{.}(1998)]%
        {weber1998quantitative}
\bibfield{author}{\bibinfo{person}{Roger Weber}, \bibinfo{person}{Hans-J{\"o}rg Schek}, {and} \bibinfo{person}{Stephen Blott}.} \bibinfo{year}{1998}\natexlab{}.
\newblock \showarticletitle{A quantitative analysis and performance study for similarity-search methods in high-dimensional spaces}. In \bibinfo{booktitle}{\emph{VLDB}}, Vol.~\bibinfo{volume}{98}. \bibinfo{pages}{194--205}.
\newblock


\bibitem[Wei(2024)]%
        {sucocode}
\bibfield{author}{\bibinfo{person}{Jiuqi Wei}.} \bibinfo{year}{2024}\natexlab{}.
\newblock \bibinfo{title}{{Subspace Collision: An Efficient and Accurate Framework for High-dimensional Approximate Nearest Neighbor Search}}.
\newblock \bibinfo{howpublished}{\url{https://github.com/WeiJiuQi/SuCo}}.
\newblock


\bibitem[Wei et~al\mbox{.}(2023)]%
        {wei2023data}
\bibfield{author}{\bibinfo{person}{Jiuqi Wei}, \bibinfo{person}{Ying Li}, \bibinfo{person}{Yufan Fu}, \bibinfo{person}{Youyi Zhang}, {and} \bibinfo{person}{Xiaodong Li}.} \bibinfo{year}{2023}\natexlab{}.
\newblock \showarticletitle{Data Interoperating Architecture (DIA): Decoupling Data and Applications to Give Back Your Data Ownership}. In \bibinfo{booktitle}{\emph{2023 IEEE 47th Annual Computers, Software, and Applications Conference (COMPSAC)}}. IEEE, \bibinfo{pages}{438--447}.
\newblock


\bibitem[Wei et~al\mbox{.}(2024)]%
        {detlsh}
\bibfield{author}{\bibinfo{person}{Jiuqi Wei}, \bibinfo{person}{Botao Peng}, \bibinfo{person}{Xiaodong Lee}, {and} \bibinfo{person}{Themis Palpanas}.} \bibinfo{year}{2024}\natexlab{}.
\newblock \showarticletitle{DET-LSH: A Locality-Sensitive Hashing Scheme with Dynamic Encoding Tree for Approximate Nearest Neighbor Search}.
\newblock \bibinfo{journal}{\emph{Proceedings of the VLDB Endowment}} \bibinfo{volume}{17}, \bibinfo{number}{9} (\bibinfo{year}{2024}), \bibinfo{pages}{2241--2254}.
\newblock


\bibitem[Xia et~al\mbox{.}(2013)]%
        {xia2013joint}
\bibfield{author}{\bibinfo{person}{Yan Xia}, \bibinfo{person}{Kaiming He}, \bibinfo{person}{Fang Wen}, {and} \bibinfo{person}{Jian Sun}.} \bibinfo{year}{2013}\natexlab{}.
\newblock \showarticletitle{Joint inverted indexing}. In \bibinfo{booktitle}{\emph{Proceedings of the IEEE International Conference on Computer Vision}}. \bibinfo{pages}{3416--3423}.
\newblock


\bibitem[Yagoubi et~al\mbox{.}(2017)]%
        {dpisax}
\bibfield{author}{\bibinfo{person}{Djamel~Edine Yagoubi}, \bibinfo{person}{Reza Akbarinia}, \bibinfo{person}{Florent Masseglia}, {and} \bibinfo{person}{Themis Palpanas}.} \bibinfo{year}{2017}\natexlab{}.
\newblock \showarticletitle{DPiSAX: Massively Distributed Partitioned iSAX}. In \bibinfo{booktitle}{\emph{{IEEE} International Conference on Data Mining, {ICDM}}}. \bibinfo{pages}{1135--1140}.
\newblock


\bibitem[Yagoubi et~al\mbox{.}(2020)]%
        {dpisaxjournal}
\bibfield{author}{\bibinfo{person}{Djamel~Edine Yagoubi}, \bibinfo{person}{Reza Akbarinia}, \bibinfo{person}{Florent Masseglia}, {and} \bibinfo{person}{Themis Palpanas}.} \bibinfo{year}{2020}\natexlab{}.
\newblock \showarticletitle{Massively Distributed Time Series Indexing and Querying}.
\newblock \bibinfo{journal}{\emph{{IEEE} Trans. Knowl. Data Eng.}} \bibinfo{volume}{32}, \bibinfo{number}{1} (\bibinfo{year}{2020}), \bibinfo{pages}{108--120}.
\newblock


\bibitem[Yianilos(1993)]%
        {yianilos1993data}
\bibfield{author}{\bibinfo{person}{Peter~N Yianilos}.} \bibinfo{year}{1993}\natexlab{}.
\newblock \showarticletitle{Data structures and algorithms for nearest neighbor search in general metric spaces}. In \bibinfo{booktitle}{\emph{Soda}}, Vol.~\bibinfo{volume}{93}. \bibinfo{pages}{311--21}.
\newblock


\bibitem[Zhang et~al\mbox{.}(2014)]%
        {zhang2014composite}
\bibfield{author}{\bibinfo{person}{Ting Zhang}, \bibinfo{person}{Chao Du}, {and} \bibinfo{person}{Jingdong Wang}.} \bibinfo{year}{2014}\natexlab{}.
\newblock \showarticletitle{Composite quantization for approximate nearest neighbor search}. In \bibinfo{booktitle}{\emph{International Conference on Machine Learning}}. PMLR, \bibinfo{pages}{838--846}.
\newblock


\bibitem[Zhao et~al\mbox{.}(2023)]%
        {lshapg}
\bibfield{author}{\bibinfo{person}{Xi Zhao}, \bibinfo{person}{Yao Tian}, \bibinfo{person}{Kai Huang}, \bibinfo{person}{Bolong Zheng}, {and} \bibinfo{person}{Xiaofang Zhou}.} \bibinfo{year}{2023}\natexlab{}.
\newblock \showarticletitle{Towards Efficient Index Construction and Approximate Nearest Neighbor Search in High-Dimensional Spaces}.
\newblock \bibinfo{journal}{\emph{Proceedings of the VLDB Endowment}} \bibinfo{volume}{16}, \bibinfo{number}{8} (\bibinfo{year}{2023}), \bibinfo{pages}{1979--1991}.
\newblock


\bibitem[Zheng et~al\mbox{.}(2020)]%
        {pmlsh}
\bibfield{author}{\bibinfo{person}{Bolong Zheng}, \bibinfo{person}{Zhao Xi}, \bibinfo{person}{Lianggui Weng}, \bibinfo{person}{Nguyen Quoc~Viet Hung}, \bibinfo{person}{Hang Liu}, {and} \bibinfo{person}{Christian~S Jensen}.} \bibinfo{year}{2020}\natexlab{}.
\newblock \showarticletitle{{PM-LSH: A fast and accurate LSH framework for high-dimensional approximate NN search}}.
\newblock \bibinfo{journal}{\emph{Proceedings of the VLDB Endowment}} \bibinfo{volume}{13}, \bibinfo{number}{5} (\bibinfo{year}{2020}), \bibinfo{pages}{643--655}.
\newblock


\bibitem[Zheng et~al\mbox{.}(2016)]%
        {lazylsh}
\bibfield{author}{\bibinfo{person}{Yuxin Zheng}, \bibinfo{person}{Qi Guo}, \bibinfo{person}{Anthony~KH Tung}, {and} \bibinfo{person}{Sai Wu}.} \bibinfo{year}{2016}\natexlab{}.
\newblock \showarticletitle{{Lazylsh: Approximate nearest neighbor search for multiple distance functions with a single index}}. In \bibinfo{booktitle}{\emph{Proceedings of the 2016 International Conference on Management of Data}}. \bibinfo{pages}{2023--2037}.
\newblock


\bibitem[Zoumpatianos et~al\mbox{.}(2014)]%
        {adsconf}
\bibfield{author}{\bibinfo{person}{Kostas Zoumpatianos}, \bibinfo{person}{Stratos Idreos}, {and} \bibinfo{person}{Themis Palpanas}.} \bibinfo{year}{2014}\natexlab{}.
\newblock \showarticletitle{Indexing for interactive exploration of big data series}. In \bibinfo{booktitle}{\emph{International Conference on Management of Data, {SIGMOD}}}.
\newblock


\bibitem[Zoumpatianos et~al\mbox{.}(2016)]%
        {ads}
\bibfield{author}{\bibinfo{person}{Kostas Zoumpatianos}, \bibinfo{person}{Stratos Idreos}, {and} \bibinfo{person}{Themis Palpanas}.} \bibinfo{year}{2016}\natexlab{}.
\newblock \showarticletitle{{ADS: the adaptive data series index}}.
\newblock \bibinfo{journal}{\emph{The VLDB Journal}}  \bibinfo{volume}{25} (\bibinfo{year}{2016}), \bibinfo{pages}{843--866}.
\newblock


\end{thebibliography}


\end{document}